\documentclass[11pt]{article}

\usepackage[square,authoryear]{natbib}
\usepackage{marsden_article}
\usepackage[all]{xy}


\newcommand{\newtext}[1]{#1}


\DeclareGraphicsRule{.tif}{png}{.png}{`convert #1 `dirname #1`/`basename #1 .tif`.png}
\setcounter{footnote}{0}

\def\intprod{\vrule height 0pt depth 0.4pt width 3pt \vrule height 7pt depth 0.4pt
\kern 3pt}

\def\prodint{\kern 2pt \vrule height 7pt depth 0.4pt  \vrule %
height 0pt depth 0.4pt width 3pt %
\kern 2pt}

\begin{document}
\title{Multi-Dirac Structures and Hamilton-Pontryagin Principles for Lagrange-Dirac Field Theories}

\author{Joris Vankerschaver\thanks{Research supported by a Postdoctoral Fellowship of the Research Foundation--Flanders (FWO-Vlaanderen); 
email: joris.vankerschaver@gmail.com}
 \\Department of Mathematics\\
Ghent University\\ B-9000 Ghent, Belgium\\
\and
Hiroaki Yoshimura\thanks{Research partially supported by JSPS Grant-in-Aid
20560229, JST-CREST and Waseda University Grant for SR 2010A-606; email: yoshimura@waseda.jp}
\\ Applied Mechanics and Aerospace Engineering
\\ Waseda University
\\ Okubo, Shinjuku, Tokyo 
\\ 169-8555, Japan\\
 \and 
 Jerrold E. Marsden\thanks{Research partially supported by 
NSF grant DMS-0505711; email: marsden@cds.caltech.edu}
\\ Control and Dynamical Systems
\\ California Institute of Technology 107-81
\\ Pasadena, CA 91125
%
\date{July 28, 2010}
}
\maketitle

\begin{abstract}
\noindent
The purpose of this paper is to define the concept of multi-Dirac structures and to describe their role in the description of classical field theories.  We begin by outlining a variational principle for field theories, referred to as the Hamilton-Pontryagin principle, and we show that the resulting field equations are the Euler-Lagrange equations in implicit form.  Secondly, we introduce multi-Dirac structures as a graded analog of standard Dirac structures, and we show that the graph of a multisymplectic form determines a multi-Dirac structure.  We then discuss the role of multi-Dirac structures in field theory by showing that the implicit field equations obtained from the Hamilton-Pontryagin principle can be described intrinsically using multi-Dirac structures.  Furthermore, we show that any multi-Dirac structure naturally gives rise to a multi-Poisson bracket.  We treat the case of field theories with nonholonomic constraints, showing that the integrability of the constraints is equivalent to the integrability of the underlying multi-Dirac structure.   We finish with a number of illustrative examples, including time-dependent mechanics, nonlinear scalar fields and the electromagnetic field.
\end{abstract}

\tableofcontents

\section{Introduction}
 
To put this paper into context, we begin by giving a historical overview of the use of Dirac structures and the associated variational principles in classical mechanics. 

\paragraph{Dirac Structures and Lagrange-Dirac Systems in Mechanics.}

Originally developed by \cite{CoWe1988, Cour1990} and \cite{Dorfman1993}, 
Dirac structures unify (pre-)symplectic and Poisson structures and find their inspiration in Dirac's theory of constraints, which was established by \cite{Dirac1950} for constrained systems including constraints due to degeneracy of Lagrangians.

It was soon realized that Dirac structures play an important role in mechanics.  Indeed, a wide class of {\it implicit Hamiltonian systems} can be naturally described in terms of Dirac structures.  In this paper, we will refer to these systems as {\it Hamilton-Dirac systems}. In particular, it was shown that interconnected systems, such as LC circuits, and   nonholonomic systems can be effectively formulated in the context of implicit Hamiltonian systems (\cite{Cour1990, Dorfman1993, CoWe1988,VaMa1995}).  On the other hand, it was shown by \cite{YoMa2006a} that Dirac structures induced from distributions on configuration manifolds naturally yield a notion of {\it implicit Lagrangian systems} or {\it Lagrange-Dirac dynamical systems}, which allow for the description of mechanical systems with degenerate Lagrangians and with nontrivial constraint distributions.   Finally, as shown by \cite{LeOh2008}, there also exists a natural \emph{discrete} definition of Dirac structures, which can be used to derive discrete, geometric approximations of the equations of motion for mechanical systems.  From the viewpoint of mechanical systems with symmetry, reduction theory of Dirac dynamical systems was developed by \cite{Va1998, BL2000} and \cite{BR2004}, while reduction theory of Lagrange-Dirac systems was developed by \cite{YoMa2007, YoMa2009} and \cite{CMRY2010}.  In this paper, we will define a graded analogue of Dirac structures, referred to as {\bfi multi-Dirac structures}, and we will show that they can be used to describe the implicit Euler-Lagrange equations of first-order field theories.

\paragraph{The Hamilton-Pontryagin Variational Principle.}

Let $L(q, v)$ be a Lagrangian.   The {\it Hamilton-Pontryagin principle}, originally coined by \cite{Livens1919}, is a variational principle in which both the position coordinates $q$ and the velocity coordinates $v$ are varied independently.  The relation $\dot{q} = v$ is then imposed by means of a Lagrange multiplier $p$, leading to an action functional of the form 
\begin{equation} \label{HPmech}
	S(q, v, p) = \int_{t_0}^{t_1} 
		\Big( 
	L(q(t), v(t)) + \left< p(t), \dot{q}(t) - v(t) \right> \Big) dt.  
\end{equation}
By taking arbitrary variations with respect to $q$, $v$ and $p$, we obtain the Euler-Lagrange equations in implicit form: 
\[
	p_i = \frac{\partial L}{\partial v^i}, \quad 
	\dot{p}_i = \frac{\partial L}{\partial q^i}, 
	\quad \text{and} \quad \dot{q}^i = v^i.
\]
In the context of variational principles, it was shown by \cite{YoMa2006b} that the Hamilton-Pontryagin principle and the associated implicit equations of motion can be described in an intrinsic way by means of Lagrange-Dirac structures.  Similar variational principles as the Hamilton-Pontryagin principle have been used by \cite{BoRaMa2009} for the design of accurate variational integrators.  The Hamilton-Pontryagin principle also appears under a different guise in the mechanics of continuous media, where it is related to the \emph{Hu-Washizu variational principle} (see \cite{Wa1968}).  We will extend the Hamilton-Pontryagin principle to deal with first-order field theories.

\paragraph{Stokes-Dirac Structures and Infinite-Dimensional Hamiltonian Systems.}

While there is as of yet no comprehensive theory of Dirac structures for classical field theories, an important contribution towards this goal was made by  \cite{VaMa2002}, who introduced the concept of {\it Stokes-Dirac structure}, an infinite-dimensional Dirac structure associated with the exterior derivative and Stokes's theorem.  This structure can be used for describing Hamilton's equations for a class of field theories with boundary energy flow.  Recently, it has been clarified by \cite{VaYoMa2010} that the Stokes-Dirac structure can be obtained through symmetry reduction of  a canonical Dirac structure that is induced by the canonical Poisson structure on the cotangent bundle.  In contrast to Stokes-Dirac structures, the multi-Dirac structures in this paper are covariant in the sense that time and space are treated on an equal footing.  Hence, we expect that a multi-Dirac structure induces a Stokes-Dirac structure after choosing a 3+1 decomposition of spacetime; we will return to this issue at the end of the paper.

\paragraph{Multisymplectic Structures.}

Classical field theories can be described using the multi-symplectic formalism, where the jet bundle and its dual of a configuration bundle play the role of respectively the tangent and cotangent bundle in mechanics.  The dual of a jet bundle can be equipped with a canonical multisymplectic form, which is a higher-degree analogue of the canonical symplectic form on a cotangent bundle.  By means of the Legendre transformation, this form can be pulled back to the jet bundle.  These multisymplectic forms provide a natural geometric setting for the Lagrangian and Hamiltonian description of field theories.  As the literature on multisymplectic field theories is very extensive, we can only refer to a subset of it.  For fundamental aspects, see \cite{GolSter73,KiTu1979, BiSnFi1988,CaCrIb1991, Sardanashvily1995, CIdeLe96, GIMM1997, GIMM1999, Brid1997, EcMuRo2000} and also the references therein. 

\paragraph{Nonholonomic Mechanics and  Field Theories.}
Much effort has been dedicated to mechanical systems with nonholonomic constraints (see, for instance, \cite{VF1981,BaSn1993, Bloch2003,CeMaRa2001}). Amongst them,  it was shown by \cite{BKMM1996} that the Lagrange-d'Alembert principle plays an essential role in formulating the equations of motion from the Lagrangian viewpoint of symmetry and reduction.  On the Hamiltonian side, it was shown by \cite{VaMa1994} that constrained Hamiltonian systems can be developed from the viewpoint of \emph{almost-Poisson structures}, which satisfy the Jacobi identity only if the underlying constraints are integrable.  This was later extended to the Lagrangian context by \cite{CaLeMa1999}.  Further, a notion of implicit Hamiltonian systems was developed by \cite{VaMa1995} and \cite{Va1998} in the context of Dirac structures (see also \cite{BL2000}). 
Nonconservative systems with external forces appearing in servomechanisms were also illustrated in the context of the constrained Hamiltonian systems by \cite{Marle1998}. The equivalence of the Lagrangian and Hamiltonian formalisms for nonholonomic mechanical systems was demonstrated by \cite{KM1997,KM1998} together with their intrinsic expressions. 

By analogy with nonholonomic mechanics, nonholonomic classical field theories have been developed; for instance, \cite{BdeLMDS02} proposed a generalized version of the d'Alembert principle by using the so-called Chetaev condition to represent the bundles of admissible reaction forces associated to given nonlinear nonholonomic constraints in the context of the multisymplectic formalism for classical field theories. \cite{VCdeLMD2005} extended this by relaxing the Chetaev-type conditions to allow for constraint forms that need not be determined by the constraint forces themselves.  A physical example of a nonholonomic field theory appears in \cite{Vank2007}.  For more information about nonholonomic field theories, see also \cite{KrupkovaVolny2}.

Just as (almost) Dirac structures and almost-Poisson brackets play an important role in the description of nonholonomic mechanics, so too can nonholonomic field theories be described in terms of a multi-Dirac structure induced by the multisymplectic form and the constraint distribution.   This will be yet another class of multi-Dirac structures and as in the case of mechanics we will show that these multi-Dirac structures are integrable if and only if the underlying constraints are integrable.

\newtext{
As an interesting aside, we will show that the specification of a nonholonomic multi-Dirac structure is not sufficient to completely determine the nonholonomic field equations, and that an extra assumption needs to be introduced.  This assumption is trivially true in the case of mechanical systems with nonholonomic constraints, and illustrates the difference between classical mechanics and field theory with nonholonomic constraints.  An overview of other conceptual differences can be found in \cite{Krupkova05, VCdeLMD2005}.
}

\paragraph{Graded Poisson Structures and Higher-Order Dirac Structures.}

Not only do multi-Dirac structures appear naturally as phase spaces for implicit first-order field theories, they also have a number of mathematical properties that are interesting in their own right.  As a special case, multi-Dirac structures include the standard Dirac structures of \cite{Cour1990} and the higher-order Dirac structures of \cite{Zambon2010}.  We will also show that to each $k$-form $\Omega$, we may associate a multi-Dirac structure $D$, which is (in a sense defined below) the graph of $\Omega$.  The condition that $D$ is integrable is then equivalent to $\mathbf{d} \Omega = 0$.  Furthermore, just as a standard Dirac structure induces a Poisson bracket on a restricted class of functions, we will show that a multi-Dirac structure endows a subset of the space of forms $\Omega^\ast(M)$ with a graded Poisson bracket, where the graded Jacobi identity is satisfied up to exact forms.  Similar brackets were defined in the multisymplectic literature (see section~\ref{sec:multidirac} for an overview) and we expect that multi-Dirac structures will be a useful tool in studying their properties.  We also mention that a similar but different graded structure was introduced by \cite{Bridges05, BrHyLa2010} in their study of first-order field theories.

\paragraph{Layout of the Paper.}

In \S2, we briefly review the geometry of jet bundles as basic mathematical ingredients and especially describe the geometry of the Pontryagin bundle $M=J^{1}Y \oplus Z$ over a bundle $\pi_{XY}: Y \to X$ for {\it Lagrange-Dirac field theories}. In \S3, we develop {\it implicit Lagrangian field theories} in the context of the {\it Hamilton-Pontryagin principle} on $M$, which naturally yield {\it implicit Euler-Lagrangian field equations}. This is a natural extension of the variational formulation of the Hamilton-Pontryagin principle for mechanical systems as in \cite{YoMa2006b}. In \S4, we develop multi-Dirac structures on $M$ by introducing a natural pairing between multivectors and forms and we show that the graph of a multisymplectic form (in some suitable sense) yields a multi-Dirac structure.  

Further, it is shown in \S5 that the implicit Euler-Lagrange equations for field theories can be described using multi-Dirac structures; this leads us to the definition of Lagrange-Dirac systems.  In \S6, we also explore the {\it induced multi-Dirac structure from nonholonomic distributions} for the case of affine constraints and the associated {\it nonholonomic Lagrange-Dirac systems for field theories}. Furthermore, we develop a constrained version of the Hamilton-Pontryagin principle, referred to as the {\it Lagrange-d'Alembert-Pontryagin principle} to formulate {\it implicit Lagrange-d'Alembert field equations}, which are equivalent with the nonholonomic Lagrange-Dirac systems for field theories. In \S7, we demonstrate the present theory by examples of nonlinear Klein-Gordon scalar fields, electromagnetic fields as well as time-dependent mechanical systems with affine constraints. Finally, conclusions are given in \S8 together with some remarks for future works.   

\paragraph{Acknowledgements.}  

\newtext{
We are very grateful to Thomas Bridges, Henrique Bursztyn, Frans Cantrijn, Marco Castrill{\'o}n-L{\'o}pez, Mark Gotay, Juan-Carlos Marrero, David Mart\'{\i}n de Diego, Chris Rogers and Marco Zambon who kindly provided several very useful remarks.  In particular, we thank Marco Zambon for pointing out a flaw in Proposition~\ref{prop: dirac-const} in an earlier version of this paper.
}

\section{The Geometry of Jet Bundles}

\label{sec:geometry}

In this section, we provide a quick overview of geometry of jet bundles for the treatment of Lagrangian field theories.  Most of the material in this section is standard, and can be found in \cite{Saunders89, GIMM1997} and the references therein.

\paragraph{Jet Bundles.}

Let $X$ be an oriented manifold with volume form $\eta$, which in many examples is spacetime, and let $\pi_{XY}: Y \to X$ be a finite-dimensional fiber bundle which we call the {\bfi covariant configuration bundle}. The physical fields are sections of this bundle, which is the covariant analogue of the configuration space in classical mechanics. For future reference, we suppose that the dimension of $X$ is $n + 1$ and that $\pi_{XY}$ is a bundle of rank $N$, so that $\dim Y = n + N + 1$.  Coordinates on $X$ are denoted $x^{\mu},\, \mu=1,2,...,n+1$, and fiber coordinates $Y$ are denoted by $y^{A}, \, A=1,...,N$ such that a section $\phi: X \rightarrow Y$ of $\pi_{XY}$ has coordinate representation $\phi(x)=(x^{\mu},y^{A}(x))=(x^{\mu},y^{A})$. 

The analogue in classical field theory of the tangent bundle in mechanics is the {\bfi first jet bundle} $J^1 Y$, which consists of equivalence classes of local sections of $\pi_{XY}$, where we say that two local sections $\phi_{1}$, $\phi_{2}$ of $Y$ are equivalent at $x \in X$ if their Taylor expansions around $x$ agree to the first order.  In other words, $\phi_1$ and $\phi_2$ are equivalent if $\phi_1(x) = \phi_2(x)$ and $T_x \phi_1 = T_x \phi_2$.  It follows that an equivalence class $[\phi]$ of local sections can be identified with a linear map $\gamma : T_x X \rightarrow T_y Y$ such that $T \pi_{XY} \circ \gamma = \mathrm{Id}_{T_x X}$.  As a consequence, $J^1 Y$ is a fiber bundle over $Y$, where the projection $\pi_{J^1 Y, Y}: J^1 Y \rightarrow Y$ is defined as follows: let $\gamma: T_x X \rightarrow T_y Y$ be an element of $J^1 Y$, then $\pi_{J^1 Y, Y}(\gamma) := y$.  Coordinates on $J^{1}Y$ are denoted $(x^{\mu},y^{A},v^{A}_{\mu})$, where the fiber coordinates $v^{A}_{\mu}$ represent the first-order derivatives of a section.  They are defined by noting that any $\gamma \in J^1 Y$ is locally of the form
\[
	\gamma =d x^\mu \otimes \left( \frac{\partial}{\partial x^\mu} + v^A_\mu \frac{\partial}{\partial y^A} \right).
\]
The first jet bundle $J^1 Y$ is an \emph{affine} bundle over $Y$, with underlying vector bundle the bundle $L(TX, VY)$ of linear maps from the tangent space $TX$ into the vertical bundle $VY$, defined as  
\[
V_{y}Y=\{ v\in T_{y}Y \mid T\pi_{XY}(v)=0 \}, \quad \text{for $y \in Y$}.
\]

Given any section $\phi: X \to Y$ of $\pi_{XY}$, its tangent map $T_{x}\phi$ at $x \in X$ is an element of $J^{1}_yY$, where $y = {\phi(x)}$.   Thus, the map $x \mapsto T_{x}\phi$ defines a section of  $J^{1}Y$, where now $J^1 Y$ is regarded as a bundle over $X$. This section is denoted $j^{1}\phi$ and is called the {\bfi first jet prolongation} of $\phi$. In coordinates, $j^{1}\phi$ is given by 
\[
x^{\mu} \mapsto (x^{\mu},y^{A}(x^{\mu}),\partial_{\nu}y^{A}(x^{\mu})),
\]
where $\partial_{\nu}=\partial/\partial{x^{\nu}}$. A section of the bundle $J^{1}Y \to X$ which is the first jet prolongation of a section $\phi: X \to Y$ is said to be {\bfi holonomic}.

\paragraph{Dual Jet Bundles.} 

Next, we consider the field-theoretic analogue of the cotangent bundle. We define the {\bfi dual jet bundle} $J^{1}Y^{\star}$ to be the vector bundle over $Y$ whose fiber at $y \in Y_{x}$ is the set of affine maps from $J^{1}_{y}Y$ to $\Lambda^{n+1}_{x}X$, where $\Lambda^{n+1} X$ denotes the bundle of $(n+1)$-forms on $X$. Note that since the space of affine maps from an affine space into a vector space forms a vector space, $J^{1}Y^{\star}$ is a vector bundle despite the fact that $J^{1}Y$ is only an affine bundle. A smooth section of $J^{1}Y^{\star}$ is therefore an affine bundle map of $J^{1}Y$ to $\Lambda^{n+1}X$.  Any affine map from $J^{1}_{y}Y$ to $\Lambda^{n+1}_{x}X$ can locally be written as 
\[
v^{A}_{\mu} \mapsto (p+p_{A}^{\mu}v^{A}_{\mu})\,d^{n+1}x,
\]
where $d^{n+1}x := dx^{1}\wedge dx^{2} \wedge \cdots \wedge dx^{n+1}$ is a coordinate representation of the volume form $\eta$, so that coordinates on $J^{1}Y^{\star}$ are given by $(x^\mu, y^A,p_{A}^{\mu}, p)$.

Throughout this paper we will employ another useful description of $J^{1}Y^{\star}$. Consider again the bundle $\Lambda^{n+1}Y$ of $(n+1)$-forms on $Y$ and let $Z \subset \Lambda^{n+1}Y$ be the subbundle whose fiber over $y \in Y$ is given by
\[
Z_{y}=\{z \in \Lambda_{y}^{n+1} Y \mid \mathbf{i}_{v}\mathbf{i}_{w}z=0 \;\text{for all}\; v,w \in V_{y}Y \},
\]
where $\mathbf{i}_{v}$ denotes left interior multiplication by $v$.  In other words, the elements of $Z$ vanish when contracted with two or more vertical vectors.  The bundle $Z$ is canonically isomorphic as a vector bundle over $Y$ to $J^1 Y^\star$; this can be easily understood from the fact that the elements of $Z$ can locally be written as 
\[
z=p\,d^{n+1}x +p_{A}^{\mu}dy^{A} \wedge d^{n}x_{\mu},
\]
where $d^{n}x_{\mu} :=\partial_{\mu}$ {\Large$\lrcorner$} $d^{n+1}x$.

From now on, we will silently identify $J^1 Y^\star$ with $Z$.  The duality pairing between an element $\gamma \in J^1_{y_x} Y$ and $z \in Z_{y_x}$ can then be written as 
\[
	\left< \gamma, z \right> = \gamma^\ast z \in \Lambda_x^{n+1} X.
\]
In coordinates, $\left< \gamma, z \right> = (p_A^\mu v^A_\mu + p ) d^{n+1} x$.

\paragraph{Canonical Multisymplectic Forms.}

Analogous to the canonical symplectic forms on a cotangent bundle, there are canonical forms on $J^{1}Y^{\star}$.  
Let us first define the {\bfi canonical $(n+1)$-form} $\Theta$ on $J^1 Y^\star \cong Z$ by
\begin{equation} \label{canonnform}
\begin{split}
\Theta(z)(u_{1},...,u_{n+1})&=z(T\pi_{Y, J^1Y^\star}( u_{1}),...,T\pi_{Y, J^1Y^\star} (u_{n+1}))\\
&=(\pi_{Y, J^1Y^\star}^{\ast}z)(u_{1},...,u_{n+1}),
\end{split}
\end{equation}
where we have interpreted $z \in Z \cong J^1 Y^\star$ as before as an $(n+1)$-form on $Y$, and $u_{1},...,u_{n+1} \in T_{z} Z$. The {\bfi canonical multisymplectic  $(n+2)$-form} $\Omega$ on $J^1Y^\star$ is now defined as
\[
\Omega = -\mathbf{d}\Theta.
\]

Denoting again $d^{n}x_{\mu} := \partial_{\mu}$ {\Large$\lrcorner$} $d^{n+1}x$, one has the following coordinate expression for $\Theta$:
\[
\Theta=p_{A}^{\mu}dy^{A} \wedge d^{n}x_{\mu}+p\,d^{n+1}x,
\]
while $\Omega$ is locally given by 
\begin{equation} \label{msform}
\Omega=dy^{A} \wedge dp_{A}^{\mu}\wedge d^{n}x_{\mu}-dp \wedge d^{n+1}x.
\end{equation}
It is easy to show (see \cite{CIdeLe99}) that $\Omega$ is non-degenerate in the sense that $\mathbf{i}_{\mathcal{X}}\Omega=0$ implies that $\mathcal{X}=0$.  Moreover, $\Omega$ is trivially closed: $\mathbf{d} \Omega = 0$.  Forms which are both closed and non-degenerate are referred to as {\bfi multisymplectic} forms.  If the degree of $\Omega$ is two, $\Omega$ is just a symplectic form, while if $\Omega$ is of maximal degree, $\Omega$ is a volume form.  More examples of multisymplectic manifolds can be found in \cite{CIdeLe99}.

A form which is non-degenerate but not necessarily closed is also referred to as an {\bfi almost multisymplectic} form.  On the other hand, in the context of field theory, we will often refer to a closed but possibly degenerate form as a {\bfi pre-multisymplectic} form.

\paragraph{Lagrangian Densities and the Covariant Legendre Transformation.} 

A {\bfi Lagrangian density} is a smooth map $\mathcal{L}=L\eta:J^{1}Y \to \Lambda^{n+1}X$.  In local coordinates, we may write 
\[
\mathcal{L(\gamma)}=L(x^{\mu},y^{A},v^{A}_{\mu})d^{n+1}x,
\]
where $L$ is a function on $J^1 Y$ to which we also refer as the Lagrangian.

The corresponding {\bfi covariant Legendre transformation} for a given Lagrangian density $\mathcal{L}:J^{1}Y \to \Lambda^{n+1}X$ is a fiber preserving map $\mathbb{F}\mathcal{L}:J^{1}Y \to J^{1}Y^{\star}$ over $Y$, which is given by the first order vertical Taylor approximation to $\mathcal{L}$:
\[
\left<\mathbb{F}\mathcal{L}(\gamma),\gamma^{\prime}\right>=\mathcal{L}(\gamma) + \frac{d}{d \epsilon}\bigg|_{\epsilon=0} \mathcal{L}(\gamma+ \epsilon (\gamma-\gamma^{\prime})),
\]
where $\gamma, \gamma^{\prime} \in J^{1}Y$. In coordinates, $\mathbb{F}\mathcal{L}(x^\mu, y^A, v^A_\mu) = (x^\mu, y^A, p_A^\mu, p)$, where
\begin{equation}\label{CovLegTrans}
p_{A}^{\mu}=\frac{\partial{L}}{\partial{v^{A}_{\mu}}}, \qquad p=L-\frac{\partial{L}}{\partial{v^{A}_{\mu}}}v^{A}_{\mu}.
\end{equation}

\paragraph{Multivector Fields.}

A useful tool in the description of classical field theory is provided by multivector fields.  The general theory of multivector fields can be found in \cite{Tu1974}, \cite{Marle1997} and the references therein, while applications of multivector calculus to classical field theory can be found in (for instance) \cite{EcMuRo2002} and \cite{FoPaRo2005}.

For the sake of generality, we will introduce multivector fields first on an arbitrary manifold $P$.  Later on, we will specialize to the case where $P$ is a jet bundle, its dual or a combination of both.  We denote by $T^r P$ the $r$-fold tangent bundle, that is, the $r$-fold exterior power $\Lambda^r(TP)$ of $TP$ with itself.  A {\bfi multivector field} of degree $r$ on a manifold $P$ is a section $\mathcal{X}_r$ of the $r$-fold tangent bundle $T^r P$.  We say that $\mathcal{X}_r$ is {\bfi decomposable} when there exist $r$ vector fields $X_1, \ldots, X_r$ on $P$ such that $\mathcal{X}_r = X_1 \wedge \cdots \wedge X_r$.  An {\bfi integral manifold} of an $r$-multivector field $\mathcal{X}_r$ is an embedded submanifold $S \hookrightarrow P$ of dimension $r$ such that $\mathcal{X}_r$ spans $T^r S$ at every point.  Note that the existence of integral manifolds is not automatically guaranteed but depends on certain integrability conditions, which we now describe.

The integral manifolds of a decomposable $r$-multivector field $\mathcal{X}_r  = X_1 \wedge \cdots \wedge X_r$ can be described in an alternative way by noting that $\mathcal{X}_r$ induces a distribution $\Delta_{\mathcal{X}_r}$ of rank $r$, which is locally spanned by the $r$ vector fields $X_1, \ldots, X_r$.  The integral manifolds of $\mathcal{X}_r$ then coincide with the integral manifolds of $\Delta_{\mathcal{X}_r}$, so that in this case a necessary and sufficient condition for the existence of integral manifolds is given by the integrability of $\Delta_{\mathcal{X}_r}$.

Assume now that $P$ is the total space of a fiber bundle $\pi_{XP}: P \rightarrow X$, where $X$ is equipped with a volume form $\eta$ and $\dim X = n + 1$.  As mentioned previously, most of the times $P$ will be a jet bundle, its dual, or a combination of both.  For fiber bundles, more can be said about multivector fields and their integral manifolds. We will often consider $(n + 1)$-multivector fields on the total space $P$ which are decomposable and satisfy the following {\bfi normalization condition}:
\[
	\mathbf{i}_{\mathcal{X}_{n+1}} (\pi_{XP}^\ast \eta) = 1.
\] 
It can be shown that the integral manifolds of such a multivector field $\mathcal{X}_{n+1}$ are given as the image of a section $\psi: X \rightarrow P$.  This is clear from the coordinate representation of $\mathcal{X}_{n+1}$, as we show.  Let $X$ have coordinates $(x^\mu)$, $\mu = 1, \ldots, n+1 $, and choose a system of bundle coordinates $(x^\mu, u^A)$, $A = 1, \ldots, N$ on $P$.  Then, the multivector field $\mathcal{X}_{n + 1}$ can locally be written as 
\[
	\mathcal{X}_{n+1} = \bigwedge_{\mu = 1}^{n + 1}
		\left( \frac{\partial}{\partial x^\mu} + C^A_\mu(x, u) \frac{\partial}{\partial u^A} \right),
\]
where the coefficient functions $C^A_\mu(x, u)$ are local functions on $P$.  Now, we consider a section $\phi : X \rightarrow P$ with local coordinate representation $x^\mu \mapsto (x^\mu, \phi^A(x))$.  The section $\phi$ determines an integral manifold of $\mathcal{X}_{n + 1}$ if it satisfies the following system of PDEs:
\[
	\frac{d \phi^A}{d x^\mu} = C^A_\mu(x^\nu, \phi^b(x)).
\]

\section{The Hamilton-Pontryagin Principle for Field Theories}

In this section, we introduce the {\it Hamilton-Pontryagin action principle for classical field theories}.  Let us first introduce the Pontryagin bundle 
\[
M :=J^{1}Y \times_Y Z,
\]
as the fibered product over $Y$ of the jet bundle $J^1 Y$ and its dual $Z$.  This bundle plays a similar role as the Pontryagin bundle $TQ \oplus T^{\ast}Q$ over a configuration manifold $Q$ for the case of classical mechanics.  Note also that $M$ is a bundle over $Y$ and also over $X$.

\paragraph{The Generalized Energy Density.} 

Let $\mathcal{L}: J^{1}Y \to \Lambda^{n+1}X$ be a Lagrangian density which is possibly degenerate. We define the {\bfi generalized energy density} associated to $\mathcal{L}$ by the map  $\mathcal{E}:M \to \Lambda^{n+1}X$ defined as 
\[
\mathcal{E}(\gamma,z):=E(\gamma,z)\eta=\left< z, \gamma \right>-\mathcal{L}(\gamma)
\]
for $(\gamma,z) \in M$.  In local coordinates $(x^{\mu},y^{A},v^{A}_{\mu},p^{\mu}_{A}, p)$ on $M$, the generalized energy density is represented by  
\begin{equation} \label{generalizedenergy}
\begin{split}
\mathcal{E}&=E(x^{\mu},y^{A},v^{A}_{\mu},p,p^{\mu}_{A})d^{n+1}x\\
&=\left(p+p_{A}^{\mu}v^{A}_{\mu}-L(x^{\mu},y^{A},v^{A}_{\mu})\right)d^{n+1}x,
\end{split}
\end{equation}
where $E=p+p_{A}^{\mu}v^{A}_{\mu}-L(x^{\mu},y^{A},v^{A}_{\mu})$ is called the {\bfi generalized energy} on $M$.

\paragraph{Pre-Multisymplectic Forms on $M$.} 

Recall the definition of the canonical forms $\Theta$ and $\Omega:=-\mathbf{d}\Theta$ on the dual jet bundle $Z \cong J^1 Y^\star$, and let $\pi_{Z M}: M \rightarrow Z$ be the projection onto the second factor.  The forms $\Theta$ and $\Omega$ can be pulled back along $\pi_{ZM}$ to yield corresponding forms on $M$:
\[
	\Theta_M := \pi_{ZM}^\ast \Theta \quad \text{and} \quad 
	\Omega_M := \pi_{ZM}^\ast \Omega.
\]
Note that $\Omega_M$ cannot be multisymplectic since it has a non-trivial kernel: for all $v \in T J^1 Y$ we have that $\mathbf{i}_v \Omega_M = 0$.  As a result, we will refer to $\Omega_M$ as the {\bfi canonical pre-multisymplectic $(n + 2)$-form} on $M$.  Whenever there is no possibility of confusion, we will omit the subscript `$M$', and denote both canonical forms on $M$ respectively by $\Theta$ and $\Omega$.   Finally, for any Lagrangian density $\mathcal{L}$ with associated energy density $\mathcal{E}$, we introduce another pre-multisymplectic $(n + 2)$-form on $M$ by
\[
\Omega_{\mathcal{E}}=\Omega_{M}+\mathbf{d}\mathcal{E}.
\]

\paragraph{The Hamilton-Pontryagin Principle.}

Using the pre-multisymplectic forms $\Omega_M$ and $\Omega_{\mathcal{E}}$, we establish a Hamilton-Pontryagin variational principle for classical field theories which is similar to the expression \eqref{HPmech} for mechanical systems outlined in the introduction.

\begin{definition}\label{def:HPActionFunct}
Consider a Lagrangian density $\mathcal{L}$ with associated energy density $\mathcal{E}$.  The Hamilton-Pontryagin action functional is defined as 
\begin{equation}\label{HPActionFunc}
	S(\psi) = \int_{X}  \psi^{\ast}(\Theta_{M} -\mathcal{E}),
\end{equation}
where $\psi$ is a section of $\pi_{XM}: M \to X$.  In other words, $\psi$ can be written as $\psi=(\gamma, z)$ where $\gamma$ is a section of $\pi_{X,J^{1}Y}:J^{1}Y \to X$ and $z$ is a section of $\pi_{XZ}: Z \to X$.
\end{definition}

We have defined the Hamilton-Pontryagin principle in terms of the canonical form $\Theta_M$.  At first sight, this looks very different from the corresponding principle \eqref{HPmech} for mechanics.  However, by using the definition \eqref{canonnform}, we can rewrite the Hamilton-Pontryagin action functional as 
\[
	S(\psi) = \int_X \left( \left<z, j^1 \phi \right> 
		- \left<z, \gamma \right> + \mathcal{L}(\gamma) 
			\right),
\]
where we have written the section $\psi$ as $\psi = (\gamma, z)$.

A {\bfi vertical variation} of a section $\psi = (\gamma, z)$ of $\pi_{XM}: M \to X$ is a one-parameter family of diffeomorphisms $\eta_\lambda : M \rightarrow M$ such that $\eta_0$ is the identity, $\eta_\lambda$ is the identity outside of a compact subset $U$ of $M$, and $\eta_\lambda$ preserves the fibration $\pi_{XM} : M \rightarrow X$, or in other words $\pi_{XM} \circ \eta_\lambda = \pi_{XM}$.  As a result, the composition $\eta_\lambda \circ \psi$ is a one-parameter family of sections of $M$.
 
On the infinitesimal level, an infinitesimal variation of $\psi = (\gamma, z)$ is a $\pi_{XM}$-vertical vector field $\mathcal{V}_{M}: \mathrm{Im}\, \psi \to VM$, defined on the image of $\psi$ in $M$, and where $VM$ is the vertical subbundle of $TM$ defined by
\[
VM=\left\{ w \in TM \mid T\pi_{XM} (w) =0\right\}.
\]
Let us denote by $\mathfrak{X}^{\mathrm{V}}(M)$ the module of the vertical vector fields on $M$.

One says that $\psi=(\gamma, z)$ is a {\bfi critical point} of the action \eqref{HPActionFunc} if  $\delta S(\psi) = 0$ for all variations of $\psi$, where 
\[
\begin{split}
\delta S(\psi)&=\frac{d}{d\lambda}\bigg|_{\lambda=0}S(\psi_{\lambda})\\
&=\frac{d}{d\lambda}\bigg|_{\lambda=0} \int_{U} \psi_{\lambda}^{\ast} ( \Theta_{M} -\mathcal{E}),
\end{split}
\]
where $U$ is an open subset of $X$ with compact closure.

\begin{proposition}
A section $\psi = (\gamma, z)$ of $M$ is a critical point of the Hamilton-Pontryagin action functional if $\psi$ satisfies the 
{\bfi implicit Euler-Lagrange equations}
\begin{equation}\label{ELFieldEqn-M}
\begin{split}
\psi^{\ast}(\mathcal{V}_{M} \!\text{\Large$\lrcorner$} \;\Omega_{\mathcal{E}})=0,  \quad \mbox{for any \;$\mathcal{V}_{M} \in \mathfrak{X}^{\mathrm{V}}(M)$}.
\end{split}
\end{equation}
\end{proposition}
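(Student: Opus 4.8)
The plan is to derive an explicit first-variation formula for $S$ and read the stated condition off it. First I would fix a vertical variation of $\psi = (\gamma, z)$, that is, a flow $\eta_\lambda$ generated by a compactly supported $\pi_{XM}$-vertical vector field $\mathcal{V}_M$, so that $\psi_\lambda = \eta_\lambda \circ \psi$ and $\psi_\lambda^\ast(\Theta_M - \mathcal{E}) = \psi^\ast \eta_\lambda^\ast(\Theta_M - \mathcal{E})$. Differentiating at $\lambda = 0$ and commuting $\tfrac{d}{d\lambda}$ with the fixed pullback $\psi^\ast$ gives
\[
\delta S(\psi) = \int_U \psi^\ast \big( \mathcal{L}_{\mathcal{V}_M}(\Theta_M - \mathcal{E}) \big),
\]
where $\mathcal{L}_{\mathcal{V}_M}$ denotes the Lie derivative along $\mathcal{V}_M$ (not to be confused with the Lagrangian density). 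I would then invoke Cartan's magic formula, $\mathcal{L}_{\mathcal{V}_M}\alpha = \mathbf{i}_{\mathcal{V}_M}\mathbf{d}\alpha + \mathbf{d}\,\mathbf{i}_{\mathcal{V}_M}\alpha$, applied to $\alpha = \Theta_M - \mathcal{E}$.

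The key algebraic simplification is to recognize the differential term. Since $\Omega_M = -\mathbf{d}\Theta_M$ and $\Omega_{\mathcal{E}} = \Omega_M + \mathbf{d}\mathcal{E}$, one has $\mathbf{d}(\Theta_M - \mathcal{E}) = -\Omega_M - \mathbf{d}\mathcal{E} = -\Omega_{\mathcal{E}}$, so Cartan's formula yields
\[
\mathcal{L}_{\mathcal{V}_M}(\Theta_M - \mathcal{E}) = -\,\mathbf{i}_{\mathcal{V}_M}\Omega_{\mathcal{E}} + \mathbf{d}\,\mathbf{i}_{\mathcal{V}_M}(\Theta_M - \mathcal{E}).
\]
Pulling back along $\psi$ and using that $\psi^\ast$ commutes with $\mathbf{d}$, the second term becomes the exact form $\mathbf{d}\big(\psi^\ast \mathbf{i}_{\mathcal{V}_M}(\Theta_M - \mathcal{E})\big)$, and by Stokes' theorem its integral over $U$ equals a boundary integral over $\partial U$. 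Because $\mathcal{V}_M$ has compact support contained in $U$, this boundary term vanishes, leaving the first-variation formula
\[
\delta S(\psi) = -\int_U \psi^\ast\big(\mathbf{i}_{\mathcal{V}_M}\Omega_{\mathcal{E}}\big) = -\int_U \psi^\ast\big(\mathcal{V}_M \lrcorner \Omega_{\mathcal{E}}\big).
\]

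From here the stated sufficiency is immediate: if $\psi$ satisfies $\psi^\ast(\mathcal{V}_M \lrcorner \Omega_{\mathcal{E}}) = 0$ for every $\mathcal{V}_M \in \mathfrak{X}^{\mathrm{V}}(M)$, then the integrand vanishes identically and hence $\delta S(\psi) = 0$ for all variations, so $\psi$ is a critical point. The converse, which I would also record, follows from the fundamental lemma of the calculus of variations: $\psi^\ast(\mathcal{V}_M \lrcorner \Omega_{\mathcal{E}})$ is a top-degree $(n+1)$-form on $X$ whose integral vanishes for all compactly supported $\mathcal{V}_M$, which forces the form itself to vanish.

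I anticipate two points requiring care rather than a single hard obstacle. The first is justifying the vanishing of the boundary term: $\mathcal{V}_M$ is only prescribed along $\mathrm{Im}\,\psi$, so I must extend it to a genuine compactly supported vertical vector field on $M$ in order to define the flow $\eta_\lambda$, and then verify that $\psi^\ast(\mathcal{V}_M \lrcorner \Omega_{\mathcal{E}})$ depends only on the values of $\mathcal{V}_M$ along $\mathrm{Im}\,\psi$, so that the conclusion is independent of the chosen extension. The second is the bookkeeping with degrees and signs — confirming that $\mathbf{i}_{\mathcal{V}_M}\Omega_{\mathcal{E}}$ is an $(n+1)$-form, so that its pullback is a top form on $X$ that may be integrated — together with checking, in the coordinates $(x^\mu, y^A, v^A_\mu, p^\mu_A, p)$, that the intrinsic equation $\psi^\ast(\mathcal{V}_M \lrcorner \Omega_{\mathcal{E}}) = 0$ indeed reproduces the implicit Euler-Lagrange equations $p^\mu_A = \partial L/\partial v^A_\mu$, $\partial_\mu p^\mu_A = \partial L/\partial y^A$, and $v^A_\mu = \partial_\mu y^A$.
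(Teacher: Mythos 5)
Your proposal is correct and follows essentially the same route as the paper's own proof: differentiate under the integral to get $\int_U \psi^\ast \pounds_{\mathcal{V}_M}(\Theta_M - \mathcal{E})$, apply Cartan's formula with $\mathbf{d}(\Theta_M - \mathcal{E}) = -\Omega_{\mathcal{E}}$, and discard the exact term via Stokes' theorem and compact support, yielding $\delta S(\psi) = -\int_U \psi^\ast(\mathcal{V}_M \lrcorner\, \Omega_{\mathcal{E}})$. Your additional remarks on the converse (via the fundamental lemma) and on extending $\mathcal{V}_M$ off $\mathrm{Im}\,\psi$ make explicit what the paper compresses into ``a standard argument,'' but they do not change the method.
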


\begin{proof}
Recall that a variation of $\psi$ is given by $\psi_{\lambda}=\eta_{\lambda} \circ \psi$ where $\eta_{\lambda}: M \to M$ is the flow of a vertical vector field $\mathcal{V}_{M}$. Then, it follows that
\begin{equation*}
\begin{split}
\frac{d}{d\lambda}\bigg|_{\lambda=0}S(\psi_{\lambda})&=\frac{d}{d\lambda}\bigg|_{\lambda=0}\int_{U} \psi^{\ast}_{\lambda} (\Theta_{M} - \mathcal{E})\\
&=\int_{U} \psi^{\ast} \pounds_{{\mathcal{V}}_{M}} (\Theta_{M} -\mathcal{E})\\
&=-\int_{U} \psi^{\ast} (\mathcal{V}_{M} \! \text{\Large$\lrcorner$} \; (\Omega_{M} + \mathbf{d} \mathcal{E})) 
+ \int_U  \mathbf{d}( \psi^\ast (\mathcal{V}_M\! \text{\Large$\lrcorner$} \;
(\Theta_M - \mathcal{E})))
\\
&=-\int_{U} \psi^{\ast} (\mathcal{V}_{M} \! \text{\Large$\lrcorner$} \;\Omega_{\mathcal{E}})
+ \int_{\partial U}  \psi^\ast (\mathcal{V}_M\! \text{\Large$\lrcorner$} \;
(\Theta_M - \mathcal{E}))
\\
&=-\int_{U} \psi^{\ast} (\mathcal{V}_{M} \! \text{\Large$\lrcorner$} \;\Omega_{\mathcal{E}})
\end{split}
\end{equation*}
for all $\mathcal{V}_{M}$,
where we utilized Stokes' theorem and the fact that the $\pi_{XM}$-vertical vector field $\mathcal{V}_{M}: M \to VM$  is compactly supported in $U \subset X$.  A standard argument then shows that $\psi$ is a critical point of $S$ if and only if \eqref{ELFieldEqn-M} holds.
\end{proof}

\paragraph{Coordinate Expressions.} 
Employing local coordinates $(x^{\mu},y^{A},v^{A}_{\mu},p,p_{A}^{\mu})$ on $M$, the action functional is denoted by
\begin{equation} \label{coordhpprinciple}
\begin{split}
S(\gamma)&=\int_{U}  \psi^{\ast}(\Theta_{M} -\mathcal{E})\\
&=\int_{U} \left(p+p_{A}^{\mu} \frac{\partial y^{A}}{\partial x^{\mu}} \right)d^{n+1}x- \left\{\left(p+p_{A}^{\mu}v^{A}_{\mu})d^{n+1}x-L(x^{\mu},y^{A},v^{A}_{\mu}\right)d^{n+1}x \right\}\\
&=\int_{U} \left\{ p_{A}^{\mu} \left( \frac{\partial y^{A}}{\partial x^{\mu}}- v^{A}_{\mu} \right)+ L(x^{\mu},y^{A},v^{A}_{\mu})\right\} d^{n+1}x.
\end{split}
\end{equation}

\begin{proposition}
The Hamilton-Pontryagin principle induces the {\bfi local implicit Euler-Lagrange equations} for equation \eqref{ELFieldEqn-M}, which are given in coordinates by
\begin{equation}\label{local-ELFieldEqn}
\frac{\partial y^{A}}{\partial x^{\mu}}=v^{A}_{\mu}, \quad \frac{\partial p_{A}^{\mu}}{\partial x^{\mu}}= \frac{\partial L}{\partial y^{A}}, \quad p_{A}^{\mu}= \frac{\partial L}{\partial v^{A}_{\mu}}.
\end{equation}
\end{proposition}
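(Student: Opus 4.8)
The plan is to establish the coordinate equations \eqref{local-ELFieldEqn} by a direct computation, starting from the intrinsic condition \eqref{ELFieldEqn-M} and working in the bundle coordinates $(x^\mu, y^A, v^A_\mu, p, p^\mu_A)$ on $M$. First I would write out $\Omega_{\mathcal{E}} = \Omega_M + \mathbf{d}\mathcal{E}$ explicitly. Pulling back the coordinate form \eqref{msform} of $\Omega$ along $\pi_{ZM}$ and differentiating the energy density \eqref{generalizedenergy}, the crucial observation is that the term $-\mathbf{d}p \wedge d^{n+1}x$ coming from $\Omega_M$ cancels exactly against the $\mathbf{d}p \wedge d^{n+1}x$ produced by $\mathbf{d}\mathcal{E}$, while the pieces of $\mathbf{d}L$ proportional to $dx^\mu$ die after wedging with $d^{n+1}x$. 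This leaves
\begin{equation*}
\Omega_{\mathcal{E}} = \mathbf{d}y^A \wedge \mathbf{d}p^\mu_A \wedge d^n x_\mu + \Big( v^A_\mu\, \mathbf{d}p^\mu_A + \big(p^\mu_A - \frac{\partial L}{\partial v^A_\mu}\big)\mathbf{d}v^A_\mu - \frac{\partial L}{\partial y^A}\,\mathbf{d}y^A \Big) \wedge d^{n+1}x.
\end{equation*}

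Next I would take an arbitrary $\pi_{XM}$-vertical vector field, which necessarily has no $\partial/\partial x^\mu$ component,
\[
\mathcal{V}_M = V^A \frac{\partial}{\partial y^A} + V^A_\mu \frac{\partial}{\partial v^A_\mu} + W \frac{\partial}{\partial p} + W^\mu_A \frac{\partial}{\partial p^\mu_A},
\]
and compute $\mathcal{V}_M \lrcorner\, \Omega_{\mathcal{E}}$. Verticality is exactly what makes the contraction manageable: since $\mathbf{i}_{\mathcal{V}_M} d^n x_\mu = 0$ and $\mathbf{i}_{\mathcal{V}_M} d^{n+1}x = 0$, the only surviving contractions are those that hit $\mathbf{d}y^A$, $\mathbf{d}v^A_\mu$ and $\mathbf{d}p^\mu_A$. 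I would then pull back the resulting $(n+1)$-form along the section $\psi = (\gamma, z)$, using the identity $dx^\nu \wedge d^n x_\mu = \delta^\nu_\mu\, d^{n+1}x$ together with $\psi^\ast(\mathbf{d}y^A) = \frac{\partial y^A}{\partial x^\nu}\,dx^\nu$ and $\psi^\ast(\mathbf{d}p^\mu_A) = \frac{\partial p^\mu_A}{\partial x^\nu}\,dx^\nu$. Every surviving term collapses to a multiple of $d^{n+1}x$, yielding
\begin{equation*}
\psi^\ast(\mathcal{V}_M \lrcorner\, \Omega_{\mathcal{E}}) = \Big[ \big(\frac{\partial p^\mu_A}{\partial x^\mu} - \frac{\partial L}{\partial y^A}\big) V^A + \big(p^\mu_A - \frac{\partial L}{\partial v^A_\mu}\big) V^A_\mu + \big(v^A_\mu - \frac{\partial y^A}{\partial x^\mu}\big) W^\mu_A \Big] d^{n+1}x.
\end{equation*}

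To finish, since $\mathcal{V}_M$ ranges over all vertical vector fields, the component functions $V^A$, $V^A_\mu$ and $W^\mu_A$ are arbitrary and mutually independent along $\mathrm{Im}\,\psi$, so the vanishing \eqref{ELFieldEqn-M} forces each bracketed coefficient to be zero, which is precisely the system \eqref{local-ELFieldEqn}. I expect the main obstacle to be purely one of bookkeeping: correctly tracking the signs and the summed indices $A,\mu$ in the interior product of the triple wedge $\mathbf{d}y^A \wedge \mathbf{d}p^\mu_A \wedge d^n x_\mu$, which is where an error is easiest to introduce. One conceptual point worth flagging in the write-up is that the coefficient $W$ of $\partial/\partial p$ drops out of the final expression entirely; correspondingly the principle imposes no independent equation on the multiplier $p$, consistent with \eqref{local-ELFieldEqn} listing only the three sets of relations and with $p$ being determined only implicitly through the generalized energy.
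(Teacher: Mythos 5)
Your computation is correct, but it follows a genuinely different route from the paper's own proof. The paper works directly with the coordinate expression \eqref{coordhpprinciple} of the action functional: it expands $\delta S$ with respect to independent variations $\delta y^A$, $\delta v^A_\mu$, $\delta p^\mu_A$, integrates by parts the term $p^\mu_A\,\delta\bigl(\partial y^A/\partial x^\mu\bigr)$, discards the boundary term using $\delta y^A\big|_{\partial U}=0$, and invokes the fundamental lemma of the calculus of variations. You instead take the intrinsic criterion \eqref{ELFieldEqn-M} (already established in the preceding proposition to be equivalent to criticality) as the starting point, compute $\Omega_{\mathcal{E}}=\Omega_M+\mathbf{d}\mathcal{E}$ in coordinates, contract with an arbitrary vertical field and pull back along $\psi$; your intermediate expressions check out, including the cancellation of the $\mathbf{d}p\wedge d^{n+1}x$ terms, the identity $dx^\nu\wedge d^n x_\mu=\delta^\nu_\mu\,d^{n+1}x$, and the final bracketed coefficients. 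The trade-off is instructive: the paper's argument is self-contained at the level of the action and exhibits explicitly how integration by parts produces the divergence $\partial p^\mu_A/\partial x^\mu$, whereas in your version the integration by parts has effectively already been performed once and for all, intrinsically, via Cartan's formula and Stokes' theorem in the proof of the first proposition, so no further boundary manipulation is needed. Your route also has the merit of literally unpacking \eqref{ELFieldEqn-M} into \eqref{local-ELFieldEqn} --- which is what the proposition's wording asserts --- and of making two structural facts transparent that the paper's computation leaves implicit: the $dp$-terms of $\Omega_M$ and $\mathbf{d}\mathcal{E}$ cancel identically, and consequently the coefficient $W$ of $\partial/\partial p$ never enters, so no equation is imposed on $p$ beyond what the generalized energy constraint later supplies. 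Your closing appeal to the arbitrariness and mutual independence of $V^A$, $V^A_\mu$, $W^\mu_A$ along $\mathrm{Im}\,\psi$ is the same pointwise-localization argument the paper uses for its variations, so nothing further is needed there.
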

\begin{proof}
By direct computations, the stationarity condition for the action $S(\psi)$ is given by 
\begin{equation*}
\begin{split}
\delta S(\psi)&=
\int_{U} \left\{ \delta p_{A}^{\mu} \left( \frac{\partial y^{A}}{\partial x^{\mu}}- v^{A}_{\mu} \right)+ 
 \left( -p_{A}^{\mu}+ \frac{\partial L}{\partial v^{A}_{\mu}} \right) \delta v^{A}_{\mu}+ 
p_{A}^{\mu}  \delta \left( \frac{\partial y^{A}}{\partial x^{\mu}}\right)+
 \frac{\partial L}{\partial y^{A}} \delta y^{A} \right\} d^{n+1}x\\
 &=\int_{U} \left\{ \delta p_{A}^{\mu} \left( \frac{\partial y^{A}}{\partial x^{\mu}}- v^{A}_{\mu} \right)+ 
 \left( -p_{A}^{\mu}+ \frac{\partial L}{\partial v^{A}_{\mu}} \right) \delta v^{A}_{\mu}+ 
 \left( -\frac{\partial p_{A}^{\mu}}{\partial x^{\mu}}+ \frac{\partial L}{\partial y^{A}}\right) \delta y^{A}
 \right\} d^{n+1}x\\
 &  + \int_{U} \frac{\partial }{\partial x^{\mu}} \left( p_{A}^{\mu} \delta y^{A}\right)d^{n+1}x\\
  &=\int_{U} \left\{ \delta p_{A}^{\mu} \left( \frac{\partial y^{A}}{\partial x^{\mu}}- v^{A}_{\mu} \right)+ 
 \left( -p_{A}^{\mu}+ \frac{\partial L}{\partial v^{A}_{\mu}} \right) \delta v^{A}_{\mu}+ 
 \left( -\frac{\partial p_{A}^{\mu}}{\partial x^{\mu}}+ \frac{\partial L}{\partial y^{A}}\right) \delta y^{A}
 \right\} d^{n+1}x\\
 &  + \int_{\partial U} \left( p_{A}^{\mu} \delta y^{A}\right)d^{n}x \\
 &=0.
\end{split}
\end{equation*}
By definition, the variation of $y^{A}$ vanishes at the boundary of $U$, namely, $\delta y^{A}\big|_{\partial U}=0$, and so it follows that
\begin{equation*}
\begin{split}
\int_{U} \left\{ \delta p_{A}^{\mu} \left( \frac{\partial y^{A}}{\partial x^{\mu}}- v^{A}_{\mu} \right)+ 
 \left( -p_{A}^{\mu}+ \frac{\partial L}{\partial v^{A}_{\mu}} \right) \delta v^{A}_{\mu}+ 
 \left( -\frac{\partial p_{A}^{\mu}}{\partial x^{\mu}}+ \frac{\partial L}{\partial y^{A}}\right) \delta y^{A}
 \right\} d^{n+1}x=0,
\end{split}
\end{equation*}
for all variations $\delta y^{A}$, $\delta v^{A}_{\mu}$ and $\delta p^{\mu}_{A}$. Thus we obtain equation \eqref{local-ELFieldEqn}.
\end{proof}

\paragraph{Generalized Energy Constraint.}
In addition to the implicit Euler-Lagrange equations given in equation \eqref{local-ELFieldEqn},
by imposing the {\it generalized energy constraint} 
\[
E=p+p_{A}^{\mu}v^{A}_{\mu}-L(x^{\mu},y^{A},v^{A}_{\mu})=0,
\]
we can naturally recover the covariant Legendre transformation:
\[
p=L(x^{\mu},y^{A},v^{A}_{\mu})-\frac{\partial L}{\partial v^{A}_{\mu}}v^{A}_{\mu}, \qquad 
p_{A}^{\mu}= \frac{\partial L}{\partial v^{A}_{\mu}}.
\]

\begin{theorem}
The following statements for a section $\psi : X \rightarrow M$ of $\pi_{XM}$ are equivalent:

\begin{itemize}
\item[(1)]  $\psi$ is a critical point of the Hamilton-Pontryagin action functional \eqref{HPActionFunc}.
\item[(2)]  $\psi^{\ast} (\mathcal{X}_{M} \! \text{\Large$\lrcorner$} \;\Omega_{\mathcal{E}})=0$ for all $\pi_{XM}$-vertical vector fields $\mathcal{V}_{M}$ on $M$.
\item[(3)] $\psi$ satisfies the implicit Euler-Lagrange equations \eqref{local-ELFieldEqn} together with the covariant Legendre transformation \eqref{CovLegTrans}.
\end{itemize}
\end{theorem}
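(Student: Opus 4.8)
The plan is to prove the three-way equivalence by chaining the two propositions that precede the theorem with a single intrinsic coordinate computation of $\Omega_{\mathcal{E}}$. Statements (1) and (2) should be identified essentially for free, and the real content is matching (2) against (3).

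First I would dispose of $(1)\Leftrightarrow(2)$, since this is exactly the first proposition above: a section $\psi$ is a critical point of $S$ if and only if $\psi^{\ast}(\mathbf{i}_{\mathcal{V}_M}\Omega_{\mathcal{E}})=0$ for every $\pi_{XM}$-vertical vector field $\mathcal{V}_M$, which is precisely condition \eqref{ELFieldEqn-M}, i.e. statement (2). (The symbol $\mathcal{X}_M$ in (2) is the vertical field $\mathcal{V}_M$.) No new argument is needed beyond citing that proposition.

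Next I would prove $(2)\Leftrightarrow(3)$ in coordinates. The key step is to expand $\Omega_{\mathcal{E}}=\Omega_M+\mathbf{d}\mathcal{E}$ using \eqref{msform} and \eqref{generalizedenergy}. Writing $\mathbf{d}\mathcal{E}=\mathbf{d}E\wedge d^{n+1}x$ and cancelling the two $dp\wedge d^{n+1}x$ terms, one finds that $\Omega_{\mathcal{E}}$ contains no $dp$ and equals $dy^A\wedge dp_A^\mu\wedge d^n x_\mu$ plus the $(n+2)$-form $\big(-\frac{\partial L}{\partial y^A}\,dy^A+(p_A^\mu-\frac{\partial L}{\partial v_\mu^A})\,dv_\mu^A+v_\mu^A\,dp_A^\mu\big)\wedge d^{n+1}x$. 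I would then contract this with the coordinate basis of vertical fields $\partial/\partial y^A$, $\partial/\partial v_\mu^A$, $\partial/\partial p_A^\mu$ and pull back along $\psi$, which is legitimate because the pullback condition is $C^\infty(M)$-linear in $\mathcal{V}_M$ and these fields span $VM$ pointwise. Contraction with $\partial/\partial p_A^\mu$ produces $\partial y^A/\partial x^\mu=v_\mu^A$; with $\partial/\partial v_\mu^A$ produces $p_A^\mu=\partial L/\partial v_\mu^A$; with $\partial/\partial y^A$ produces $\partial p_A^\mu/\partial x^\mu=\partial L/\partial y^A$. These are exactly \eqref{local-ELFieldEqn}, so (2) holds along $\psi$ if and only if \eqref{local-ELFieldEqn} does. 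This reproduces the second proposition intrinsically rather than through the variational computation, and closes the cycle.

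The delicate point, and the main obstacle to a literal reading of (3), is the scalar momentum $p$. Contraction of $\Omega_{\mathcal{E}}$ with $\partial/\partial p$ gives zero, so stationarity leaves $p$ completely undetermined; equivalently $p$ drops out of the action \eqref{coordhpprinciple}. Hence \eqref{local-ELFieldEqn} recovers the momentum relation $p_A^\mu=\partial L/\partial v_\mu^A$ but \emph{not} the remaining Legendre equation $p=L-\frac{\partial L}{\partial v_\mu^A}v_\mu^A$ of \eqref{CovLegTrans}. To obtain the full covariant Legendre transformation claimed in (3), I would invoke the generalized energy constraint $E=p+p_A^\mu v_\mu^A-L=0$ stated immediately before the theorem: combining it with $p_A^\mu=\partial L/\partial v_\mu^A$ yields precisely the missing relation for $p$. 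I would therefore make explicit that (3) is to be understood with this energy constraint imposed, as it is the single ingredient not delivered by stationarity alone.
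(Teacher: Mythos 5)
Your proposal is correct, and it assembles the same three ingredients that the paper itself relies on: the first proposition for $(1)\Leftrightarrow(2)$, a coordinate computation for the local equations, and the generalized energy constraint for the scalar momentum $p$. The difference lies in the middle step. The paper never gives a self-contained proof of the theorem; it is a summary of the two preceding propositions, and there the coordinate equations \eqref{local-ELFieldEqn} are obtained by varying the action \eqref{coordhpprinciple} directly and integrating by parts, i.e.\ the paper establishes $(1)\Leftrightarrow(3)$ variationally. You instead prove $(2)\Leftrightarrow(3)$: you expand $\Omega_{\mathcal{E}}=\Omega_{M}+\mathbf{d}\mathcal{E}$, observe the cancellation of the two $dp\wedge d^{n+1}x$ terms, and contract with the coordinate basis of $VM$, invoking tensoriality of $\mathcal{V}_{M}\mapsto\psi^{\ast}(\mathbf{i}_{\mathcal{V}_{M}}\Omega_{\mathcal{E}})$ to reduce to that basis; your three contractions do reproduce exactly the three equations of \eqref{local-ELFieldEqn}, so the argument closes the cycle without re-running the variational computation. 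What each route buys: the paper's variational derivation makes the boundary-term/integration-by-parts structure visible, while yours is shorter, purely tensorial, and makes transparent why $p$ never enters. On that last point you have identified the one genuine subtlety in the theorem as literally stated: stationarity cannot determine $p$, since $p$ drops out of the action and $\mathbf{i}_{\partial/\partial p}\Omega_{\mathcal{E}}=0$, so the second equation of \eqref{CovLegTrans} is not a consequence of (1) or (2). Your resolution --- reading (3) with the generalized energy constraint $E=0$ imposed, which together with $p_{A}^{\mu}=\partial L/\partial v^{A}_{\mu}$ yields $p=L-(\partial L/\partial v^{A}_{\mu})v^{A}_{\mu}$ --- is precisely how the paper handles it in the paragraph immediately preceding the theorem, so your proof is faithful to the intended meaning of the statement.
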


\section{Multi-Dirac Structures}

\label{sec:multidirac}

In this section, we introduce the concept of {\bfi multi-Dirac structures} as a natural extension to the case of field theories of the concept of Dirac structures on manifolds developed by \cite{Cour1990}.   As a motivating example, we show that the graph of an (almost-)multisymplectic form on a manifold defines an almost multi-Dirac structure, and we relate the closedness of this form to the integrability of the corresponding multi-Dirac structure.  Later on, we will return to multi-Dirac structures with nonholonomic constraint distributions and show that they are integrable whenever the underlying constraint distribution is integrable.

Throughout this section, $M$ will be an arbitrary manifold. We will also fix an arbitrary integer $n + 1$ which we will refer to as the {\bfi degree} of the multi-Dirac structure, to be defined below.  In the remainder of the paper, we will consider the special case where $M$ is the Pontryagin bundle $M = J^1 Y \times_Y Z$.  In this case, $n + 1$ will be the dimension of $X$.

\paragraph{Pairings between Multivectors and Forms.}

Let $M$ be a manifold and consider the spaces $T^{l}M$ of $l$-multivector fields on $M$ and $\Lambda^{k}M$ of $k$-forms on $M$. For $k \ge l$, there is a natural pairing between elements of $\Sigma \in \Lambda^{k}M$ and $\mathcal{X} \in T^{l}M$ given by
\begin{equation}\label{pairing-mvf}
\left< \Sigma, \mathcal{X} \right> :=\mathbf{i}_{\mathcal{X}} \Sigma \in \Lambda^{k-l}(M).
\end{equation}

We now introduce the {\bfi graded Pontryagin bundle of degree $r$} over $M$ as follows:
\begin{equation} \label{prr}
P_{r} = T^{r}M \oplus \Lambda^{n+2-r}M,
\end{equation}
where $r=1,...,n+1$.   Note that $P_r$ also depends on the choice of $n$.  

Using the pairing in equation \eqref{pairing-mvf}, let us define the following  antisymmetric and symmetric pairings between the elements of $P_{r}$ and $P_{s}$ as follows.  For $(\mathcal{X}, \Sigma) \in P_{r}$ and $(\bar{\mathcal{X}}, \bar{\Sigma}) \in P_{s}$, where $r,s=1, \ldots ,n+1$, we put
\begin{equation} \label{gsym}
\left<\left< (\mathcal{X},  \Sigma), (\bar{\mathcal{X}}, \bar{\Sigma}) \right>\right>_{-} :=\frac{1}{2}\left( \mathbf{i}_{\bar{\mathcal{X}}}\Sigma-(-1)^{rs}\mathbf{i}_{\mathcal{X}}\bar{\Sigma} \right)
\end{equation}
and
\begin{equation} \label{gas}
\left<\left< (\mathcal{X},  \Sigma), (\bar{\mathcal{X}}, \bar{\Sigma}) \right>\right>_{+} :=\frac{1}{2}\left( \mathbf{i}_{\bar{\mathcal{X}}}\Sigma+(-1)^{rs}\mathbf{i}_{\mathcal{X}}\bar{\Sigma} \right),
\end{equation}
each of which takes values in $\Lambda^{n+2-r-s}(M)$.  Hence, both of these pairing are identically zero whenever $n+2<r+s$.

Let $V_{s}$ be a subbundle of $P_{s}$. The {\bfi $r$-orthogonal complementary subbundle} of $V_{s}$ associated to the antisymmetric paring $\left<\left<,\right>\right>_{-}$  is the subbundle $(V_{s})^{\perp,r}$ of $P_{r}$ defined by
\[
\begin{split}
(V_{s})^{\perp,r}&= \{  (\mathcal{X},  \Sigma) \in P_{r} \mid 
\left<\left< (\mathcal{X},  \Sigma), (\bar{\mathcal{X}}, \bar{\Sigma}) \right>\right>_{-}=0 \quad \textrm{for all} \quad (\bar{\mathcal{X}}, \bar{\Sigma}) \in V_{s}
\}.
\end{split}
\]
Note that $(V_{s})^{\perp,r}$ is a subbundle of $P_r$, and that  $(V_{s})^{\perp,r}=P_r$ whenever $n+2 < r+s$.

\paragraph{Almost Multi-Dirac Structures on Manifolds.}  

The definition of an \emph{almost multi-Dirac structure} on $M$ mimics the standard definition of \cite{Cour1990} of Dirac structures.  

\begin{definition}\label{DefMultiDirac} 
An {\bfi almost multi-Dirac structure of degree $n + 1$} on $M$ is a sequence of subbundles $D_1, \ldots, D_{n+1}$, where 
\[
D_{r} \subset  P_{r} \quad \text{for $r = 1, \ldots, n+1$},
\]
which is {\bfi $(n+1)$-Lagrangian}; namely, the sequence of the bundles $D_{r}$ satisfies the {\bfi maximally $(n+1)$-isotropic property}
\begin{equation} \label{isotropy}
D_{r}=(D_{s})^{\perp, r}
\end{equation}
for $r, s=1,...,n+1$ and where $r+s \le n+2$. 
\end{definition}

When no confusion can arise, we will refer to the sequence $D_1, \ldots, D_{n+1}$ simply as $D$.  Later, we will define an \emph{integrable multi-Dirac structure} as an almost multi-Dirac structure that satisfies certain integrability conditions.  

For the case of classical field theories, it will turn out that only the orthogonal complements of the multi-Dirac structure of the ``lowest'' and ``highest'' order (namely, $r=1$ and $r=n+1$ respectively) play an essential role to formulate the field equations.

\paragraph{Example.}
Let $M$ be a manifold with a multi-Dirac structure $D_1$ of degree $1$.  The isotropy property then becomes
\[
	D_1 = (D_1)^{\perp, 1},  
\]
where $P_1 = TM \oplus T^\ast M$ and the pairing is given by 
\[
	\left<\left< (\mathcal{X},  \Sigma), (\bar{\mathcal{X}}, \bar{\Sigma}) \right>\right>_{-} =
	\frac{1}{2}\left( \mathbf{i}_{\bar{\mathcal{X}}}\Sigma + \mathbf{i}_{\mathcal{X}}\bar{\Sigma} \right).
\]
for all $(\mathcal{X}, \Sigma), (\bar{\mathcal{X}}, \bar{\Sigma}) \in P_1$.  This is nothing but the definition of an (almost) Dirac structure developed by \cite{Cour1990}.  Our concept of multi-Dirac structures also includes the so-called \emph{higher-order Dirac structures} of \cite{Zambon2010}.    We will return to this at the end of this section.

\paragraph{Multi-Dirac Structures Induced by Differential Forms.}

Consider an $(n+2)$-form $\Omega_{M}$ on a manifold $M$.  We will show that the graph of $\Omega_M$ (in the sense defined below) defines a multi-Dirac structure of degree $n + 1$.  This example of a multi-Dirac structure will be fundamental in our subsequent treatment of classical field theories, where $\Omega_M$ will be the canonical multisymplectic form, but for now $\Omega_M$ can be an arbitrary form.  Note especially that at this stage $\Omega_M$ does not need to be closed or non-degenerate.

\begin{proposition} \label{prop: canondirac}
Let $\Omega_{M}$ be an arbitrary $(n + 2)$-form on $M$ and define the following subbundles $D_1, \ldots, D_{n + 1}$, where $D_r \subset P_r$:
\begin{equation}
\label{canonms}
D_{r} =\left\{ (\mathcal{X}, \Sigma)  \in P_{r}  \mid  
 \;\mathbf{i}_{\mathcal{X}} \Omega_{M}=\Sigma  \right\}
\end{equation}
for $r = 1, \ldots, n + 1$.  Then $D = D_1, \ldots, D_{n + 1}$ is a  multi-Dirac structure of degree $n + 1$ on $M$.
\end{proposition}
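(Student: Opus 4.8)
The plan is to verify directly that the subbundles $D_r$ defined by the graph condition $\mathbf{i}_{\mathcal{X}}\Omega_M = \Sigma$ satisfy the maximally $(n+1)$-isotropic property $D_r = (D_s)^{\perp, r}$ for all $r, s$ with $r + s \le n + 2$. This is a two-sided set containment, and I would prove each inclusion separately. The computational heart of the argument is the observation that for $(\mathcal{X}, \Sigma) \in D_r$ and $(\bar{\mathcal{X}}, \bar{\Sigma}) \in D_s$, the two terms appearing in the antisymmetric pairing \eqref{gsym} are actually equal up to the sign $(-1)^{rs}$. Indeed, $\mathbf{i}_{\bar{\mathcal{X}}}\Sigma = \mathbf{i}_{\bar{\mathcal{X}}}\mathbf{i}_{\mathcal{X}}\Omega_M$ and $\mathbf{i}_{\mathcal{X}}\bar{\Sigma} = \mathbf{i}_{\mathcal{X}}\mathbf{i}_{\bar{\mathcal{X}}}\Omega_M$, so the whole question reduces to the graded commutation rule for iterated interior products of an $r$-vector and an $s$-vector, namely $\mathbf{i}_{\bar{\mathcal{X}}}\mathbf{i}_{\mathcal{X}} = (-1)^{rs}\mathbf{i}_{\mathcal{X}}\mathbf{i}_{\bar{\mathcal{X}}}$ on forms.

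First I would establish this sign rule carefully, since it is the linchpin. On decomposable multivectors $\mathcal{X} = X_1 \wedge \cdots \wedge X_r$ and $\bar{\mathcal{X}} = \bar{X}_1 \wedge \cdots \wedge \bar{X}_s$ one has $\mathbf{i}_{\mathcal{X}} = \mathbf{i}_{X_r}\cdots\mathbf{i}_{X_1}$, and commuting the block of $s$ vector-field contractions past the block of $r$ contractions, each elementary interior product being an odd (degree $-1$) derivation, produces exactly the factor $(-1)^{rs}$; the general case follows by linearity since decomposables span. Granting this, for $(\mathcal{X}, \Sigma) \in D_r$ and $(\bar{\mathcal{X}}, \bar{\Sigma}) \in D_s$ we get
\[
\left<\left< (\mathcal{X}, \Sigma), (\bar{\mathcal{X}}, \bar{\Sigma}) \right>\right>_{-} = \frac{1}{2}\left( \mathbf{i}_{\bar{\mathcal{X}}}\mathbf{i}_{\mathcal{X}}\Omega_M - (-1)^{rs}\mathbf{i}_{\mathcal{X}}\mathbf{i}_{\bar{\mathcal{X}}}\Omega_M \right) = 0,
\]
which proves $D_r \subseteq (D_s)^{\perp, r}$.

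For the reverse inclusion $(D_s)^{\perp, r} \subseteq D_r$ I would take an arbitrary $(\mathcal{X}, \Sigma) \in (D_s)^{\perp, r}$ and show it must satisfy $\mathbf{i}_{\mathcal{X}}\Omega_M = \Sigma$. The key is that $\left<\left< (\mathcal{X}, \Sigma), (\bar{\mathcal{X}}, \bar{\Sigma}) \right>\right>_{-} = 0$ must hold for \emph{every} $(\bar{\mathcal{X}}, \bar{\Sigma}) \in D_s$, and $D_s$ contains in particular all pairs of the form $(\bar{\mathcal{X}}, \mathbf{i}_{\bar{\mathcal{X}}}\Omega_M)$ with $\bar{\mathcal{X}}$ ranging over all of $T^s M$. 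Substituting $\bar{\Sigma} = \mathbf{i}_{\bar{\mathcal{X}}}\Omega_M$ and again applying the sign rule, the vanishing condition collapses to $\mathbf{i}_{\bar{\mathcal{X}}}(\Sigma - \mathbf{i}_{\mathcal{X}}\Omega_M) = 0$ for all $\bar{\mathcal{X}} \in T^s M$. Since $\Sigma - \mathbf{i}_{\mathcal{X}}\Omega_M$ is a fixed form of degree $n+2-r$ whose contraction with every $s$-vector vanishes, a nondegeneracy-of-pairing argument forces $\Sigma = \mathbf{i}_{\mathcal{X}}\Omega_M$, i.e. $(\mathcal{X}, \Sigma) \in D_r$. \textbf{The main obstacle} I anticipate is precisely this last step: one must choose $s$ appropriately (for a given $r$, any admissible $s$ with $r+s \le n+2$ works, but the cleanest is to use enough test multivectors to detect all components of the difference form), and verify that contraction against all decomposable $s$-vectors separates points in $\Lambda^{n+2-r}M$ in the relevant degree range — a point that requires the constraint $r + s \le n + 2$ so that the pairing is not forced to be trivially zero for degree reasons. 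I would handle this by working in a local coframe and checking that the contraction pairing $\Lambda^{n+2-r}M \times T^{s}M \to \Lambda^{n+2-r-s}M$ is nondegenerate in its first argument whenever $s \le n+2-r$, which is exactly the admissible range.
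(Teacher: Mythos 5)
Your proposal is correct and follows essentially the same route as the paper's proof: both inclusions $D_r \subseteq (D_s)^{\perp,r}$ and $(D_s)^{\perp,r} \subseteq D_r$ are established by substituting the graph condition and invoking the graded commutation rule $\mathbf{i}_{\bar{\mathcal{X}}}\mathbf{i}_{\mathcal{X}} = (-1)^{rs}\,\mathbf{i}_{\mathcal{X}}\mathbf{i}_{\bar{\mathcal{X}}}$, with the reverse inclusion reduced to the vanishing of $\mathbf{i}_{\bar{\mathcal{X}}}(\Sigma - \mathbf{i}_{\mathcal{X}}\Omega_M)$ for all test multivectors. Your treatment is in fact slightly more careful than the paper's at the final step, where the paper simply asserts that contraction against all $s$-multivectors forces the difference form to vanish, while you justify this nondegeneracy and note explicitly why the restriction $r+s \le n+2$ is needed.
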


\begin{proof}
To prove that $D = D_1, \ldots, D_{n + 1}$ is a multi-Dirac structure of degree $n + 1$, we need to check the isotropy property \eqref{isotropy}, namely namely, $D_{r}=D_{s}^{\perp,r}$ for all $r, s = 1, \ldots, n + 1$, with $r + s \le n + 1$.

Let us first show that $D_{r} \subset D_{s}^{\perp,r}$. Let $(\mathcal{X},\Sigma) \in D_{r}$
and $(\bar{\mathcal{X}},\bar{\Sigma}) \in D_{s}$. Then, it follows that
\[
\begin{split}
&\left<\left< (\mathcal{X},\Sigma),(\bar{\mathcal{X}},\bar{\Sigma})\right>\right>_{-}\\
&\hspace{1cm}=\frac{1}{2}\left\{\mathbf{i}_{\bar{\mathcal{X}}} \Sigma +(-1)^{rs+1}\mathbf{i}_{\mathcal{X}} \bar{\Sigma}\right\}\\
&\hspace{1cm}=\frac{1}{2}\left\{\mathbf{i}_{\bar{\mathcal{X}}}\mathbf{i}_{\mathcal{X}} \Omega_{M} +(-1)^{rs+1} \mathbf{i}_{\mathcal{X}}\mathbf{i}_{\bar{\mathcal{X}}}\Omega_{M}\right\}
\\
&\hspace{1cm}=0,
\end{split}
\]
since $\mathbf{i}_{\bar{\mathcal{X}}}\mathbf{i}_{\mathcal{X}}\Omega_{M}=(-1)^{rs} \mathbf{i}_{\mathcal{X}}\mathbf{i}_{\bar{\mathcal{X}}}\Omega_{M}$.
Thus, $D_{r} \subset D_{s}^{\perp,r}$.
\medskip

Next, let us show that $D_{s}^{\perp,r} \subset D_{r}$. Let $(\bar{\mathcal{X}}, \bar{\Sigma}) \in D_{s}^{\perp,r}$. By definition of $D_{s}^{\perp,r}$,
\[
\mathbf{i}_{\bar{\mathcal{X}}} \Sigma + (-1)^{rs+1}\mathbf{i}_{\mathcal{X}} \bar{\Sigma}=0
\] 
for all $(\mathcal{X},\Sigma)  \in D_{r}$, i.e. 
$\mathcal{X} \in T^{r}Z$ such that $\mathbf{i}_{\mathcal{X}} \Omega_{M}=\Sigma$.  Then, it follows
\[
\begin{split}
\mathbf{i}_{\bar{\mathcal{X}}} \Sigma + (-1)^{rs+1}\mathbf{i}_{\mathcal{X}} \bar{\Sigma} &=\mathbf{i}_{\bar{\mathcal{X}}} \mathbf{i}_{\mathcal{X}} \Omega_{M} + (-1)^{rs+1}\mathbf{i}_{\mathcal{X}} \bar{\Sigma} \\
&=\mathbf{i}_{\mathcal{X}} \left\{  (-1)^{rs} \mathbf{i}_{\bar{\mathcal{X}}} \Omega_{M}+ (-1)^{rs+1} \bar{\Sigma}\right\}\\
&=0
\end{split}
\]
for all $\mathcal{X} \in T^{s}Z$ and with $r+s \le n+2$. Therefore, one has
\[
\mathbf{i}_{{\bar{\mathcal{X}}}}\Omega_{M}=\bar{\Sigma}.
\]

Thus, $D_{s}^{\perp, r} \subset D_{r}$. Finally, we have shown that $D_{r}=D_{s}^{\perp, r}$. So, it follows that $D_{r}$ is a multi-Dirac structure of degree $r$ on $M$.
\end{proof}

\begin{proposition}
Let $D = D_1, \ldots, D_{n+1}$ be a multi-Dirac structure of degree $n + 1$.  For any $(\mathcal{X},\Sigma) \in D_{r}$ and $(\bar{\mathcal{X}},\bar{\Sigma}) \in D_{s}$, the following relation holds:
\begin{equation}\label{LagProp}
 \mathbf{i}_{\bar{\mathcal{X}}}\Sigma-(-1)^{rs}\mathbf{i}_{\mathcal{X}}\bar{\Sigma} =0.
\end{equation}
\end{proposition}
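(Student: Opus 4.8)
The plan is to recognize that relation \eqref{LagProp} is, up to the overall factor $\frac{1}{2}$, nothing more than the statement that the antisymmetric pairing $\left<\left< (\mathcal{X}, \Sigma), (\bar{\mathcal{X}}, \bar{\Sigma}) \right>\right>_{-}$ of \eqref{gsym} vanishes. Consequently the whole proposition should fall out of the defining maximal-isotropy property \eqref{isotropy} of a multi-Dirac structure, with essentially no computation: the content is simply to make the isotropy condition explicit at the level of individual contractions.

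First I would dispose of the degenerate range of degrees. When $r + s > n + 2$, the pairing $\left<\left<\cdot,\cdot\right>\right>_{-}$ takes values in $\Lambda^{n+2-r-s}(M)$, which is the zero bundle because its degree is negative; indeed each of the contractions $\mathbf{i}_{\bar{\mathcal{X}}}\Sigma$ and $\mathbf{i}_{\mathcal{X}}\bar{\Sigma}$ is then a form of negative degree and hence vanishes identically. In this range \eqref{LagProp} therefore holds for trivial reasons, independently of any structure on $D$. In the complementary range $r + s \le n + 2$, I would invoke the $(n+1)$-isotropy property \eqref{isotropy}, namely $D_r = (D_s)^{\perp, r}$. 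Taking $(\mathcal{X}, \Sigma) \in D_r = (D_s)^{\perp, r}$, the very definition of the $r$-orthogonal complement forces $\left<\left< (\mathcal{X}, \Sigma), (\bar{\mathcal{X}}, \bar{\Sigma}) \right>\right>_{-} = 0$ for every $(\bar{\mathcal{X}}, \bar{\Sigma}) \in D_s$. Unwinding \eqref{gsym} and multiplying through by $2$ yields exactly \eqref{LagProp}.

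Since both cases are immediate, there is no genuine obstacle here. The only point requiring any care is the degree bookkeeping: one must apply the isotropy property only in the range $r + s \le n + 2$ where it is actually asserted, and rely on the degree-count vanishing of the pairing otherwise. The proposition is thus a direct restatement of the maximal isotropy condition, and I would present it as such rather than as a separate computation.
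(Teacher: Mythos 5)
Your proof is correct and takes essentially the same approach as the paper, whose entire proof reads ``This is clear from the $r$-Lagrangian (maximally $r$-isotropic) property of $D_{r}$'' --- that is, both arguments recognize \eqref{LagProp} as just the vanishing of the antisymmetric pairing \eqref{gsym} demanded by the isotropy condition \eqref{isotropy}. Your separate treatment of the range $r+s>n+2$ by degree counting is a minor refinement of the same argument, and it matches the paper's own remark (made just after \eqref{gas}) that both pairings vanish identically in that range.
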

\begin{proof}
This is clear from the $r$-Lagrangian (maximally $r$-isotropic) property of $D_{r}$. 
\end{proof}

The above $r$-Lagrangian property of the multi-Dirac structure $D_{r}$ in equation \eqref{LagProp} may be understood as the field-theoretic analogue of the {\bfi virtual work principle} in mechanics and is related to {\bfi Tellegen's theorem} in electric circuits. We shall return to this relation in the construction of Lagrange-Dirac field theories.

\paragraph{Wedge Products on Sections of Graded Pontryagin Bundles.}
We can introduce a wedge product between elements of the space of sections $P_{r}$ and $P_{s}$, which is given by, for $(\mathcal{X},\Sigma) \in P_{r}$
and $(\bar{\mathcal{X}},\bar{\Sigma}) \in P_{s}$,
\begin{equation} \label{wedgeproduct}
\begin{split}
(\mathcal{X},\Sigma) \wedge (\bar{\mathcal{X}},\bar{\Sigma})
&:=
\left( \mathcal{X} \wedge \bar{\mathcal{X}},\left<\left< (\mathcal{X},  \Sigma), (\bar{\mathcal{X}}, \bar{\Sigma}) \right>\right>_{+} \right)\\
&=\left( \mathcal{X} \wedge \bar{\mathcal{X}},\frac{1}{2}\left( \mathbf{i}_{\bar{\mathcal{X}}}\Sigma+(-1)^{rs}\mathbf{i}_{\mathcal{X}}\bar{\Sigma} \right) \right) \in P_{r+s},.
\end{split}
\end{equation}
where on the right-hand side we have used the usual wedge product of multi-vector fields.

\paragraph{Multi-Courant Brackets of Forms and Multi-Vector Fields.}
We now introduce a notion of multi-Courant brackets which is a natural extension to the case of field theories of the bracket used by \cite{Cour1990}.  To do this,
we define first the {\bfi Lie derivative} $\pounds_{\mathcal{X}} \Sigma$ of an $l$-form $\Sigma$ with respect to a $k$-multivector field $\mathcal{X}$ by means of Cartan's magic formula as
\[
\pounds_{\mathcal{X}} \Sigma := 
\mathbf{i}_{\mathcal{X}} \mathbf{d}\Sigma - (-1)^k
\mathbf{d} \mathbf{i}_{\mathcal{X}} \Sigma.
\]
Note that $\pounds_{\mathcal{X}} \Sigma$ is an $(l+1-k)$-form (see \cite{Tu1974} for more information). 
\medskip

Using this notion of Lie derivative, we define a {\bfi multi-Courant bracket} on the space of sections of $P_{r} \times P_{s}$ as a graded anti-symmetric bracket
\[
\left[\!\left[ \cdot, \cdot \right]\!\right]_{r,s}: \Gamma(P_{r}) \times \Gamma(P_{s}) \to \Gamma(P_{r+s-1})
\]
given by
\begin{equation} \label{multicourant}
\begin{split}
&\left[\!\!\left[\left(\mathcal{X},\Sigma\right), \left(\bar{\mathcal{X}}, \bar{\Sigma}\right) \right]\!\!\right]_{r,s}\\
&:=
\left( [\mathcal{X},\bar{\mathcal{X}}], \; \pounds_{\mathcal{X}}\bar{\Sigma}-(-1)^{(r-1)(s-1)}\pounds_{\bar{\mathcal{X}}}\Sigma+\frac{(-1)}{2}^{r}\mathbf{d}
\left<\!\left<(\mathcal{X}, \Sigma),(\bar{\mathcal{X}}, \bar{\Sigma})\right>\!\right>_{+}
\right)\\
& \phantom{:} =
\left( [\mathcal{X},\bar{\mathcal{X}}], \; \pounds_{\mathcal{X}}\bar{\Sigma}-(-1)^{(r-1)(s-1)}\pounds_{\bar{\mathcal{X}}}\Sigma+\frac{(-1)}{2}^{r}\mathbf{d}
\left( \mathbf{i}_{\bar{\mathcal{X}}}\Sigma+(-1)^{rs}\mathbf{i}_{\mathcal{X}}\bar{\Sigma} \right)
\right).
\end{split}
\end{equation}
In the above, $[\mathcal{X},\bar{\mathcal{X}}]$ is the {\it Schouten-Nijenhuis bracket} of the $r$-multivector  field $\mathcal{X}$ and the $s$-multivector field $\bar{\mathcal{X}}$, so that $[\mathcal{X},\bar{\mathcal{X}}]$ is an $(r+s-1)$-multivector field. 

 For the case in which $n=0$ and $r=s=1$, the multi-Courant bracket corresponds to the standard Courant bracket  on the space $\Gamma(P)$ of sections of the Pontryagin bundle $P=TM\oplus T^{\ast}M$.

\begin{definition} \label{def:IntMultDirac}
An almost multi-Dirac structure $D = D_1, \ldots, D_{n + 1}$  of degree $n+1$ on $M$ is said to be {\bfi integrable} if for all $(\mathcal{X},\Sigma) \in D_{r}$ and $(\bar{\mathcal{X}},\bar{\Sigma}) \in D_{s}$, where $r, s=1,...,n+1$ and $r+s \le n+1$, the \newtext{following condition is satisfied:
\[
\left[\!\!\left[\left(\mathcal{X},\Sigma\right), \left(\bar{\mathcal{X}}, \bar{\Sigma}\right) \right]\!\!\right]_{r,s} \in D_{r+s-1}
\]}
\end{definition}

\paragraph{Integrable Multi-Dirac Structures.}
In the construction of the almost multi-Dirac structure in Proposition \ref{prop: canondirac}, we only used the fact that the pre-multisymplectic form $\Omega_{M}$ maps multivectors $\mathcal{X}$ into forms $\Sigma = \mathbf{i}_{\mathcal{X}} \Omega_{M}$.  The fact that $\mathbf{d} \Omega_{M} = 0$ was left out of consideration, and in fact \emph{any} $(n + 2)$-form can be used to define a multi-Dirac structure.  We will now show that the multi-Dirac structure defined by an $(n + 2)$-form $\Omega$ is integrable if and only if $\Omega$ is closed.

Let $D = D_1, \ldots, D_{n + 1}$ be the multi-Dirac structure defined by an $(n + 2)$-form $\Omega$ as in \eqref{canonms}.  \newtext{For $D$ to be integrable, we need to check that the multi-Courant bracket \eqref{multicourant} is closed.  Note that the wedge product is always closed:} for all  $(\mathcal{X},\Sigma) \in D_{r}$ and $(\bar{\mathcal{X}},\bar{\Sigma}) \in D_{s}$, we have
\[
(\mathcal{X},\Sigma) \wedge (\bar{\mathcal{X}},\bar{\Sigma})=
(\mathcal{X}, \mathbf{i}_{\mathcal{X}}\Omega_{M}) \wedge (\bar{\mathcal{X}}, \mathbf{i}_{\bar{\mathcal{X}}}\Omega_{M}) 
=(\mathcal{X} \wedge \bar{\mathcal{X}}, \mathbf{i}_{\bar{\mathcal{X}}}\mathbf{i}_{\mathcal{X}}\Omega_{M}) 
=(\mathcal{X} \wedge \bar{\mathcal{X}}, \mathbf{i}_{\mathcal{X} \wedge \bar{\mathcal{X}}}\Omega_{M}),
\]
which is an element of $D_{r + s}$.  We now need to check the closedness of the multi-Courant bracket \eqref{multicourant}; namely,
\[
\left[\!\!\left[\left(\mathcal{X},\Sigma\right), \left(\bar{\mathcal{X}}, \bar{\Sigma}\right) \right]\!\!\right]_{r,s} \in D_{r+s-1}.
\]
This condition turns out to be equivalent to the closedness of $\Omega$, as is shown in the next theorem.

\begin{theorem} \label{thm:integrable}
Let $\Omega_{M}$ be an arbitrary $(n + 2)$-form on $M$ which is not necessarily closed and let $D$ be the almost multi-Dirac structure defined by the graph of $\Omega_{M}$.  Then, we have that $D$ is integrable if and only if $\mathbf{d} \Omega_{M} = 0$.
\end{theorem}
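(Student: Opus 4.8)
The plan is to reduce the integrability condition to a single pointwise contraction identity in $\mathbf{d}\Omega_M$. Since by \eqref{canonms} every element of $D_{r+s-1}$ has the form $(\mathcal{Y},\mathbf{i}_{\mathcal{Y}}\Omega_M)$, and since the first slot of the multi-Courant bracket \eqref{multicourant} is always the Schouten--Nijenhuis bracket $[\mathcal{X},\bar{\mathcal{X}}]$, membership $[\![(\mathcal{X},\Sigma),(\bar{\mathcal{X}},\bar{\Sigma})]\!]_{r,s}\in D_{r+s-1}$ is equivalent to the single requirement that the second slot of the bracket equal $\mathbf{i}_{[\mathcal{X},\bar{\mathcal{X}}]}\Omega_M$. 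So the theorem comes down to computing that second slot for $\Sigma=\mathbf{i}_{\mathcal{X}}\Omega_M$ and $\bar{\Sigma}=\mathbf{i}_{\bar{\mathcal{X}}}\Omega_M$ and comparing it with $\mathbf{i}_{[\mathcal{X},\bar{\mathcal{X}}]}\Omega_M$.

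For this computation I would assemble three ingredients of graded Cartan calculus on $M$: the defining magic formula $\pounds_{\mathcal{X}}\Sigma=\mathbf{i}_{\mathcal{X}}\mathbf{d}\Sigma-(-1)^{r}\mathbf{d}\mathbf{i}_{\mathcal{X}}\Sigma$; the graded commutation rule $\mathbf{i}_{[\mathcal{X},\bar{\mathcal{X}}]}=\pounds_{\mathcal{X}}\mathbf{i}_{\bar{\mathcal{X}}}-(-1)^{(r-1)s}\mathbf{i}_{\bar{\mathcal{X}}}\pounds_{\mathcal{X}}$ that characterizes the Schouten bracket on forms (see \cite{Tu1974}); and the elementary sign rules $\mathbf{i}_{\mathcal{X}}\mathbf{i}_{\bar{\mathcal{X}}}=(-1)^{rs}\mathbf{i}_{\bar{\mathcal{X}}}\mathbf{i}_{\mathcal{X}}$ and $[\bar{\mathcal{X}},\mathcal{X}]=-(-1)^{(r-1)(s-1)}[\mathcal{X},\bar{\mathcal{X}}]$. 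Applying the commutation rule to $\pounds_{\mathcal{X}}\bar{\Sigma}=\pounds_{\mathcal{X}}\mathbf{i}_{\bar{\mathcal{X}}}\Omega_M$ and to $\pounds_{\bar{\mathcal{X}}}\Sigma$, and then substituting $\pounds_{\mathcal{X}}\Omega_M=\mathbf{i}_{\mathcal{X}}\mathbf{d}\Omega_M-(-1)^r\mathbf{d}\Sigma$, I expect the two Lie-derivative terms of \eqref{multicourant} to produce $2\,\mathbf{i}_{[\mathcal{X},\bar{\mathcal{X}}]}\Omega_M$ together with terms in $\mathbf{i}_{\bar{\mathcal{X}}}\mathbf{i}_{\mathcal{X}}\mathbf{d}\Omega_M$ and exact corrections $\mathbf{d}\mathbf{i}(\cdots)$. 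The role of the $\tfrac{(-1)^r}{2}\mathbf{d}(\cdots)$ term is precisely to remove one copy of $\mathbf{i}_{[\mathcal{X},\bar{\mathcal{X}}]}\Omega_M$ against those exact corrections, collapsing the whole expression to
\[
\big(\text{second slot of }[\![(\mathcal{X},\Sigma),(\bar{\mathcal{X}},\bar{\Sigma})]\!]_{r,s}\big)=\mathbf{i}_{[\mathcal{X},\bar{\mathcal{X}}]}\Omega_M+c_{r,s}\,\mathbf{i}_{\bar{\mathcal{X}}}\mathbf{i}_{\mathcal{X}}\,\mathbf{d}\Omega_M,
\]
for an explicit nonzero sign $c_{r,s}=\pm1$ that I would pin down along the way. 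I would use the case $n=0$, $r=s=1$ (the graph of a $2$-form under the ordinary Courant bracket, whose second slot closes exactly up to $\mathbf{i}_{\bar{\mathcal{X}}}\mathbf{i}_{\mathcal{X}}\mathbf{d}\Omega_M$) as a sanity check on the signs.

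Granting this identity, integrability of $D$ becomes the statement that $\mathbf{i}_{\bar{\mathcal{X}}}\mathbf{i}_{\mathcal{X}}\mathbf{d}\Omega_M=0$ for all decomposable multivector fields $\mathcal{X},\bar{\mathcal{X}}$ of degrees $r,s$ in the admissible range. The forward implication is immediate, since $\mathbf{d}\Omega_M=0$ kills every such contraction. For the converse I would take $\mathcal{X}=e_1$ a single vector field and $\bar{\mathcal{X}}=e_2\wedge\cdots\wedge e_{k+1}$ decomposable of the largest degree the range permits, so that the hypothesis says the one- or two-form $\mathbf{i}_{\bar{\mathcal{X}}}\mathbf{i}_{\mathcal{X}}\mathbf{d}\Omega_M$ vanishes identically. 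Since $\mathbf{i}_{\bar{\mathcal{X}}}\mathbf{i}_{\mathcal{X}}=\pm\,\mathbf{i}_{e_1\wedge\cdots\wedge e_{k+1}}$ and the $(n+3)$-form $\mathbf{d}\Omega_M$ is recovered from its contractions via $\mathbf{d}\Omega_M(e_1,\dots,e_{n+3})=\pm(\mathbf{i}_{e_1\wedge\cdots\wedge e_{k+1}}\mathbf{d}\Omega_M)(e_{k+2},\dots,e_{n+3})$, evaluating the vanishing form on the remaining tangent vectors forces $\mathbf{d}\Omega_M(e_1,\dots,e_{n+3})=0$; as the $e_i$ are arbitrary, $\mathbf{d}\Omega_M=0$ pointwise.

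The main obstacle lies entirely in the middle step: carrying out the graded bookkeeping so that all signs in \eqref{multicourant} conspire correctly. The delicate feature is the factor-of-two phenomenon, where both Lie-derivative terms contribute $\mathbf{i}_{[\mathcal{X},\bar{\mathcal{X}}]}\Omega_M$ with the \emph{same} sign, and it is only after inserting the $\tfrac{(-1)^r}{2}\mathbf{d}(\cdots)$ term and repeatedly trading $\mathbf{d}\mathbf{i}$ against $\mathbf{i}\mathbf{d}$ through the magic formula that the expression collapses to a single $\mathbf{i}_{[\mathcal{X},\bar{\mathcal{X}}]}\Omega_M$ plus the residual $\mathbf{i}_{\bar{\mathcal{X}}}\mathbf{i}_{\mathcal{X}}\mathbf{d}\Omega_M$. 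Establishing the graded commutation rule for $\mathbf{i}_{[\mathcal{X},\bar{\mathcal{X}}]}$ in the conventions of \cite{Tu1974}, rather than merely quoting it, may warrant a short auxiliary lemma, and I would verify the general signs against the $r=s=1$ specialization before trusting them.
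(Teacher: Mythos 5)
Your proposal is correct and follows essentially the same route as the paper's proof: both reduce integrability to showing that the second slot of the bracket on graph elements equals $\mathbf{i}_{[\mathcal{X},\bar{\mathcal{X}}]}\Omega_M$ plus a residual term $(-1)^r\,\mathbf{i}_{\mathcal{X}}\mathbf{i}_{\bar{\mathcal{X}}}\mathbf{d}\Omega_M$, obtained via the magic formula and the graded commutation rule relating $\mathbf{i}_{[\mathcal{X},\bar{\mathcal{X}}]}$ to $\pounds_{\mathcal{X}}\mathbf{i}_{\bar{\mathcal{X}}}$ and $\mathbf{i}_{\bar{\mathcal{X}}}\pounds_{\mathcal{X}}$, so that closedness under the bracket is equivalent to the vanishing of all such contractions of $\mathbf{d}\Omega_M$. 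The only difference is one of completeness: the paper stops at ``the second term has to vanish, which is precisely equivalent to $\mathbf{d}\Omega_M = 0$,'' whereas you spell out the converse --- recovering pointwise vanishing of the $(n+3)$-form from the vanishing of its contractions with $r+s \le n+1$ vector fields --- a detail the paper leaves implicit rather than a genuinely different approach.
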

\begin{proof}
For the almost multi-Dirac structure induced by an $(n + 2)$-form $\Omega_{M}$, the multi-Courant bracket can be rewritten as follows.  Let $(\mathcal{X},\Sigma)$ and $(\bar{\mathcal{X}}, \bar{\Sigma})$ be sections of $D_{r}$ and $D_{s}$. Using the fact that, for the almost multi-Dirac structure, 
\[
	\Sigma = \mathbf{i}_{\mathcal{X}} \Omega_{M}	\quad \textrm{and} \quad \bar{\Sigma} = \mathbf{i}_{\bar{\mathcal{X}}} \Omega_{M},
\]	
we obtain after some calculations that 
\begin{align*}
[\![ (\mathcal{X}, \Sigma),(\bar{\mathcal{X}}, \bar{\Sigma})]\!]_{r,s} & =  ([\mathcal{X}, \bar{\mathcal{X}}], \pounds_{\mathcal{X}}\bar{\Sigma}  - (-1)^{(r-1)(s-1)}\pounds_{\bar{\mathcal{X}}} \Sigma +
 \frac{(-1)^{r}}{2} \mathbf{d} (\mathbf{i}_{\mathcal{X}} \bar{\Sigma} +(-1)^{rs}\mathbf{i}_{\bar{\mathcal{X}}}\Sigma)
) \\
& = ([\mathcal{X}, \bar{\mathcal{X}}], \pounds_{\mathcal{X}} \mathbf{i}_{\bar{\mathcal{X}}} \Omega_{M} - (-1)^{(r-1)(s-1)} \mathbf{i}_{\bar{\mathcal{X}}} \pounds_{\mathcal{X}} \Omega_{M} +  (-1)^{r} \mathbf{i}_{\mathcal{X}} \mathbf{i}_{\bar{\mathcal{X}}} \mathbf{d}\Omega_{M} ) \\
& = ([\mathcal{X}, \bar{\mathcal{X}}],  \mathbf{i}_{[\mathcal{X}, \bar{\mathcal{X}}]}\Omega_{M} +  (-1)^{r} \mathbf{i}_{\mathcal{X}} \mathbf{i}_{\bar{\mathcal{X}}}  \mathbf{d}\Omega_{M} ).
\end{align*}
In order for the right-hand side of this expression to be a section of $D_{r+s}$, the second term has to vanish.  This is precisely equivalent to the requirement that $\mathbf{d}\Omega_{M} = 0$. 
\end{proof}

\paragraph{Multi-Poisson Bracket Induced by a Multi-Dirac Structure.}

Associated to every Dirac structure there is a Poisson structure on a restricted class of functions, which satisfies the Jacobi identity if and only if the Dirac structure is integrable.  This correspondence was investigated by \cite{Cour1990} and forms the basis for the constrained Dirac brackets of \cite{VaMa1994} and \cite{CaLeMa1999}.  We now show that multi-Dirac structures similarly induce a notion of {\bfi multi-Poisson structure}, which is a graded analogue of the usual concept of Poisson structures.  Multi-Poisson brackets are studied almost exclusively in the context of multisymplectic structures (see among others \cite{CIdeLe96, CLM2003, FoPaRo2005} and the references therein, or \cite{BaHoRo2010, Rogers2010a} for a category-theoretic approach) and no comprehensive general theory exists as of yet.

Let $D = D_1, \ldots, D_{n + 1}$ be a multi-Dirac structure of degree $n + 1$.  For $k = 0, \ldots, n$, we say that a $k$-form $\Sigma$ is {\bfi admissible} if there exists an $(n + 1 - k)$-multivector field $\mathcal{X}_\Sigma$ such that
\[
	(\mathcal{X}_\Sigma, \mathbf{d} \Sigma) \in D_{n + 1 - k}.
\]
We denote the space of admissible $k$-forms by $\Omega^{k}_{\mathrm{adm}}(M)$.  We now define the {\bfi multi-Poisson structure} on $\Omega^\ast_{\mathrm{adm}}$ associated to $D$ as the map 
\[
	\{ \cdot, \cdot \} :  \Omega^k_{\mathrm{adm}} \times \Omega^l_{\mathrm{adm}} \rightarrow \Omega^{k + l  - n}_{\mathrm{adm}},
\]
(where $k, l = 0, \ldots, n$) given by the following prescription.  For $\Sigma \in \Omega^k_{\mathrm{adm}}$ and $\bar{\Sigma} \in \Omega^l_{\mathrm{adm}}$, we put 
\begin{equation} \label{semibracket}
	\{ \Sigma, \bar{\Sigma} \} := 
		\mathbf{i}_{\mathcal{X}_{\bar{\Sigma}}} \mathbf{d} \Sigma.
\end{equation}
where $\mathcal{X}_{\bar{\Sigma}}$ is chosen so that $(\mathcal{X}_{\bar{\Sigma}}, \mathbf{d} \bar{\Sigma}) \in D_{n + 1- l}$.  It can easily be shown that the bracket does not depend on the choice of multivector $\mathcal{X}_{\bar{\Sigma}}$: let $\mathcal{X}'_{\bar{\Sigma}}$ be any other multivector such that $(\mathcal{X}'_{\bar{\Sigma}}, \mathbf{d} \bar{\Sigma}) \in D_{n +1 -l}$.  The difference $(\mathcal{X}_{\bar{\Sigma}} - \mathcal{X}'_{\bar{\Sigma}}, 0)$ is then also an element of $D_{n - l}$ and therefore we have that 
\[
	\left<\left< (\mathcal{X}, \Sigma), (\mathcal{X}_{\bar{\Sigma}} - \mathcal{X}'_{\bar{\Sigma}}, 0) \right>\right>_{-} = 0 
\]
but this is equivalent to 
\[
\mathbf{i}_{\mathcal{X}_{\bar{\Sigma}}} \mathbf{d} \Sigma
=
\mathbf{i}_{\mathcal{X}'_{\bar{\Sigma}}} \mathbf{d} \Sigma
\]
which shows us that the bracket $\{ \Sigma, \bar{\Sigma} \}$ depends only on the choice of $\Sigma$ and $\bar{\Sigma}$.  \newtext{ At this stage, we remark that the designation ``Poisson'' is chosen to emphasize the similarity with standard Poisson structures.  However, to obtain a (graded) Poisson algebra we would also have to introduce a product on $\Omega^\ast_{\mathrm{adm}}$ with respect to which the Poisson bracket is a (graded) derivation.   A candidate graded product is given by the wedge product \eqref{wedgeproduct}, but a detailed study of these structures is deferred to a forthcoming paper.} 

To finish this paragraph, we remark that for a multi-Dirac structure induced by a multisymplectic form $\Omega$, the bracket $\{ \Sigma, \bar{\Sigma} \}$ agrees with the multi-Poisson brackets of \cite{CIdeLe96, FoPaRo2005, BaHoRo2010}.  Indeed, let $D = D_1, \ldots, D_{n + 1}$ be a multi-Dirac structure induced by a multisymplectic form.  A $k$-form $\Sigma$ is admissible if there exists an $(n + 1 - l)$-multivector field $\mathcal{X}_\Sigma$ so that $(\mathcal{X}_\Sigma, \Sigma) \in D_{n + 1 - l}$.  This is equivalent to 
\[
	\mathbf{i}_{\mathcal{X}_\Sigma} \Omega_M = \mathbf{d} \Sigma.
\]
Forms with this property are said to be {\bfi Hamiltonian}.  On this class of forms, the bracket \eqref{semibracket} is given by 
\begin{equation} \label{semims}
	\{ \Sigma, \bar{\Sigma} \} = \mathbf{i}_{\mathcal{X}_{\bar{\Sigma}}}
		\mathbf{i}_{\mathcal{X}_{\Sigma}} \Omega_M,
\end{equation}
which agrees (up to sign) with the definition of the brackets in the multisymplectic literature.

It is worth noting that one often encounters two different, non-equivalent multi-Poisson brackets in classical field theory.  The first one, referred to as the {\bfi semi-bracket} in \cite{BaHoRo2010} is defined as in \eqref{semims}, while the {\bfi hemi-bracket} is defined for two Hamiltonian forms as $\{ \Sigma, \bar{\Sigma} \}' :=  \pounds_{\mathcal{X}_{\bar{\Sigma}}} \Sigma$.  However, this notion of bracket does not seem to be well-defined for arbitrary multi-Dirac structures, since the right-hand side depends on the choice of multivector field $\mathcal{X}_{\bar{\Sigma}}$.  In order words, only the semi-bracket can be defined for multi-Dirac structures.  As shown by \cite{Zambon2010}, a similar phenomenon appears in the theory of higher-order Dirac structures, to be discussed below.

Finally, \cite{BaHoRo2010} have shown that on a multisymplectic manifold $(M, \Omega)$ where the degree of $\Omega$ is three, the space $\Omega^2_{\mathrm{adm}}$ has the structure of a {\bfi Lie 2-algebra}.  This observation was further extended to the case of higher-order Dirac structures by \cite{Zambon2010} and to the case of multisymplectic structures of arbitrary degree by \cite{Rogers2010b}.  Following a conjecture of \cite{Zambon2010}, it follows that our spaces $\Omega^n_{\mathrm{adm}}(M)$ of highest-degree admissible forms are Lie $(n + 1)$-algebras, but more is true.  When considering the space of admissible forms $\Omega^\ast_{\mathrm{adm}}(M)$ in its entirety, we can define a graded \newtext{Lie}  bracket on it as in \cite{CIdeLe96} as follows. For $\alpha \in \Omega^k_{\mathrm{adm}}(M)$, we put 
\[
	\left| \alpha \right| = n - k, 
\]
and we define $\tilde{\Omega}^k_{\mathrm{adm}}(M) := \Omega^{n - k}_{\mathrm{adm}}(M)$ for $k = 0, \ldots, n - 1$.  The multi-Poisson bracket is then an operator 
\[
	\{\cdot, \cdot \} : 
	\tilde{\Omega}^k_{\mathrm{adm}}(M) \times 
	\tilde{\Omega}^l_{\mathrm{adm}}(M) \rightarrow 
	\tilde{\Omega}^{k + l}_{\mathrm{adm}}(M).
\]
For multisymplectic manifolds, \cite{CIdeLe96} show that $\tilde{\Omega}^\ast_{\mathrm{adm}}$ is endowed with the structure of a graded \newtext{Lie algebra}, where the graded Jacobi identity is satisfied up to closed forms.  Their proof can be easily adapted to the case of multi-Dirac structures, but we defer a comprehensive discussion of these results to a forthcoming paper.

\paragraph{Relation with Higher-Order Dirac Structures.}   

We now show that multi-Dirac structures are naturally related to \emph{higher-order Dirac structures} as defined by \cite{Zambon2010}.  A higher-order Dirac structure of order $p$ on a manifold $M$ is an isotropic subbundle of $TM \oplus \Lambda^p(M)$ equipped with the following pairing: 
\begin{equation} \label{hopairing}
	\left<\!\left< (X, \alpha), (Y, \beta) \right>\!\right> 
		= \mathbf{i}_X \beta + \mathbf{i}_Y \alpha.
\end{equation}
and which is involutive under the \emph{Courant-Dorfmann bracket}:
\begin{equation} \label{CDbracket}
	\left[\!\left[ (X, \alpha), (Y, \beta) \right]\!\right] =
		( [X, Y], \pounds_X \beta - \mathbf{i}_Y \alpha). 
\end{equation}

Let $D = D_1, \ldots, D_{n + 1}$ be an integrable multi-Dirac structure of degree $n + 1$.  Recall that the multi-Courant bracket is a map from $D_r \times D_s$ to $D_{r + s - 1}$, so that $D_1$ is closed under the multi-Courant bracket in its own right.  We claim that $D_1$ is a higher-order Dirac structure of degree $n + 1$.  First of all, note that $D_1 \subset P_1 = TM \oplus \Lambda^{n+1}(M)$.  The antisymmetric pairing \eqref{gsym}, restricted to $P_1$, is given by 
\[
\left<\left< (\mathcal{X},  \Sigma), (\bar{\mathcal{X}}, \bar{\Sigma}) \right>\right>_{-}=\frac{1}{2}\left( \mathbf{i}_{\bar{\mathcal{X}}}\Sigma + \mathbf{i}_{\mathcal{X}}\bar{\Sigma} \right),
\]
which agrees with \eqref{hopairing} up to an insignificant multiplicative factor of $1/2$.  Secondly, the multi-Courant bracket restricted to sections of $D_1$ is given by 
\[
\left[\!\!\left[\left(\mathcal{X},\Sigma\right), \left(\bar{\mathcal{X}}, \bar{\Sigma}\right) \right]\!\!\right]_{1,1}\\
=
\left( [\mathcal{X},\bar{\mathcal{X}}], \; \pounds_{\mathcal{X}}\bar{\Sigma}-\pounds_{\bar{\mathcal{X}}}\Sigma+
\mathbf{d} \mathbf{i}_{\bar{\mathcal{X}}} \Sigma \right),
\]
which is nothing but the Courant-Dorfmann bracket \eqref{CDbracket}.

\newtext{
In this way, we proved that the correspondence between multi-Dirac and higher-order Dirac structures, given by mapping $D_1, \ldots, D_{n + 1}$ to $D_1$ is injective.  It is easy to show that this mapping preserves the Courant-Dorfmann bracket, so that integrable multi-Dirac structures are mapped to integrable higher-order Dirac structures.  \cite{Zambon2010}[Prop.~4.2] shows that this correspondence is a bijection, so that multi-Dirac and higher-order Dirac structures are \emph{equivalent}.
}

\section{Lagrange-Dirac Field Theories}

We now consider Lagrangian field theories in the context of multi-Dirac structures.
Recall that $X$ is an oriented manifold with $\dim X = n + 1$ and with a fixed volume form $\eta$, locally given by $\eta=d^{n+1}x$. Recall also that $\pi_{XY}: Y \rightarrow X$ is a fiber bundle and consider a Lagrangian density $\mathcal{L}$ on $J^1 Y$.  Let $\Omega$ be the canonical multisymplectic structure on $Z$ and recall the pre-multisymplectic $(n+2)$-form defined on the Pontryagin bundle  $M=J^{1}Y \oplus Z$ by $\Omega_{M}=\pi_{ZM}^{\ast}\Omega$.
  
Using $\Omega_M$ we can define a multi-Dirac structure $D = D_1, \ldots, D_{n + 1}$ of degree $n + 1$ on $M$ by the construction of proposition~\ref{prop: canondirac}.  We refer to $D$ as the {\bfi canonical multi-Dirac structure} on $M$.  Explicitly, $D = D_1, \ldots, D_{n + 1}$ is given by 
\begin{equation} \label{canondirac}
	D_r = \{ (\mathcal{X}, \mathbf{i}_{\mathcal{X}} \Omega_M): \quad \mathcal{X} \in T^r M \}.
\end{equation}
for $r = 1, \ldots, n + 1$.  

In this section, we introduce the concept of {\bfi Lagrange-Dirac field theories}, which are field theories whose field equations are specified in terms of a multi-Dirac structure.   More precisely, we shall see that only the component $D_{n + 1} \subset T^{n + 1} M \times \Lambda^1(M)$ of the multi-Dirac structure is needed for the formulation of the field equations.   We then show that the implicit Euler-Lagrange equations \eqref{ELFieldEqn-M} for an example of a Lagrange-Dirac field theory, as they can be described in terms of the canonical multi-Dirac structure $D$.  In the next section, we will address the case of field theories with {\it nonholonomic constraints}.  These field theories can be interpreted as Lagrange-Dirac systems as well, where now the nonholonomic constraints are incorporated in the specification of the multi-Dirac structure.

\paragraph{Partial Multi-vector Fields.}

Let $\mathcal{X}$ be an $(n + 1)$-multivector field on $M$.  We say that $\mathcal{X}$ is a {\bfi partial multivector field} if $\mathcal{X}$ satisfies 
\[
	T^{n + 1} \pi_{ZM} \circ \mathcal{X} = 0,
\]
where $T^{n + 1} \pi_{ZM} : T^{n + 1} M \rightarrow T^{n + 1} Z$ is the natural projection.  In other words, $\mathcal{X}$ is partial if it does not have a component along the $Z$-direction.  In the remainder of this section, we will always consider partial multivector fields which are \emph{integrable}, \emph{decomposable} and satisfy the \emph{normalization condition} 
\[
	\mathbf{i}_{\mathcal{X}}(\pi_{XZ}^{\ast}\eta)=1.
\]
See section~\ref{sec:geometry} for a definition of these properties.  Locally, such multivector fields can be written as 
\begin{equation} \label{multvectf}
	\bar{\mathcal{X}} = \bigwedge_{\mu = 1}^{n + 1} \bar{\mathcal{X}}_{\mu} = \bigwedge_{\mu = 1}^{n + 1}
		\left( \frac{\partial}{\partial x^\mu}
			+ C^A_\mu \frac{\partial}{\partial y^A}
			+ C_{A\mu}^\nu \frac{\partial}{\partial p_A^\nu}
			+ C_\mu \frac{\partial}{\partial p} \right).
\end{equation}
where $C^A_\mu$, $C_{A\mu}^\nu$ and $C_\mu$ are the local component functions of $\mathcal{X}$ on $M$.  Note that the component of the multivector field along $\partial/\partial{v^{A}_{\mu}}$ is zero.

\paragraph{Lagrange-Dirac Field Theories.}  

We now define a special class of field theories, whose field equations are specified in terms of a Lagrangian $\mathcal{L}$ and a multi-Dirac structure $D$ on the Pontryagin bundle $M$.

\begin{definition}
Let $D = D_1, \ldots, D_{n + 1}$ be a multi-Dirac structure of degree $n + 1$ on the Pontryagin bundle $M$ and consider a Lagrangian density $\mathcal{L}$ with associated generalized energy $E$ given by \eqref{generalizedenergy}.  A {\bfi Lagrange-Dirac system for field theories} is defined by a triple $(\mathcal{X},E, D_{n+1})$, where $\mathcal{X}$ is a normalized, decomposable partial vector field, so that 
\begin{equation}\label{cond_ILS_M}
(\mathcal{X}, (-1)^{n+2}\mathbf{d}E) \in D_{n+1}.
\end{equation}
\end{definition}

Note that in the definition of a Lagrange-Dirac system, only the highest-order component $D_{n + 1}$ of the multi-Dirac structure appears.  

At this point, we stress that the multi-Dirac structure $D$ is arbitrary.  However, when $D$ is the canonical multi-Dirac structure induced by $\Omega_M$, the resulting Lagrange-Dirac equations \eqref{cond_ILS_M} are nothing but the implicit Euler-Lagrange equations \eqref{ELFieldEqn-M} obtained from the Hamilton-Pontryagin variational principle, as we now show.

\begin{theorem} \label{thm:LDeq} 
Let $D = D_1, \ldots, D_{n + 1}$ be the canonical multi-Dirac structure on $M$ induced by $\Omega_M$ as in \eqref{canondirac}.  The Lagrange-Dirac system $(\mathcal{X},E,D_{n+1})$ in equation  \eqref{cond_ILS_M} induces {\bfi implicit Euler-Lagrange equations for field theories}:
\begin{equation} \label{imp_EL}
\mathbf{i}_{\mathcal{X}}\Omega_{M}=(-1)^{n+2}\mathbf{d}E, \qquad \mathbf{i}_{\mathcal{X}}(\pi_{XM}^{\ast}\eta)=1,
\end{equation}
which are written in local coordinates as
\begin{equation} \label{eqm-LDS}
	\frac{\partial p_A^\mu}{\partial x^\mu} =  \frac{\partial L}{\partial y^A}, \quad
	\frac{\partial y^A}{\partial x^\mu} = v^A_\mu, \quad
	p_A^\mu = \frac{\partial L}{\partial v^A_\mu},
\end{equation}
together with
\begin{equation} \label{encons-LDS}
	\frac{\partial}{\partial x^\mu} \left( p+p_A^\nu v^A_\nu - L \right) = 0.
\end{equation}

\end{theorem}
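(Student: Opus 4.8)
The plan is to separate the intrinsic statement \eqref{imp_EL} from the coordinate computation, since the former is essentially definitional. By the construction \eqref{canondirac}, the top component of the canonical multi-Dirac structure is exactly the graph $D_{n+1} = \{(\mathcal{X}, \mathbf{i}_{\mathcal{X}}\Omega_M) : \mathcal{X} \in T^{n+1}M\}$, so the defining membership \eqref{cond_ILS_M}, namely $(\mathcal{X}, (-1)^{n+2}\mathbf{d}E) \in D_{n+1}$, holds if and only if $\mathbf{i}_{\mathcal{X}}\Omega_M = (-1)^{n+2}\mathbf{d}E$. The second equation in \eqref{imp_EL}, $\mathbf{i}_{\mathcal{X}}(\pi_{XM}^\ast\eta)=1$, is just the normalization condition, which is part of the hypothesis that $\mathcal{X}$ is a normalized decomposable partial multivector field. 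Hence \eqref{imp_EL} requires no work, and the whole content is to turn $\mathbf{i}_{\mathcal{X}}\Omega_M = (-1)^{n+2}\mathbf{d}E$ into \eqref{eqm-LDS}--\eqref{encons-LDS}.

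For the coordinate form I would substitute the local expression \eqref{multvectf} for $\mathcal{X}$, the expression \eqref{msform}, $\Omega_M = dy^A \wedge dp_A^\mu \wedge d^n x_\mu - dp \wedge d^{n+1}x$, and $E = p + p_A^\mu v_\mu^A - L$ into both sides. The right-hand side is immediate:
\[
(-1)^{n+2}\mathbf{d}E = (-1)^{n}\Big( dp + v_\mu^A\,dp_A^\mu + \big(p_A^\mu - \frac{\partial L}{\partial v_\mu^A}\big)\,dv_\mu^A - \frac{\partial L}{\partial x^\mu}\,dx^\mu - \frac{\partial L}{\partial y^A}\,dy^A \Big),
\]
a $1$-form, which matches the degree of $\mathbf{i}_{\mathcal{X}}\Omega_M$ (an $(n+1)$-multivector contracted into an $(n+2)$-form). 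Since neither $\Omega_M$ nor $\mathcal{X}$ involves the velocity directions (the component of $\mathcal{X}$ along $\partial/\partial v^A_\mu$ vanishes), $\mathbf{i}_{\mathcal{X}}\Omega_M$ carries no $dv^A_\mu$ term, so matching the $dv^A_\mu$ coefficient forces $p_A^\mu = \partial L/\partial v_\mu^A$, the covariant Legendre transformation \eqref{CovLegTrans} and the last relation in \eqref{eqm-LDS}.

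The remaining coefficients I would read off by evaluating $\mathbf{i}_{\mathcal{X}}\Omega_M$ on coordinate vector fields. Contracting $-dp\wedge d^{n+1}x$ against the normalized $\mathcal{X}$ yields $(-1)^{n}(dp - C_\mu\,dx^\mu)$, using $\mathbf{i}_{\mathcal{X}}\,d^{n+1}x = 1$; matching the $dp$ coefficient just reproduces the normalization identity. For the first term I would use $d^n x_\mu = \partial_\mu \,\lrcorner\, d^{n+1}x$ together with the fact that the block $dx^\nu(\mathcal{X}_\mu)=\delta^\nu_\mu$ is the identity; evaluating $\mathbf{i}_{\mathcal{X}}(dy^A\wedge dp_A^\mu\wedge d^n x_\mu)$ on $\partial/\partial p_B^\sigma$ and on $\partial/\partial y^B$ collapses to single surviving minors $\pm C^B_\sigma$ and $\pm C^\mu_{B\mu}$, giving the algebraic relations $C^A_\mu = v^A_\mu$ and $C^\mu_{A\mu}=\partial L/\partial y^A$ among the component functions. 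Interpreting these on an integral manifold of $\mathcal{X}$ --- which, as recorded in Section~\ref{sec:geometry}, is the image of a section $\psi$ whose components obey $\partial y^A/\partial x^\mu = C^A_\mu$, $\partial p_A^\nu/\partial x^\mu = C^\nu_{A\mu}$ and $\partial p/\partial x^\mu = C_\mu$ --- turns them into the first two equations of \eqref{eqm-LDS}. The energy equation \eqref{encons-LDS} then comes from matching the $dx^\mu$ coefficients, which fixes $C_\mu$ (equivalently $\partial p/\partial x^\mu$); substituting the already-derived relations into the total derivative of $E$ along $\psi$ makes the $dv$-terms and the $\partial L/\partial y^A$-terms cancel in pairs, leaving exactly $\partial(p+p_A^\nu v_\nu^A - L)/\partial x^\mu = 0$.

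I expect the main obstacle to be this last $dx^\mu$ matching together with the global sign bookkeeping. The contraction of $\mathcal{X}$ with the $dx^\mu$ directions is the delicate part, precisely because $\partial/\partial x^\mu$ is not transverse to $\mathcal{X}$ (each leg $\mathcal{X}_\mu$ already carries the horizontal part $\partial/\partial x^\mu$), so the clean minor arguments available in the vertical directions must be handled more carefully there. Tracking the factors $(-1)^{\mu-1}$ produced by $d^n x_\mu$ and the overall $(-1)^n$ is where sign errors are most likely, and verifying that they conspire to give exactly the energy balance \eqref{encons-LDS} is the only genuinely nonroutine step.
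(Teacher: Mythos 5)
Your proposal is correct and follows essentially the same route as the paper's proof: the paper likewise treats \eqref{imp_EL} as immediate from the graph description of $D_{n+1}$, substitutes the local expressions for $\mathcal{X}$, $\Omega_M$ and $\mathbf{d}E$, matches coefficients (the contraction formula \eqref{contract}, whose quadratic $dx^\mu$-coefficient is exactly the delicate step you flag, is carried out in its appendix, section~\ref{sec:comp}), and then passes to integral sections of $\mathcal{X}$ to obtain \eqref{eqm-LDS} and \eqref{encons-LDS}. The pairwise cancellation you describe for the energy equation is precisely what the paper's ``simple manipulations'' amount to, and it does go through as you predict.
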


\begin{proof}
Recall the pre-multisymplectic form $\Omega_{M}$ is locally given by
\[
\Omega_{M}=dy^{A} \wedge dp_{A}^{\mu}\wedge d^{n}x_{\mu}-dp \wedge d^{n+1}x.
\]
Note that the fiber coordinates $y^A, v^A_\mu, p_A^\mu, p$ appear at most twice in this expression, so that the contraction of $\Omega_M$ with three or more $\pi_{XM}$-vertical vector fields vanishes.

Now, the differential of the generalized energy 
\[
E(x^{\mu},y^{A},v^{A}_{\mu},p_{A}^{\mu},p)=p+p_{A}^{\mu}v^{A}_{\mu}-L(x^{\mu},y^{A},v^{A}_{\mu})
\]
on $M$ is given by
\[
\begin{split}
\mathbf{d}E&=\frac{\partial E}{\partial x^{\mu}}dx^{\mu}+\frac{\partial E}{\partial y^{A}}dy^{A}+\frac{\partial E}{\partial v^{A}_{\mu}}dv^{A}_{\mu}+\frac{\partial E}{\partial p_{A}^{\mu}}dp_{A}^{\mu}+\frac{\partial E}{\partial p}dp\\
&=\left(-\frac{\partial L}{\partial x^{\mu}}\right)dx^{\mu}+\left(-\frac{\partial L}{\partial y^{A}} \right)dy^{A}+\left(p_{A}^{\mu}-\frac{\partial L}{\partial v^{A}_{\mu}} \right)dv^{A}_{\mu}
+v^{A}_{\mu}dp_{A}^{\mu}+dp.
\end{split}
\]

The local expression for the $(n + 1)$-partial multivector field $\mathcal{X}$ is given by \eqref{multvectf}, and a computation in local coordinates (see section~\ref{sec:comp}) shows that 
\begin{equation} \label{contract}
	\mathbf{i}_\mathcal{X} \Omega_{M} = 
		(-1)^{n+2} \left[
			(C^A_\mu C_{A\lambda}^\lambda - 
			 C^A_\lambda C^\lambda_{A\mu} - C_\mu) dx^\mu 
			 + C^A_\mu dp_A^\mu + C_{A\mu}^\mu dy^A + dp
		\right].
\end{equation}
As a result, we have that \eqref{imp_EL} holds if and only if 
\begin{equation} \label{comp1}
	C^A_\mu C_{A\lambda}^\lambda - C^A_\lambda C^\lambda_{A\mu}
	- C_\mu =  -\frac{\partial L}{\partial x^\mu}
\end{equation}
as well as 
\begin{equation} \label{comp2}
	C^A_\mu = v^A_\mu, 
		\quad 
	C_{A\mu}^\mu = -\frac{\partial L}{\partial y^A}, 
		\quad
	p_A^\mu = \frac{\partial L}{\partial v^A_\mu}.
\end{equation}
Consider now an integral section $\psi: X \rightarrow M$ of $\mathcal{X}_{n+1}$, locally represented by $\psi(x) = (x^\mu, y^A(x), y^A_\mu(x), p_A^\mu(x), p(x))$.  From the equations above, it follows that $\psi$ satisfies the following system of PDEs:
\[
	\frac{\partial y^A}{\partial x^\mu} = C^A_\mu, \quad
	\frac{\partial p_A^\nu}{\partial x^\mu} = C_{A\mu}^\nu, \quad 
	\frac{\partial p}{\partial x^\mu} = C_\mu
\]
where the local component functions $C^A_\mu$, $C_{A\mu}^\nu$ and $C_\mu$ are given by \eqref{comp1} and \eqref{comp2}.  After some simple manipulations, it can then be seen that $\psi$ satisfies the following equivalent set of equations:
\[
	\frac{\partial p_A^\mu}{\partial x^\mu} =  \frac{\partial L}{\partial y^A}, \quad
	\frac{\partial y^A}{\partial x^\mu} = v^A_\mu, \quad
	p_A^\mu = \frac{\partial L}{\partial v^A_\mu},
\]
together with
\[
	\frac{\partial}{\partial x^\mu} \left( p+p_A^\nu v^A_\nu - L \right) = 0.
\]
These are precisely the equations listed in the theorem statement.
\end{proof}

The equation \eqref{encons-LDS} can be integrated to $E=p+p_A^\nu v^A_\nu - L=\alpha$,  where $\alpha$ is a constant.  Without loss of generality we may impose the generalized energy constraint by setting $\alpha=0$ to obtain
 \[
 E=p+p_A^\nu v^A_\nu-L=0.
 \]
From this and \eqref{eqm-LDS}, we recover the covariant Legendre transformation
\[
p=L- \frac{\partial L}{\partial v^A_\mu} v^A_\nu, \quad p_A^\mu = \frac{\partial L}{\partial v^A_\mu}.
\]

\paragraph{Matrix Representation.}
Associated with the implicit Euler-Lagrange equations in \eqref{imp_EL}, 
the Lagrange-Dirac system given in equation \eqref{imp_EL} can be also denoted by
\[
\Omega_{M}^{\flat}\mathcal{X}=\mathbf{d}E, \qquad \mathbf{i}_{\mathcal{X}}(\pi_{XZ}^{\ast}\eta)=1,
\]
where a bundle map $\Omega_{M}^{\flat}: T^{n+1}M \to \Lambda^{1}M$ is defined by the pre-multisymplectic structure $\Omega_{M}$ by, for every $\mathcal{X} \in T^{n+1}M$,
\[
\mathbf{i}_{\mathcal{X}}\Omega_{M}=\Omega_{M}^{\flat}(\mathcal{X}).
\]
Then, one obtains a matrix representation as
\begin{equation*}\label{MatRepLDS}
\begin{split}
\left(
\begin{array}{c}
-\frac{\partial L}{\partial x^{\mu}} \vspace{2mm}\\
-\frac{\partial L}{\partial y^{A}} \vspace{2mm}\\
p^{\mu}_{A}-\frac{\partial L}{\partial v^{A}_{\mu}} \vspace{2mm}\\
v^{A}_{\mu} \vspace{2mm}\\
1
\end{array}
\right)
&=
\left(
\begin{array}{ccccc}
0 & 0& 0& 0&-1\\
0 & 0 & 0 & -1 & 0\\
0 & 0 & 0 & 0& 0\\
0 & 1 & 0 & 0 & 0\\
1 & 0 &0 & 0 & 0
\end{array}
\right)
\left(
\begin{array}{c}
\partial_{ x^{\mu}} x^{\mu}\vspace{1mm} \\
\partial_{ x^{\mu}} y^{A}  \vspace{1mm} \\
\partial_{ x^{\mu}} v^{A}_{\mu}  \vspace{1mm} \\
 \partial_{ x^{\mu}} p_{A}^{\mu}  \vspace{1mm} \\
\partial_{ x^{\mu}} p 
\end{array}
\right). 
\end{split}
\end{equation*}

Note that the above matrix representation for the Lagrange-Dirac formalism is related to the matrix representation of the multisymplectic formalism in \cite{Brid1997, MaPaSh1998}.

\begin{theorem}
The Lagrange-Dirac system $(\mathcal{X},E,D_{n+1})$ satisfies the condition of the {\bfi conservation of the generalized energy} as
\begin{equation} \label{econs}
	\mathcal{X} \prodint \mathbf{d} E = 0.
\end{equation}
In other words, along the solution of the Lagrange-Dirac system $(\mathcal{X},E,D_{n+1})$, namely, the integral manifold of $\mathcal {X}$, the generalized energy $E$ is constant. 
\end{theorem}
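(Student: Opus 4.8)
The statement to prove is that for a Lagrange-Dirac system $(\mathcal{X}, E, D_{n+1})$, one has $\mathcal{X} \prodint \mathbf{d} E = 0$, i.e. the contraction of the $(n+1)$-multivector field $\mathcal{X}$ with the one-form $\mathbf{d}E$ vanishes. The plan is to exploit the defining condition $(\mathcal{X}, (-1)^{n+2}\mathbf{d}E) \in D_{n+1}$ together with the isotropy property of the multi-Dirac structure. First I would recall that $D_{n+1}$ is built from the pre-multisymplectic form $\Omega_M$, so membership in $D_{n+1}$ means precisely $\mathbf{i}_{\mathcal{X}}\Omega_M = (-1)^{n+2}\mathbf{d}E$, as recorded in \eqref{imp_EL}. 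The key algebraic fact is that contracting a multivector with the one-form it itself produces via $\Omega_M$ must vanish by antisymmetry.

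The cleanest route is to apply the maximally isotropic (Lagrangian) property \eqref{LagProp} with $r = s = n+1$. Since $(\mathcal{X}, (-1)^{n+2}\mathbf{d}E)$ lies in $D_{n+1}$, pairing this element with itself under the antisymmetric pairing $\langle\!\langle\,\cdot,\cdot\,\rangle\!\rangle_-$ gives zero. Writing out \eqref{LagProp} with both slots equal to $(\mathcal{X}, \Sigma)$ where $\Sigma = (-1)^{n+2}\mathbf{d}E$, I obtain $\mathbf{i}_{\mathcal{X}}\Sigma - (-1)^{(n+1)^2}\mathbf{i}_{\mathcal{X}}\Sigma = 0$. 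Here the subtlety is the sign $(-1)^{(n+1)^2} = (-1)^{n+1}$; when $n+1$ is odd this reduces the expression to $2\,\mathbf{i}_{\mathcal{X}}\Sigma = 0$ immediately, but when $n+1$ is even the two terms cancel trivially and one gets $0=0$, which is uninformative. Hence the self-pairing argument alone does not settle all parities.

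To handle the even case uniformly, I would instead argue directly from $\mathbf{i}_{\mathcal{X}}\Omega_M = (-1)^{n+2}\mathbf{d}E$ by contracting once more with $\mathcal{X}$. Since $\mathcal{X}$ is an $(n+1)$-multivector and $\Omega_M$ is an $(n+2)$-form, $\mathbf{i}_{\mathcal{X}}\Omega_M$ is a one-form, and $\mathbf{i}_{\mathcal{X}}\mathbf{i}_{\mathcal{X}}\Omega_M$ is obtained by contracting $\mathcal{X}$ into this one-form. For a decomposable $\mathcal{X} = X_1 \wedge \cdots \wedge X_{n+1}$, the expression $\mathbf{i}_{\mathcal{X}}\mathbf{i}_{\mathcal{X}}\Omega_M$ involves contracting the same collection of $n+1$ vector fields twice into $\Omega_M$, and by the graded antisymmetry of interior multiplication any such repeated full contraction vanishes. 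Therefore $\mathcal{X} \prodint \mathbf{d}E = (-1)^{n+2}\,\mathbf{i}_{\mathcal{X}}\mathbf{i}_{\mathcal{X}}\Omega_M = 0$, which is exactly \eqref{econs}.

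The main obstacle is the bookkeeping of signs and the degenerate parity case noted above; the self-pairing shortcut is attractive but incomplete, so the robust argument is the direct double-contraction $\mathbf{i}_{\mathcal{X}}\mathbf{i}_{\mathcal{X}}\Omega_M = 0$. Finally, to interpret this as conservation of generalized energy, I would note that along an integral section $\psi$ of $\mathcal{X}$, the vanishing of $\mathcal{X}\prodint\mathbf{d}E$ means the total derivative of $E$ in the spacetime directions spanned by $\mathcal{X}$ is zero, so $E$ is constant on the integral manifold; this matches the local computation \eqref{encons-LDS} which already shows $\partial_\mu(p + p_A^\nu v_\nu^A - L) = 0$.
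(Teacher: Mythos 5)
Your final argument is exactly the paper's proof: from $\mathbf{i}_{\mathcal{X}}\Omega_M = (-1)^{n+2}\mathbf{d}E$ you write $\mathcal{X} \prodint \mathbf{d}E = (-1)^{n+2}\,\mathcal{X} \prodint \left(\mathcal{X} \intprod \Omega_M\right)$ and conclude that this vanishes for decomposable $\mathcal{X}$ by the impossibility of contracting the same vector field twice, which is precisely corollary~\ref{cor:prodprod} (proved via lemma~\ref{lemma:decomp}) invoked in the paper. The self-pairing detour you rightly abandon is indeed uninformative (for $n \ge 1$ the pairing at $r = s = n+1$ is identically zero for degree reasons, since it would land in $\Lambda^{n+2-2(n+1)}(M)$, and it involves the left rather than the right interior product), but since your actual proof is the direct double contraction, it coincides with the paper's.
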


\begin{proof}
The proof relies on lemma~\ref{lemma:decomp} for decomposable vector fields.  Using the field equations, the inner product on the left-hand side of \eqref{econs} can be written as 
\[
	\mathcal{X} \prodint \mathbf{d} E = \mathcal{X} \prodint (\mathcal{X} \intprod \Omega_{M}) = 0,
\]
according to corollary~\ref{cor:prodprod}.
\end{proof}

To see why the previous theorem implies energy conservation, decompose the multivector field $\mathcal{X}$ as in \eqref{multvectf}.  Using lemma~\ref{lemma:decomp}, the interior product can then be written as 
\[
	\mathcal{X} \prodint \mathbf{d} E = 
	\sum_{\mu = 1}^k (-1)^{\mu + 1} \left<\mathcal{X}_\mu, 
		\mathbf{d} E \right> 
				\hat{\mathcal{X}}_\mu
\]
where $\hat{\mathcal{X}}_\mu$ is the $n$-multivector field obtained by deleting $\mathcal{X}_\mu$ from $\mathcal{X}$, i.e. 
\[
		\hat{\mathcal{X}}_\mu = \bigwedge_{\stackrel{\lambda = 1}{\lambda \ne \mu}}^{n+1} \mathcal{X}_\lambda.
\]
Since the multivector fields $\hat{\mathcal{X}}_\mu$ are linearly independent, the energy conservation equation \eqref{econs} then implies that $\left<\mathcal{X}_\mu, \mathbf{d} E \right> = 0$ for $\mu = 1, \ldots, n + 1$.  In other words, the function $E$ is constant on the integral sections of $\mathcal{X}$.

\section{Nonholonomic Lagrange-Dirac Field Theories}

In this section, we develop the idea of {\it nonholonomic Lagrange-Dirac field theories}. First, we review the theory of mechanical systems in the presence of nonholonomic constraints, in order to get an idea of the corresponding structures for field theory.  Secondly, we define a {\it multi-Dirac structure with nonholonomic constraints} by analogy with the induced Dirac structure in nonholonomic mechanics as in \cite{YoMa2006a, YoMa2006b}.  We focus especially on the case of \emph{affine constraints}, in which case the induced multi-Dirac structure takes on a particularly elegant form.  Then, we show how the nonholonomic Lagrange-Dirac field theory can be developed in the context of the induced nonholonomic multi-Dirac structure.

\paragraph{Lagrange-Dirac Systems in Nonholonomic Mechanics.}

Before going into details on nonholonomic Lagrange-Dirac field theories, let us make a brief review on nonholonomic mechanics in the context of Dirac structures. 
 
Let $Q$ be a configuration manifold and let $M=TQ \oplus T^{\ast}Q$ be the Pontryagin bundle over $Q$. Recall from \cite{YoMa2006a} that a Dirac structure $D_{\Delta_{Q}}$ on  $Z=T^{\ast}Q$  can be induced from a constraint distribution $\Delta_{Q}$ on $Q$, where $\Delta_{Q}$ is not integrable in general, namely, the constraint is {\it nonholonomic}. Let $L$ be a Lagrangian on $TQ$, possibly degenerate and let $E(q,v,p)=\left< p, v\right>-L(q,v)$ be the generalized energy on $M$.  Furthermore, we define a partial vector field in this context to be a map $X : M \rightarrow TM$ such that 
\[
	T \pi_{TQ, M} \circ X = 0,
\]
where $\pi_{TQ, M}: M \rightarrow TQ$ is the projection onto the first factor.  In local coordinates, a partial vector field $X$ can be written as $X(q, v, p) = (q, v, p, \dot{q}, 0, \dot{p})$.

We can develop the Lagrange-Dirac dynamical system in the context of the {\bfi induced Dirac structure on Pontryagin bundle} $M$ (see \cite{YoMa2006b} and \cite{CMRY2010}). To do this, let $\Delta_{M} \subset TM$  be the distribution on $M$ defined by $\Delta_{M} = (T\pi_{QM})^{-1}(\Delta_{Q}),$ where $\pi_{QM} : M \rightarrow Q$ is the Pontryagin bundle projection, which in coordinates is denoted by $\pi_{QM}(q,v,p) = q$. Since a pre-symplectic form $\Omega_{M}$ is defined by using the projection $\pi_{ZM}: M \to Z$ as 
\[
\Omega_{M}=\pi_{ZM}^{\ast}\Omega,
\]
we can define an induced Dirac structure $D_{\Delta_{M}} \subset TM \oplus T^\ast M$ on
$M$ by, for $m \in M$,
\begin{align*}\label{induceddirac_M}
D_{M}(m)  =\{\, (w_{m}, \beta_{m})  &
\in T_{m}M \times T^{\ast}_{m}M \mid w_m \in
\Delta_{M}(m),  \nonumber \\
  & \qquad \qquad \; \mbox{and} \;\;
\beta_{m}-\Omega_{M}^{\flat}(m) \cdot w_{m}
\in \Delta^{\circ}_{M}(m) \, \}.
\end{align*}
In coordinates $(q,v,p)$ for $m \in M$, one has
\begin{equation*}\label{bardiracstructure}
\begin{split}
D_{\Delta_{M}}(q,v,p)
=
\{
((\dot{q}, \dot{v}, \dot{p}), (\alpha, \gamma, \beta)) \mid \dot{q} \in
\Delta(q),
 \alpha + \dot{p} \in \Delta^\circ (q),\,\beta = \dot{q}, \gamma = 0
\},
\end{split}
\end{equation*}
where the constraint set $\Delta_Q$  defines a subspace of $TQ$,
which is expressed in a local trivialization $U \subset Q$ by $\Delta(q) \subset \mathbb{R}^n$ at each point
$q \in U $.

A nonholonomic Lagrange-Dirac dynamical system is then specified as
\begin{equation*}\label{ImpLagSysZ}
(X, \mathbf{d}E) \in D_{\Delta_{M}},
\end{equation*} 
from which we obtain
\begin{equation}\label{int-ImpLagSys}
\mathbf{d}E-\Omega_{M}^{\flat} \cdot X \in \Delta_{M}^{\circ} \quad \mbox{and} \quad X \in \Delta_{M}. 
\end{equation}
In local coordinates, the nonholonomic Lagrange-Dirac dynamical system in \eqref{int-ImpLagSys} is represented as
\begin{equation}\label{Local-ImpLagSys}
\dot{p}
-
\frac{\partial L}{\partial q} 
\in 
\Delta^\circ (q),
\quad
\dot{q}
=
v,
\quad
p -\frac{\partial L}{\partial v} 
= 
0,
\quad
\dot{q}
\in
\Delta (q).
\end{equation}

\paragraph{Nonholonomic Constraints for Field Theories.}

We now show how multi-Dirac structures make an appearance in the theory of classical field theories with nonholonomic constraints by analogy with nonholonomic mechanics.  For simplicity, we restrict ourselves to the case of constraints that are affine in the multi-velocities --- the general case may be dealt with in a similar way as in \cite{Vank2005}.

\paragraph{Affine Nonholonomic Constraints.}
Consider a fiber bundle $\pi_{XY}: Y \rightarrow X$, where $X$ is an oriented manifold with $\dim X = n + 1$ and with a fixed volume form $\eta$, locally given by $\eta=d^{n+1}x$. Let $\Delta_{Y}$ be a distribution on $Y$.  By following \cite{Krupkova05}, we say that $\Delta_{Y}$ is {\bfi weakly horizontal} when there exists a distribution $W$ contained in the vertical bundle $VY$ such that 
\[
	\Delta_{Y} \oplus W = TY.
\] 
When $W$ is the whole of $VY$, the distribution $\Delta_{Y}$ is nothing but the horizontal bundle of an Ehresmann connection on $Y$.  More information on weakly horizontal distributions and their role in nonholonomic field theory can be found in \cite{KrupkovaVolny2}.  In coordinates $(x^{\mu}, y^{A})$ on $Y$, the annihilator $\Delta_{Y}^{\circ}$ is spanned by $k$ local one-forms $\varphi^{\alpha}$ that are given by
\[
\varphi^{\alpha}(x^{\mu}, y^{A})=A^{\alpha}_{A}(x, y) dy^{A} +A^{\alpha}_{\mu}(x, y) dx^{\mu}, \quad \alpha=1,...,k <N,
\]
where the rank of the matrix $A^{\alpha}_{A}(x, y)$ is $k$.  

A weakly horizontal distribution $\Delta_{Y}$ defines an affine subbundle $\mathcal{C}$ of $J^1 Y$ as follows.  We recall that an element $\gamma=j^{1}_{x}\phi \in J^{1}Y$ may be viewed as an injective linear map $\gamma=T_{x}\phi: T_{x}X \to T_{\phi(x)}Y$ such that $T \pi_{XY} \circ \gamma = \mathrm{Id}_{T_x X}$.  The elements of $\mathcal{C}$ are then the one-jets $\gamma$ taking values in $\Delta_{Y}$:
\[
	\gamma \in \mathcal{C} \quad \text{if} \quad
	\mathrm{Im}\, \gamma \subset \Delta_{Y}(y).	
\]
Locally, $\mathcal{C}$ can be characterized as follows.  Since $\gamma$ is given in coordinates by
\[
\gamma = d x^\mu \otimes \left( \frac{\partial}{\partial x^\mu} + v^A_\mu \frac{\partial}{\partial y^A} \right),
\]
we have that $\gamma$ takes values in $\Delta_Y$ iff 
\begin{equation} \label{affconstr}
	\psi^\alpha_\mu(x, y, v) \equiv 
	A^\alpha_\mu(x, y) + A^\alpha_A(x, y) v^A_\mu = 0.
\end{equation}
In other words, $\mathcal{C}$ is locally determined by the vanishing of the $k$ independent \emph{affine functions} $\psi^\alpha_\mu$.

From now on, we consider affine constraints of the form \eqref{affconstr} which come from a linear distribution $\Delta_{Y}$.  We say that these constraints are {\bfi nonholonomic} if $\Delta_{Y}$ is not integrable.  In order to incorporate the nonholonomic constraints into the context of multi-Dirac structures on the Pontryagin bundle $M=J^{1}Y \oplus Z$ over $Y$, we introduce a distribution $\Delta_{M}$ along $\mathcal{C}$ as follows: 
\[
	\Delta_M = (T (\pi_{YM})_{|\mathcal{C}})^{-1}(\Delta_Y).
\]
Note that $\Delta_M$ is not a distribution on the entire Pontryagin bundle, but only on the affine submanifold $\mathcal{C}$.  We denote the annihilator of $\Delta_M$ by $\Delta_M^\circ$.  Note that $\Delta_M^\circ$ is locally spanned by the following $k$ linear independent forms 
\[
	\tilde{\varphi}^\alpha = A^\alpha_A ( dy^A - v^A_\mu dx^\mu ). 
\]
Another way of defining $\tilde{\varphi}^\alpha$, which also works for nonlinear constraints, is by means of the \emph{vertical endomorphism} $S$ on $J^1 Y$.   We refer to \cite{VCdeLMD2005} for further details.

\paragraph{Nonholonomic Multi-Dirac Structures.}

Now, we show how a {\it multi-Dirac structure} on $M$ can be induced from the nonholonomic constraint distribution $\Delta_{M}$.

\newtext{
\begin{proposition} \label{prop: dirac-const}
For $r = 1, \ldots, n+1$, we define the following subbundles $D_{\Delta_M, r} \subset P_r = T^r M \times \Lambda^{n + 2 - r}(M)$, given by 
\begin{equation}\label{multi-dirac-const}
\begin{split}
D_{\Delta_{M},r}&=\left\{ (\mathcal{X}, \Sigma)  \in P_{r}  \mid  
 \;\mathbf{i}_{\mathcal{X}} \Omega_{M}-\Sigma \in 
 	\mbox{$\bigwedge^{n + 2 - r}$} (\Delta_M^\circ), \;\; 
	\mathcal{X} \in \Delta_{M} \mbox{$\bigwedge$} T^{r - 1} M \right\}.
\end{split}
\end{equation}
Then $D_{\Delta_M} = D_{\Delta_M, 1}, \ldots, D_{\Delta_M, n + 1}$ is a multi-Dirac structure of degree $n + 1$.  Namely, the sequence of bundles $D_{\Delta_{M}, r}$ satisfies the $(n + 1)$-Lagrangian property
\begin{equation} \label{isononh}
D_{\Delta_{M},r}=(D_{\Delta_{M},s})^{\perp,r}
\end{equation}
for all $r,s=1,...,n+1; r+s \le n+2$. 
\end{proposition}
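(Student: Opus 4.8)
The plan is to verify the maximally $(n+1)$-isotropic property \eqref{isononh} directly, by establishing the two inclusions $D_{\Delta_{M},r} \subseteq (D_{\Delta_{M},s})^{\perp,r}$ and $(D_{\Delta_{M},s})^{\perp,r} \subseteq D_{\Delta_{M},r}$ for each admissible pair $r,s$ with $r+s \le n+2$. The argument parallels the proof of Proposition~\ref{prop: canondirac}, but now I must carry along the constraint pieces. The bookkeeping device is to write any $(\mathcal{X},\Sigma) \in D_{\Delta_{M},r}$ as $\Sigma = \mathbf{i}_{\mathcal{X}}\Omega_{M} - \Sigma_{\Delta}$ with $\mathcal{X} \in \Delta_{M} \wedge T^{r-1}M$ and $\Sigma_{\Delta} \in \bigwedge^{n+2-r}(\Delta_M^\circ)$, as permitted by the definition \eqref{multi-dirac-const}, and similarly for $(\bar{\mathcal{X}},\bar{\Sigma}) \in D_{\Delta_{M},s}$. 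Two elementary facts will be used repeatedly: first, if $v \in \Delta_{M}$ then $\mathbf{i}_v$ annihilates every element of $\bigwedge^{\bullet}(\Delta_M^\circ)$, so that $\mathbf{i}_{\mathcal{X}}\Sigma'_{\Delta} = 0$ whenever $\mathcal{X}$ has a leg in $\Delta_{M}$ and $\Sigma'_{\Delta}$ is a wedge of annihilator one-forms; second, the graded symmetry $\mathbf{i}_{\bar{\mathcal{X}}}\mathbf{i}_{\mathcal{X}}\Omega_{M} = (-1)^{rs}\mathbf{i}_{\mathcal{X}}\mathbf{i}_{\bar{\mathcal{X}}}\Omega_{M}$.

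For the isotropy inclusion I substitute the decompositions of $\Sigma$ and $\bar{\Sigma}$ into the antisymmetric pairing \eqref{gsym}. The two double contractions of $\Omega_{M}$ cancel exactly as in Proposition~\ref{prop: canondirac} by the graded symmetry above, while the remaining cross terms $\mathbf{i}_{\bar{\mathcal{X}}}\Sigma_{\Delta}$ and $\mathbf{i}_{\mathcal{X}}\bar{\Sigma}_{\Delta}$ each vanish by the first elementary fact, since $\bar{\mathcal{X}}$ and $\mathcal{X}$ both carry a leg in $\Delta_{M}$. Hence $\left<\left< (\mathcal{X},\Sigma),(\bar{\mathcal{X}},\bar{\Sigma}) \right>\right>_{-} = 0$, which gives $D_{\Delta_{M},r} \subseteq (D_{\Delta_{M},s})^{\perp,r}$.

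The maximality inclusion is the substantive part. Given $(\bar{\mathcal{X}},\bar{\Sigma}) \in (D_{\Delta_{M},s})^{\perp,r}$, I test it against two distinguished families in $D_{\Delta_{M},s}$. Taking $(\mathcal{X},\mathbf{i}_{\mathcal{X}}\Omega_{M})$ with $\mathcal{X} \in \Delta_{M}\wedge T^{s-1}M$ arbitrary (the case $\Sigma_{\Delta}=0$), the vanishing of the pairing collapses, after the graded symmetry, to $\mathbf{i}_{\mathcal{X}}(\mathbf{i}_{\bar{\mathcal{X}}}\Omega_{M} - \bar{\Sigma}) = 0$ for all such $\mathcal{X}$. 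I would then invoke a linear-algebra lemma: an $(n+2-r)$-form $\Phi$ with $\mathbf{i}_{\mathcal{X}}\Phi = 0$ for every $\mathcal{X}\in\Delta_{M}\wedge T^{s-1}M$ lies in $\bigwedge^{n+2-r}(\Delta_M^\circ)$; this is proved by fixing $v\in\Delta_{M}$, noting that $\mathbf{i}_v\Phi$ is contracted to zero by every $(s-1)$-multivector, and concluding $\mathbf{i}_v\Phi=0$ because $\deg(\mathbf{i}_v\Phi)=n+1-r\ge s-1$ precisely when $r+s\le n+2$. This supplies the form condition $\mathbf{i}_{\bar{\mathcal{X}}}\Omega_{M}-\bar{\Sigma}\in\bigwedge^{n+2-r}(\Delta_M^\circ)$. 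Testing instead against $(0,\Sigma_{\Delta})$ with $\Sigma_{\Delta}\in\bigwedge^{n+2-s}(\Delta_M^\circ)$ forces $\mathbf{i}_{\bar{\mathcal{X}}}\Sigma_{\Delta}=0$ for all such $\Sigma_{\Delta}$, and a dual lemma then yields the multivector condition $\bar{\mathcal{X}}\in\Delta_{M}\wedge T^{r-1}M$. Together these place $(\bar{\mathcal{X}},\bar{\Sigma})$ in $D_{\Delta_{M},r}$.

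I expect the main obstacle to be exactly these two linear-algebra lemmas, and in particular the dual one used in the last step. The isolation of components (to conclude $\Phi\in\bigwedge^{n+2-r}(\Delta_M^\circ)$, respectively $\bar{\mathcal{X}}\in\Delta_{M}\wedge T^{r-1}M$) hinges on comparing the degree $n+2-s$ of the relevant constraint forms with the corank $k$ of $\Delta_{M}$: when $n+2-s>k$ the space $\bigwedge^{n+2-s}(\Delta_M^\circ)$ degenerates and the second test family becomes vacuous, so the derivation of the multivector condition must be re-examined in that regime. Getting this dimension bookkeeping right uniformly across the whole range $r+s\le n+2$ is the delicate point, and is where any additional hypothesis on $\Delta_{M}$ would need to enter.
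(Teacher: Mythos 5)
Your argument is, up to the order of two steps, the paper's own proof: the isotropy inclusion is the same substitution-and-cancellation computation as in Proposition~\ref{prop: canondirac} (graded symmetry of the double contraction of $\Omega_M$, plus vanishing of constraint forms against multivectors with a leg in $\Delta_M$), and the maximality inclusion is obtained from the same two test families --- pure graph elements $(\mathcal{X}, \mathbf{i}_{\mathcal{X}}\Omega_M)$ with $\mathcal{X} \in \Delta_M \wedge T^{s-1}M$, and pure constraint elements with $\mathcal{X}=0$ --- resolved by exactly the two linear-algebra facts that the paper isolates as Lemmas~\ref{lemma:vectorfield} and~\ref{lemma:form} of its appendix. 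Your sketch of the ``form'' lemma (contract with $v \in \Delta_M$, use the degree count $n+1-r \ge s-1$) is sound and is essentially the content of Lemma~\ref{lemma:form}.

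The reservation in your closing paragraph, however, is not merely a delicate point to be checked: it is a genuine gap, and it is present in the paper's proof as well as in yours. The paper states Lemma~\ref{lemma:vectorfield} with no hypothesis relating the form degree to the corank of $\Delta$, but when $n+2-s$ exceeds the corank $k$ of $\Delta_M$ (the number of constraint one-forms $\tilde{\varphi}^\alpha$), the space $\bigwedge^{n+2-s}(\Delta_M^\circ)$ is the zero bundle, your second test family is empty, and no condition on $\bar{\mathcal{X}}$ can be extracted; the lemma's hypothesis is vacuous while its conclusion is not. Moreover, the defect cannot be repaired inside the proof, because in that regime the statement itself fails: take $r=s=1$ with $1 \le k < n+1$. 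Then $\bigwedge^{n+1}(\Delta_M^\circ) = 0$, so $D_{\Delta_M,1} = \{(X, \mathbf{i}_X \Omega_M) : X \in \Delta_M\}$, while the antisymmetry $\mathbf{i}_{\bar{X}}\mathbf{i}_X\Omega_M = -\mathbf{i}_X\mathbf{i}_{\bar{X}}\Omega_M$ shows that $(\bar{X}, \mathbf{i}_{\bar{X}}\Omega_M) \in (D_{\Delta_M,1})^{\perp,1}$ for \emph{every} vector field $\bar{X}$, whether or not $\bar{X} \in \Delta_M$; hence $(D_{\Delta_M,1})^{\perp,1} \supsetneq D_{\Delta_M,1}$ and \eqref{isononh} fails for $(r,s)=(1,1)$ unless $\Delta_M = TM$. (Only maximality fails; the isotropy inclusion holds in all degrees, and the pairs relevant for the field equations, $(r,s)=(n+1,1)$, escape the problem because there $\Delta_M \wedge T^n M = T^{n+1}M$ compensates.) So the proposition as stated requires an additional nondegeneracy hypothesis --- roughly, that the relevant wedge powers $\bigwedge^{n+2-s}(\Delta_M^\circ)$ are nonzero for all $s$ in play, e.g.\ corank of $\Delta_M$ at least $n+1$ --- or a restriction of the range of $(r,s)$ in \eqref{isononh}. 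You located precisely the point where both your write-up and the published argument are incomplete.
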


\begin{proof}
We have to show that \eqref{isononh} holds for all indices $r, s$ with $r + s \le n + 2$.  Let us first check that 
\begin{equation} \label{weakinclusion}
	D_{\Delta_{M},r} \subset (D_{\Delta_{M},s})^{\perp,r}.
\end{equation}
Let $(\mathcal{X},\Sigma) \in D_{\Delta_{M},r}$
and $(\bar{\mathcal{X}},\bar{\Sigma}) \in D_{\Delta_{M},s}$.  By definition, there exist $\alpha \in \mbox{$\bigwedge^{n + 2 - r}$} (\Delta_M^\circ)$ and $\bar{\alpha} \in \mbox{$\bigwedge^{n + 2 - s}$} (\Delta_M^\circ)$ such that 
\[
	\mathbf{i}_{\mathcal{X}} \Omega_M - \Sigma = \alpha 
	\quad \text{and} \quad 
	\mathbf{i}_{\bar{\mathcal{X}}} \Omega_M - \bar{\Sigma} = \bar{\alpha}.
\]
Then, it follows that
\[
\begin{split}
&\left<\left< (\mathcal{X},\Sigma),(\bar{\mathcal{X}},\bar{\Sigma})\right>\right>_{-}\\
&\hspace{1cm}=\frac{1}{2}\left\{\mathbf{i}_{\bar{\mathcal{X}}} \Sigma +(-1)^{rs+1}\mathbf{i}_{\mathcal{X}} \bar{\Sigma}\right\}\\
&\hspace{1cm}=\frac{1}{2}\left\{\mathbf{i}_{\bar{\mathcal{X}}}\left( \mathbf{i}_{\mathcal{X}} \Omega_{M} +\alpha \right)+(-1)^{rs+1} \mathbf{i}_{\mathcal{X}}\left( \mathbf{i}_{\bar{\mathcal{X}}}\Omega_{M} + \bar{\alpha} \right)\right\}
\\
&\hspace{1cm}=0,
\end{split}
\]
since $\mathbf{i}_{\bar{\mathcal{X}}}\mathbf{i}_{\mathcal{X}}\Omega_{M}=(-1)^{rs} \mathbf{i}_{\mathcal{X}}\mathbf{i}_{\bar{\mathcal{X}}}\Omega_{M}$, together with $\mathbf{i}_{\bar{\mathcal{X}}}\alpha=0$ and $\mathbf{i}_{\bar{\mathcal{X}}} \bar{\alpha}$.  We conclude that \eqref{weakinclusion} holds.

Next, let us show that the reverse inclusion 
\begin{equation} \label{revinclusion}
	(D_{\Delta_{M},s})^{\perp,r} \subset D_{\Delta_{M},r}
\end{equation}
holds.  Let $(\bar{\mathcal{X}}, \bar{\Sigma}) \in (D_{\Delta_{M},s})^{\perp,r}$. By definition of $(D_{\Delta_{M},s})^{\perp,r}$,
\[
\mathbf{i}_{\bar{\mathcal{X}}} \Sigma + (-1)^{rs+1}\mathbf{i}_{\mathcal{X}} \bar{\Sigma}=0
\]
for all $(\mathcal{X},\Sigma)  \in D_{\Delta_{M},s}$, i.e. 
$\mathcal{X} \in \Delta_M \wedge T^{s-1}Z$ such that 
\[
\mathbf{i}_{\mathcal{X}} \omega-\Sigma = \alpha \in \mbox{$\bigwedge^{n + 2 - s}$} (\Delta_M^\circ).
\]  
It follows that 
\[
\begin{split}
\mathbf{i}_{\bar{\mathcal{X}}} \Sigma + (-1)^{rs+1}\mathbf{i}_{\mathcal{X}} \bar{\Sigma} &=\mathbf{i}_{\bar{\mathcal{X}}} (\mathbf{i}_{\mathcal{X}} \Omega_{M} + \alpha)+ (-1)^{rs+1}\mathbf{i}_{\mathcal{X}} \bar{\Sigma} \\
&=(-1)^{rs}\mathbf{i}_{\mathcal{X}} \left\{   \mathbf{i}_{\bar{\mathcal{X}}} \Omega_{M} - \bar{\Sigma}\right\} + \mathbf{i}_{\bar{\mathcal{X}}} \alpha
\end{split}
\]
has to vanish for all $\mathcal{X} \in \Delta_M \wedge T^{s - 1}M$. We first choose $\mathcal{X} = 0$, so that the expression between brackets vanishes.  In this case, the condition becomes $\mathbf{i}_{\bar{\mathcal{X}}} \alpha = 0$ for all $\alpha \in \mbox{$\bigwedge^{n + 2 - s}$} (\Delta_M^\circ)$ and hence $\bar{\mathcal{X}} \in \Delta_M \wedge T^{r - 1}M$ by lemma~\ref{lemma:vectorfield}.  Secondly, for arbitrary $\mathcal{X} \in \Delta_M \wedge T^{s - 1}M$, we then have the following condition: 
\[
	\mathbf{i}_{\mathcal{X}} \left\{   \mathbf{i}_{\bar{\mathcal{X}}} \Omega_{M} - \bar{\Sigma}\right\} = 0, 
\]
so that the expression between brackets is an element of $\bigwedge^{n + 2 - r} (\Delta_M^\circ)$ by lemma~\ref{lemma:form}:
\[
\mathbf{i}_{{\bar{\mathcal{X}}}}\Omega_{M}-\bar{\Sigma} \in \mbox{$\bigwedge^{n + 2 - r}$}(\Delta_M^\circ).
\]
This concludes the proof of the reverse inclusion \eqref{revinclusion}.  We conclude that $D_{\Delta_M} = D_{\Delta_M, 1}, \ldots, D_{\Delta_M, n + 1}$ is a multi-Dirac structure of degree $n + 1$
\end{proof}
} 

We refer to the multi-Dirac structure $D_{\Delta_M} = D_{\Delta_M, 1}, \ldots, D_{\Delta_M, n + 1}$ in \eqref{multi-dirac-const} as the {\bfi nonholonomic multi-Dirac structure} induced by the distribution $\Delta_{M} \subset TM$.  Needless to say that $D_{\Delta_{M},r}$ is a field-theoretic analogue of the induced Dirac structure from a given constraint distribution in mechanics. Furthermore, we note that, as in the case of field theories without constraints, only the highest component $D_{\Delta_M, n + 1}$ plays a role in the formulation of the nonholonomic field equations.  \newtext{  \cite{Zambon2010} considers an analogue of nonholonomic multi-Dirac structures in the context of higher-order Dirac structures.}

\paragraph{Nonholonomic Lagrange-Dirac Field Theories.}

Consider a nonholonomic multi-Dirac structure $D_{\Delta_M} = D_{\Delta_M, 1}, \ldots, D_{\Delta_M, n + 1}$.  In the following, only the component $D_{\Delta_M, n + 1} \subset P_{n+1}=T^{n+1}M \oplus \Lambda^{1}M$ will be used. Explicitly, this component is given by
\newtext{ \begin{equation*}\label{ind-multi-Dirac-n+1}
\begin{split}
D_{\Delta_{M},n+1}&=\left\{ (\mathcal{X}, \Sigma)  \in P_{n+1}  \mid  
 \;\mathbf{i}_{\mathcal{X}} \omega-\Sigma \in \Delta_{M}^{\circ}, \;\; \mathcal{X} \in \Delta_{M} \wedge T^n M \right\}.
\end{split}
\end{equation*} }

Now, let $\mathcal{L} = L\eta : J^1 Y \rightarrow \Lambda^{n + 1}(X)$ be a Lagrangian density with associated generalized energy $E = p + p_A^\mu v^A_\mu - L$, and let $\mathcal{X}$ be a \emph{partial vector field} on $M$.  Recall that $\mathcal{X}$ can locally be written as 
\begin{equation} \label{partvf}
	\mathcal{X} 
	= \bigwedge_{\mu = 1}^{n + 1}
		\left( \frac{\partial}{\partial x^\mu}
			+ C^A_\mu \frac{\partial}{\partial y^A}
			+ C_{A\mu}^\nu \frac{\partial}{\partial p_A^\nu}
			+ C_\mu \frac{\partial}{\partial p} \right), \quad
         \mathbf{i}_{\mathcal{X}}(\pi_{XM}^{\ast}\eta)=1.
\end{equation}

\newtext{
\begin{definition}\label{def: NHLagDiracFT}
A {\bfi nonholonomic Lagrange-Dirac system for field theories with affine constraints} is a quadruple $(\mathcal{X},E, \Delta_M, D_{\Delta_{M}})$, where $\mathcal{X}$ is a partial multivector field, $E$ is a generalized energy, $\Delta_M$ is a distribution on $M$, and $D_{\Delta_M} = D_{\Delta_M, 1}, \ldots, D_{\Delta_M, n + 1}$ is a nonholonomic multi-Dirac structure, which satisfies
\begin{equation}\label{cond_NHLagDiracFT}
(\mathcal{X}, (-1)^{n+2}\mathbf{d}E) \in D_{\Delta_{M},n+1}.
\end{equation}
and
\begin{equation} \label{conditiondelta}
	\mathcal{X} \in \mbox{$\bigwedge^{n + 1}$} \Delta.
\end{equation}
\end{definition}

Note that the condition \eqref{conditiondelta} in this definition does not appear in the case of mechanical systems with nonholonomic constraints.  From \eqref{cond_NHLagDiracFT} we have that $\mathcal{X} \in \Delta \wedge T^n M$, and for mechanical systems, where $n = 0$, this implies that $\mathcal{X} \in \Delta$, i.e. \eqref{conditiondelta} holds automatically.  For general field theories this is no longer the case, and \eqref{conditiondelta} arises as an independent condition.  
}

\begin{theorem} \label{thm:LDeqnh} The nonholonomic Lagrange-Dirac system $(\mathcal{X},E,D_{\Delta_{M},n+1})$ in equation  \eqref{cond_NHLagDiracFT} induces {\bfi nonholonomic Lagrange-Dirac equations for field theories}:
\begin{equation} \label{LagDiracEq}
\newtext{ \mathcal{X} \in \mbox{$\bigwedge^{n + 1}$} \Delta}, \quad
\mathbf{i}_{\mathcal{X}}\Omega_{M}-(-1)^{n+2}\mathbf{d}E \in \Delta_{M}^{\circ}, \qquad \mathbf{i}_{\mathcal{X}}(\pi_{XM}^{\ast}\eta)=1,
\end{equation}
which can be written in local coordinates  
as
\begin{equation} \label{eqm}
	\frac{\partial p_A^\mu}{\partial x^\mu} -  \frac{\partial L}{\partial y^A} =\lambda_{\alpha}A^{\alpha}_{A}, \quad
	\frac{\partial y^A}{\partial x^\mu} = v^A_\mu, \quad
	p_A^\mu = \frac{\partial L}{\partial v^A_\mu}
\end{equation}
and 
\begin{equation} \label{energy-const-NH}
\frac{\partial}{\partial x^\mu} (p + p_A^\mu v^A_\mu - L) = 0,
\end{equation}
together with nonholonomic affine constraints
\[
A^{\alpha}_{A}v^{A}_{\mu}+A^{\alpha}_{\mu}=0,
\]
where $\lambda_{\alpha}$ are Lagrange multipliers. It follows from equation \eqref{energy-const-NH} that imposing the generalized energy constraint $E=0$ reads
\[
p=L-p^{\mu}_{A}v^{A}_{\mu}.
\]
\end{theorem}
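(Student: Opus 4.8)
The plan is to establish the intrinsic system \eqref{LagDiracEq} directly from the definitions, and then to read off the coordinate equations \eqref{eqm}, \eqref{energy-const-NH} and the affine constraints by the same coefficient-matching computation used in the proof of Theorem~\ref{thm:LDeq}, the constraint one-forms accounting for the only genuinely new feature. First I would unwind the membership \eqref{cond_NHLagDiracFT}. By the explicit description of $D_{\Delta_M, n+1}$ displayed just before the theorem, the statement $(\mathcal{X}, (-1)^{n+2}\mathbf{d}E) \in D_{\Delta_M, n+1}$ is equivalent to the pair of conditions $\mathbf{i}_{\mathcal{X}}\Omega_M - (-1)^{n+2}\mathbf{d}E \in \Delta_M^\circ$ and $\mathcal{X} \in \Delta_M \wedge T^n M$. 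Together with the normalization $\mathbf{i}_{\mathcal{X}}(\pi_{XM}^\ast\eta)=1$ carried by every normalized partial multivector field, and the supplementary hypothesis \eqref{conditiondelta} that $\mathcal{X} \in \bigwedge^{n+1}\Delta$, this is precisely the intrinsic system \eqref{LagDiracEq}; so the first displayed line of the theorem is a matter of definition-chasing.

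Next I would pass to coordinates, reusing the contraction formula \eqref{contract} for $\mathbf{i}_{\mathcal{X}}\Omega_M$ and the expansion of $\mathbf{d}E$ already obtained in the proof of Theorem~\ref{thm:LDeq}, and recalling that $\Delta_M^\circ$ is locally spanned by the one-forms $\tilde{\varphi}^{\alpha} = A^\alpha_A(dy^A - v^A_\mu dx^\mu)$. The condition $\mathbf{i}_{\mathcal{X}}\Omega_M - (-1)^{n+2}\mathbf{d}E \in \Delta_M^\circ$ then reads
\[
\mathbf{i}_{\mathcal{X}}\Omega_M - (-1)^{n+2}\mathbf{d}E = \lambda_\alpha \tilde{\varphi}^{\alpha}
\]
for suitable multiplier functions $\lambda_\alpha$, and I would equate coefficients of the independent one-forms $dv^A_\mu$, $dp_A^\mu$, $dy^A$, $dp$ and $dx^\mu$. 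Since the $\tilde{\varphi}^{\alpha}$ involve only $dy^A$ and $dx^\mu$, the $dv^A_\mu$-coefficient yields the Legendre relation $p_A^\mu = \partial L/\partial v^A_\mu$, the $dp_A^\mu$-coefficient yields $C^A_\mu = v^A_\mu$ (hence $\partial y^A/\partial x^\mu = v^A_\mu$ on an integral section), and the $dp$-coefficients cancel identically. The only coefficient that changes relative to the unconstrained case is that of $dy^A$: it now carries the term $\lambda_\alpha A^\alpha_A$, producing the Euler-Lagrange equation with constraint force of \eqref{eqm} after identifying $\partial p_A^\mu/\partial x^\mu = C^\mu_{A\mu}$ on integral sections.

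The step I expect to be the main obstacle is the $dx^\mu$-coefficient, exactly as in Theorem~\ref{thm:LDeq}. Matching it gives a relation among the component functions $C^A_\mu$, $C^\lambda_{A\mu}$, $C_\mu$ and the multipliers; substituting $C^A_\mu = v^A_\mu$, $\partial p_A^\mu/\partial x^\mu = C^\mu_{A\mu}$ and the $dy^A$-equation must collapse it to the single energy-balance statement \eqref{energy-const-NH}. The delicate point is that the multiplier contributions cancel: the $dx^\mu$-part $-\lambda_\alpha A^\alpha_A v^A_\mu$ of $\lambda_\alpha \tilde{\varphi}^{\alpha}$ does not obstruct the balance because $\tilde{\varphi}^{\alpha}$ annihilates every leg of $\mathcal{X}$ (as $\mathcal{X} \in \bigwedge^{n+1}\Delta_M$ while $\tilde{\varphi}^{\alpha} \in \Delta_M^\circ$), which is the field-theoretic virtual-work principle discussed after \eqref{LagProp}. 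This is precisely why the generalized energy is still conserved in the presence of the constraint forces.

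Finally, the affine constraints $A^\alpha_A v^A_\mu + A^\alpha_\mu = 0$ follow from the condition $\mathcal{X} \in \bigwedge^{n+1}\Delta$: since $\Delta_M$ is a distribution defined only along the affine submanifold $\mathcal{C}$, a multivector field valued in $\bigwedge^{n+1}\Delta_M$ can only be based at points of $\mathcal{C}$, so its integral section $\psi$ takes values in $\mathcal{C}$, which is exactly the vanishing of the affine functions $\psi^\alpha_\mu = A^\alpha_\mu + A^\alpha_A v^A_\mu$ of \eqref{affconstr}. Imposing the generalized energy constraint $E = 0$ in \eqref{energy-const-NH} then yields $p = L - p_A^\mu v^A_\mu$, which completes the proof.
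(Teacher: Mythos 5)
Your proposal is correct and follows essentially the same route as the paper: the paper's own proof is a terse ``direct computation in local coordinates'' that matches coefficients of $\mathbf{i}_{\mathcal{X}}\Omega_{M}-(-1)^{n+2}\mathbf{d}E$ against the constraint forms $\tilde{\varphi}^{\alpha}$ and then displays the resulting $dx^\mu$-equation, which is exactly the computation you spell out. If anything, your write-up is more complete than the printed proof, since it makes explicit the two points the paper leaves implicit: the virtual-work cancellation of the multiplier terms (the displayed $dx^\mu$-equation only collapses to \eqref{energy-const-NH} after combining it with \eqref{eqm} and using that $\tilde{\varphi}^{\alpha}$ annihilates each leg of $\mathcal{X}$), and the fact that the affine constraints $A^\alpha_A v^A_\mu + A^\alpha_\mu = 0$ enter because $\Delta_M$ is defined only along the submanifold $\mathcal{C}$.
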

\begin{proof}
By the direct computations using local coordinates, we can easily check that equation \eqref{LagDiracEq} yields 
\eqref{eqm} 
as well as
\[
	\frac{\partial y^A}{\partial x^\mu} \frac{\partial p_A^\nu}{\partial x^\nu} - 
	\frac{\partial y^A}{\partial x^\nu} \frac{\partial p_A^\nu}{\partial x^\mu} - 
	\frac{\partial p}{\partial x^\mu} = -\frac{\partial L}{\partial x^\mu}
	- \lambda_\alpha A^\alpha_A v^A_\mu,
\]
which is equation \eqref{energy-const-NH}.
\end{proof}

\paragraph{Integrability of Nonholonomic Constraints.}  We shall investigate the integrability of the nonholonomic multi-Dirac structure.  Recall that the constraint constraints in this paper are given by an affine subbundle $\mathcal{C}$ of $J^1 Y$ which induces the constraint distribution $\Delta_{M}$ on $M$ and also that the constraints are said to be nonholonomic if $\Delta_{M}$ is not integrable.

Consider the nonholonomic multi-Dirac structure $D_{\Delta_M} = D_{\Delta_M, 1}, \ldots, D_{\Delta_M, n + 1}$ of degree $n + 1$ defined previously, and endowed with the multi-Courant bracket given in \eqref{multicourant}.  Then, the following proposition shows that integrability is equivalent to the space of section of $D_{\Delta_{M}}$ being closed under the multi-Courant bracket.

\begin{proposition}
The nonholonomic multi-Dirac structure $D_{\Delta_M} = D_{\Delta_M, 1}, \ldots, D_{\Delta_M, n + 1}$ is \emph{integrable}  if and only if the following conditions are satisfied:
\begin{itemize}
\item
The distribution $\Delta_{M}$ is integrable; namely, the condition 
$
[\mathcal{X}, \bar{\mathcal{X}}] \in \Delta_{M}
$
holds for $\mathcal{X}, \bar{\mathcal{X}} \in \Delta_{M}$.

\item
The pre-multisymplectic structure $\Omega_{M}$ \newtext{satisfies $\mathbf{i}_{X \wedge Y \wedge Z}
\mathbf{d}\Omega_{M}=0$ for any three $X, Y, Z \in \Delta_M$}.
\end{itemize}
\end{proposition}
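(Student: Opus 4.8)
The plan is to test membership of the multi-Courant bracket $[\![(\mathcal{X},\Sigma),(\bar{\mathcal{X}},\bar{\Sigma})]\!]_{r,s}$ in $D_{\Delta_M,r+s-1}$ directly, for arbitrary sections $(\mathcal{X},\Sigma)\in D_{\Delta_M,r}$ and $(\bar{\mathcal{X}},\bar{\Sigma})\in D_{\Delta_M,s}$ with $r+s\le n+1$, and to isolate the two obstructions to closure. Writing $\Sigma=\mathbf{i}_{\mathcal{X}}\Omega_M-\alpha$ and $\bar{\Sigma}=\mathbf{i}_{\bar{\mathcal{X}}}\Omega_M-\bar{\alpha}$ with $\alpha\in\bigwedge^{n+2-r}(\Delta_M^\circ)$, $\bar{\alpha}\in\bigwedge^{n+2-s}(\Delta_M^\circ)$, and $\mathcal{X}\in\Delta_M\wedge T^{r-1}M$, $\bar{\mathcal{X}}\in\Delta_M\wedge T^{s-1}M$, membership in $D_{\Delta_M,r+s-1}$ (by \eqref{multi-dirac-const}) amounts to two separate requirements: the Schouten--Nijenhuis part $[\mathcal{X},\bar{\mathcal{X}}]$ must lie in $\Delta_M\wedge T^{r+s-2}M$, and the form part of the bracket must differ from $\mathbf{i}_{[\mathcal{X},\bar{\mathcal{X}}]}\Omega_M$ by an element of $\bigwedge^{n+3-r-s}(\Delta_M^\circ)$. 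I would treat these two requirements independently and match them to the two bullets of the proposition.

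For the multivector requirement, I claim it holds for all admissible $r,s$ exactly when $\Delta_M$ is involutive. The base case $r=s=1$ forces $[X,\bar{X}]\in\Delta_M$ for $X,\bar{X}\in\Delta_M$, which is involutivity; conversely, given involutivity the graded Leibniz rule for the Schouten--Nijenhuis bracket propagates a $\Delta_M$-leg through $[X\wedge A,\bar{X}\wedge B]$, so that $[\mathcal{X},\bar{\mathcal{X}}]$ always retains a factor in $\Delta_M$. This identifies the first bullet with the multivector part of the closure condition.

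For the form requirement I would reuse the computation behind Theorem~\ref{thm:integrable}: substituting $\Sigma=\mathbf{i}_{\mathcal{X}}\Omega_M$, $\bar{\Sigma}=\mathbf{i}_{\bar{\mathcal{X}}}\Omega_M$ into \eqref{multicourant} yields the form part $\mathbf{i}_{[\mathcal{X},\bar{\mathcal{X}}]}\Omega_M+(-1)^{r}\mathbf{i}_{\mathcal{X}}\mathbf{i}_{\bar{\mathcal{X}}}\mathbf{d}\Omega_M$, so in general the form part differs from $\mathbf{i}_{[\mathcal{X},\bar{\mathcal{X}}]}\Omega_M$ by $(-1)^{r}\mathbf{i}_{\mathcal{X}}\mathbf{i}_{\bar{\mathcal{X}}}\mathbf{d}\Omega_M$ together with correction terms $\mathcal{T}_\alpha$ built from $\alpha,\bar{\alpha}$. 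For the forward implication it is enough to test with $\alpha=\bar{\alpha}=0$ (an admissible choice) and $r=s=1$: the correction vanishes and closure demands $\mathbf{i}_{X}\mathbf{i}_{\bar{X}}\mathbf{d}\Omega_M\in\bigwedge^{n+1}(\Delta_M^\circ)$, which by the annihilator characterization of lemma~\ref{lemma:form} is precisely $\mathbf{i}_{W\wedge X\wedge\bar{X}}\mathbf{d}\Omega_M=0$ for all $W\in\Delta_M$, i.e.\ the second bullet. For the backward implication I would show the two conditions force closure for every $r,s$. Assuming $\Delta_M$ integrable, I first observe that any contraction $\mathbf{i}_{\mathcal{X}}\bar{\alpha}$ vanishes, since $\mathcal{X}$ carries a $\Delta_M$-leg while $\bar{\alpha}\in\bigwedge(\Delta_M^\circ)$; this kills the $\mathbf{d}(\cdots)$ term of $\mathcal{T}_\alpha$ and collapses each generalized Lie derivative to $\mathbf{i}_{\bar{\mathcal{X}}}\mathbf{d}\alpha$ (and $\mathbf{i}_{\mathcal{X}}\mathbf{d}\bar{\alpha}$). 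By Frobenius, $\mathbf{d}\alpha$ lies in the ideal generated by $\Delta_M^\circ$, and contracting first against the $\Delta_M$-leg of $\bar{\mathcal{X}}$ and then against the remaining legs keeps the result in $\bigwedge(\Delta_M^\circ)$, so $\mathcal{T}_\alpha\in\bigwedge(\Delta_M^\circ)$. The surviving term $(-1)^{r}\mathbf{i}_{\mathcal{X}}\mathbf{i}_{\bar{\mathcal{X}}}\mathbf{d}\Omega_M$ lies in $\bigwedge^{n+3-r-s}(\Delta_M^\circ)$ because a further contraction by any $W\in\Delta_M$ produces $\mathbf{i}_{W\wedge\mathcal{X}\wedge\bar{\mathcal{X}}}\mathbf{d}\Omega_M$, which after reordering contracts $\mathbf{d}\Omega_M$ against the three $\Delta_M$-legs $W,X,\bar{X}$ first and therefore vanishes by the second bullet; thus the three-vector condition propagates to all $r,s$ by antisymmetry of interior products. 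Combining the multivector and form parts gives membership in $D_{\Delta_M,r+s-1}$.

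The main obstacle I anticipate is the bookkeeping of the correction terms $\mathcal{T}_\alpha$: one must verify that the generalized Lie derivatives of annihilator-valued forms remain in the \emph{full} exterior power $\bigwedge(\Delta_M^\circ)$ rather than merely in the differential ideal it generates. This is exactly where integrability of $\Delta_M$ through Frobenius is indispensable, and where the cancellations coming from the $\Delta_M$-legs of $\mathcal{X}$ and $\bar{\mathcal{X}}$ must be combined carefully with the $\mathbf{d}\Omega_M$ contribution to land in the required subspace.
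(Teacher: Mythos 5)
Your proof is correct, and its core is the same computation the paper itself uses: expand the multi-Courant bracket via the calculation behind Theorem~\ref{thm:integrable} and reduce closure to the two conditions $[\mathcal{X},\bar{\mathcal{X}}]\in\Delta_M\wedge T^{r+s-2}M$ and $\mathbf{i}_{\mathcal{X}\wedge\bar{\mathcal{X}}}\,\mathbf{d}\Omega_M\in\bigwedge^{n+3-(r+s)}(\Delta_M^\circ)$. Where you differ is in completeness, and the difference is substantive. The paper's proof substitutes $\Sigma=\mathbf{i}_{\mathcal{X}}\Omega_M$ and $\bar{\Sigma}=\mathbf{i}_{\bar{\mathcal{X}}}\Omega_M$ outright, i.e.\ it silently works only with the graph representatives $\alpha=\bar{\alpha}=0$, even though general sections of $D_{\Delta_M,r}$ carry annihilator corrections; your bookkeeping of these terms --- the observation that $\mathbf{i}_{\mathcal{X}}\bar{\alpha}=0$ kills the exact term and collapses the Lie derivatives to $\mathbf{i}_{\bar{\mathcal{X}}}\mathbf{d}\alpha$ and $\mathbf{i}_{\mathcal{X}}\mathbf{d}\bar{\alpha}$, and that Frobenius (the first bullet) then keeps these contractions inside the full exterior power $\bigwedge(\Delta_M^\circ)$ rather than merely the differential ideal --- is precisely the step the paper omits, and it is where involutivity is genuinely needed on the form side. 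Likewise, the paper ends with the unproved assertion that its two displayed conditions ``are equivalent to the conditions in the statement of the proposition''; your argument supplies that equivalence in both directions: necessity from the base case $r=s=1$ with $\alpha=\bar{\alpha}=0$, and sufficiency by propagating the $\Delta_M$-leg through the Schouten bracket via the Leibniz rule and by propagating the three-vector condition to all $(r,s)$ via lemma~\ref{lemma:form} and antisymmetry of interior products. Two small caveats, neither fatal: the Schouten--Nijenhuis bracket is not $C^\infty$-linear, so the Leibniz propagation should also note that the extra derivative terms arising from function coefficients still retain a $\Delta_M$-leg coming from the other argument (they do, since each such term is wedged with a multivector that contains one); and the base case $r=s=1$ is only available when $n\geq 1$, an edge case the paper ignores as well.
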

\begin{proof}
Recall the multi-Courant bracket on the space of sections of $P_{r} \times P_{s}$, namely,
\[
\left[\!\left[ \cdot, \cdot \right]\!\right]_{r,s}: \Gamma(P_{r}) \times \Gamma(P_{s}) \to \Gamma(P_{r+s-1})
\]
is given by
\begin{align*}
[\![ (\mathcal{X}, \Sigma),(\bar{\mathcal{X}}, \bar{\Sigma})]\!]_{r,s}  
&=
\left( [\mathcal{X},\bar{\mathcal{X}}], \; \pounds_{\mathcal{X}}\bar{\Sigma}-(-1)^{(r-1)(s-1)}\pounds_{\bar{\mathcal{X}}}\Sigma+\frac{(-1)}{2}^{r}\mathbf{d}
\left<\!\left<(\mathcal{X}, \Sigma),(\bar{\mathcal{X}}, \bar{\Sigma})\right>\!\right>_{+}
\right)\\
&=  \left([\mathcal{X}, \bar{\mathcal{X}}], \pounds_{\mathcal{X}}\bar{\Sigma}  - (-1)^{(r-1)(s-1)}\pounds_{\bar{\mathcal{X}}} \Sigma +
 \frac{(-1)^{r}}{2} \mathbf{d} (\mathbf{i}_{\mathcal{X}} \bar{\Sigma} +(-1)^{rs}\mathbf{i}_{\bar{\mathcal{X}}}\Sigma)\right). 
\end{align*}
For $(\mathcal{X},\Sigma) \in D_{\Delta_{M},r}$ and $(\bar{\mathcal{X}}, \bar{\Sigma}) \in D_{\Delta_{M},s}$,
one has
\begin{align*}
[\![ (\mathcal{X}, \Sigma),(\bar{\mathcal{X}}, \bar{\Sigma})]\!]_{r,s} & =  ([\mathcal{X}, \bar{\mathcal{X}}], \pounds_{\mathcal{X}}\bar{\Sigma}  - (-1)^{(r-1)(s-1)}\pounds_{\bar{\mathcal{X}}} \Sigma +
 \frac{(-1)^{r}}{2} \mathbf{d} (\mathbf{i}_{\mathcal{X}} \bar{\Sigma} +(-1)^{rs}\mathbf{i}_{\bar{\mathcal{X}}}\Sigma)
) \\
& = ([\mathcal{X}, \bar{\mathcal{X}}], \pounds_{\mathcal{X}} \mathbf{i}_{\bar{\mathcal{X}}} \Omega_{M} - (-1)^{(r-1)(s-1)} \mathbf{i}_{\bar{\mathcal{X}}} \pounds_{\mathcal{X}} \Omega_{M} +  (-1)^{r} \mathbf{i}_{\mathcal{X}} \mathbf{i}_{\bar{\mathcal{X}}} \mathbf{d}\Omega_{M} ) \\
& = ([\mathcal{X}, \bar{\mathcal{X}}],  \mathbf{i}_{[\mathcal{X}, \bar{\mathcal{X}}]}\Omega_{M} +  (-1)^{r} \mathbf{i}_{\mathcal{X}} \mathbf{i}_{\bar{\mathcal{X}}}  \mathbf{d}\Omega_{M} ).
\end{align*}
\newtext{
It follows that the multi-Courant bracket is closed iff 
\[
	[\mathcal{X}, \bar{\mathcal{X}}]  \in \Delta_M \wedge T^{r + s - 2} M
\quad \text{and} \quad 
	\mathbf{i}_{\mathcal{X} \wedge \bar{\mathcal{X}}} \, \mathbf{d} \Omega_M \in \mbox{$\bigwedge$}^{n + 3 - (r + s)}(\Delta_M^\circ).  
\]
This is equivalent to the conditions in the statement of the proposition.
}
\end{proof}

\section{Lagrange-d'Alembert-Pontryagin Principle for Field Theories}

Let us show how {\it nonholonomic Lagrange-Dirac field equations} can be formulated by the {\it Lagrange-d'Alembert-Pontryagin principle}, which is an extension of the Hamilton-Pontryagin principle to the case in which nonholonomic constraints are given. Before going into this, let us review Lagrange-d'Alembert-Pontryagin principle in mechanics.

\paragraph{Lagrange-d'Alembert-Pontryagin Principle in Mechanics.}

We consider again the case of a mechanical system first, formulated on a jet bundle $\pi: Y \rightarrow X$ where $X = \mathbb{R}$.  The coordinate on $X$ is denoted $t$ and represents time.  Let $\Theta_{M}=\pi_{ZM}^{\ast}\Theta$ be the one-form on $M$ that is induced from the canonical one-form $\Theta$ on $Z$.  We can also derive the Lagrange--d'Alembert--Pontryagin  by using the {\bfi Lagrange-d'Alembert-Pontryagin principle} (see \cite{YoMa2006b}), which is given by the stationary condition for the action integral 
$\mathfrak{F}(m)$ of a  curve $m(t)$ in $M$ as

\begin{equation}\label{LDAP_int}
\delta \mathfrak{F}(m)=\delta \int_{a}^{b} \left\{ \Theta_{M}(m(t)) \cdot \dot{m}(t)-E(m(t)) \right\}dt=0
\end{equation}
for all $\delta{m}(t) \in \Delta_{M}(m(t))$ and with the constraint  $\dot{x}(t) \in \Delta_{M}(m(t))$ and the endpoint fixed conditions $T\pi_{QM}(\delta{m}(a))=T\pi_{QM}(\delta{m}(b))=0$.  

By direct computations, it follows that
\begin{equation}\label{Int-LDAP}
\begin{split}
\delta \mathfrak{F}(m)=\frac{d}{d\lambda}\mathfrak{F}(m_{\lambda})\biggr|_{\lambda=0}&=\frac{d}{d\lambda}\int_{a}^{b} \left\{ \Theta_{M}(m_{\lambda}(t)) \cdot \dot{m}_{\lambda}(t)-E(m_{\lambda}(t)) \right\}dt\\
&=\int_{a}^{b} \left\{ \mathbf{d}\Theta_{M}(m)(\delta{m}, \dot{m})-\left<\mathbf{d}E(m), \delta{m} \right>\right\}dt\\
&=\int_{a}^{b} \left\{ \Omega_{M}(m)(\dot{m}, \delta{m})-\left<\mathbf{d}E(m), \delta{m} \right>\right\}dt\\
&=\int_{a}^{b} \left<\Omega_{M}^{\flat}(m)\cdot \dot{m}-\mathbf{d}E(m), \delta{m} \right>dt\\
&=0
\end{split}
\end{equation}
for all $\delta{m}(t) \in \Delta_{M}(m(t))$ and with the constraint  $\dot{m}(t) \in \Delta_{M}(m(t))$ and the endpoint fixed conditions $T\pi_{QM}(\delta{m}(a))=T\pi_{QM}(\delta{m}(b))=0$. 
\medskip

Thus, we obtain the {\bfi Lagrange--d'Alembert--Pontryagin  equations} as
\begin{equation}
\Omega_{M}(m)^{\flat}\cdot \dot{m}-\mathbf{d}E(m) \in \Delta_{M}^{\circ}(m) \quad \mbox{and} \quad \dot{m} \in \Delta_{M}(m). 
\end{equation}
In local coordinates $(q,v,p)$ for $m \in M$, it follows that
\begin{equation}\label{LDAP}
\begin{split}
\delta \mathfrak{F}(q,v,p)&=\delta \int_{a}^{b} \left\{  \left< p(t), \dot{q}(t)\right> -E(q(t),v(t),p(t))\right\} dt\\
&=\delta \int_{a}^{b} \left\{ L(q(t),v(t)) + \left< p(t), \dot{q}(t)-v(t)\right> \right\} dt=0
\end{split}
\end{equation}
for all variations $\delta{q}(t) \in \Delta_{Q}(q(t))$ and with the constraint $\dot{q}(t) \in \Delta_{Q}(q(t))$ and the endpoint fixed conditions $\delta{q}(a)=\delta{q}(b)=0$. 
It follows from equation \eqref{LDAP} that we can obtain the Lagrange--d'Alembert--Pontryagin  equations as in \eqref{Local-ImpLagSys}.

\paragraph{Nonholonomic Field Theories and Bundles of Reaction Forces.}
Now, recall the geometric setting of the nonholonomic Lagrange-Dirac field theories. Namely, let $\pi_{XY}: Y \rightarrow X$ be a fiber bundle, where $X$ is an oriented manifold with $\dim X = n + 1$ and with a fixed volume form $\eta$, locally given by $\eta=d^{n+1}x$. Let $M=J^{1}Y \oplus Z$ be the Pontryagin bundle over Y.
Let $\Delta_{Y}$ be a distribution on $Y$. As before, suppose that the annihilator $\Delta_{Y}^{\circ}$ is spanned by
$k$ one-forms $\varphi^{\alpha}$ that are given in coordinates $(x^{\mu}, y^{A})$ for $Y$ by
\[
\varphi^{\alpha}(x^{\mu}, y^{A})=A^{\alpha}_{A}dy^{A} +A^{\alpha}_{\mu}dx^{\mu}, \quad \alpha=1,...,k <N,
\]
and $\Delta_{Y}$ defines an affine subbundle $\mathcal{C}$ of $J^1 Y$ which is given in coordinates $(x^{\mu},y^{A},v^{A}_{\mu})$ for $J^{1}Y$ by 
\[
\begin{split}
\mathcal{C}:=&\{ (x^{\mu}, y^{A},v^{A}_{\mu}) \in J^{1}Y \; \mid \; \varphi^\alpha_\mu=A^\alpha_A(x, y) y^A_\mu +  A^\alpha_\mu(x, y) = 0 ,\\
&\hspace{6cm} \alpha=1,...,k;\; \mu=1,...,n+1 \}.
\end{split}
\]
Further, recall that the distribution $\Delta_M$ on $M$ is given by using $\pi_{YM}: M \to Y$ as
\begin{equation*}\label{Dist-M}
\Delta_M = (T (\pi_{YM})_{|\mathcal{C}})^{-1}(\Delta_Y),
\end{equation*}
namely,
\[
\Delta_{M}:=\left\{ X \in TM \;\mid \; \left< \tilde{\varphi}^{\alpha},  X \right> =0,\quad \alpha=1,...,k \right\},
\]
where $\tilde{\varphi}^{\alpha}$ are $k$ one-forms on $M$ given by
\[
\tilde{\varphi}^{\alpha}=A^{\alpha}_{A}(dy^{A} -v^{A}_{\mu}dx^{\mu}),
\]
which span the annihilator of $\Delta_{M}$. Moreover, the $r$-fold exterior power of $\Delta_{M}$ is defined by
\[
\Delta_{M}^{r}= \bigwedge_{}^{r}\Delta_{M},\quad r=1,...,n+1. 
\]

We now introduce a closely related concept, namely the {\bfi bundle of reaction forces} $F \subset \Lambda^{n+1}(M)$.  This bundle is spanned by the $k(n+1)$ $(n + 1)$-forms $\Phi^\alpha_\mu$ along $\mathcal{C}$ given by 
\[
\begin{split}
\Phi^\alpha_\mu &= \tilde{\varphi}^{\alpha}\wedge d^{n}x_\mu \\
&=A^{\alpha}_{A}(dy^{A} -v^{A}_{\nu}dx^{\nu})\wedge d^{n}x_\mu.
\end{split}
\]
An intrinsic way of defining the bundle $F$, as well as further properties, can be found in \cite{VankMD2008}.

\begin{definition}
A vertical variation $\mathcal{V}_{M}$ of a section $\psi$ of $\pi_{XM}: M \to X$ that takes values in $\Delta_{M} \subset TM$ defined on an open subset $U$ with compact closure is {\it admissible} if 
\begin{equation} \label{admvariations}
\psi^{\ast}(\mathcal{V}_{M}\! \text{\Large$\lrcorner$} \; \Phi)=0 \quad \mbox{for all} \quad \Phi \in F.
\end{equation}
\end{definition}

\vspace{-0.5cm}
\paragraph{Lagrange-d'Alembert-Pontryagin Principle for Field Theories.}
Consider a Lagrangian density $\mathcal{L}$ and with its associated energy density $\mathcal{E}$.  
Recall from Definition \ref{def:HPActionFunct} that the Hamilton-Pontryagin action functional is given by 
\begin{equation}\label{HPAction}
	S(\gamma, z) = \int_{U}  \psi^{\ast}(\Theta_{M} -\mathcal{E}),
\end{equation}
where $U$ is an open subset of $X$ with compact closure and $\psi=(\gamma, z)$ is a section of $\pi_{XM}: M \to X$ defined on  $U$.

By varying the action $S$ with respect to admissible variations of $\psi=(\gamma, z)$, a critical point $\psi = (\gamma, z)$ of the Hamilton-Pontryagin action functional satisfies  
\begin{equation} \label{varS}
\begin{split}
\delta S(\psi)&=\frac{d}{d\lambda}\bigg|_{\lambda=0}S(\psi_{\lambda})\\
&=\frac{d}{d\lambda}\bigg|_{\lambda=0} \int_{U} \psi_{\lambda}^{\ast} ( \Theta_{M} -\mathcal{E})\\
&=-\int_{U} \psi^{\ast} (\mathcal{V}_{M} \! \text{\Large$\lrcorner$} \; \Omega_{M}) - \psi^{\ast}(\mathcal{V}_{M} \!\text{\Large$\lrcorner$} \; \mathbf{d}\mathcal{E})\\
&=-\int_{U} \psi^{\ast} (\mathcal{V}_{M} \! \text{\Large$\lrcorner$} \;\Omega_{\mathcal{E}})\\
&=0
\end{split}
\end{equation}
for all vertical vector fields $\mathcal{V}_{M}$ on $M$ that vanish on the boundary of $\psi(U)$ and such that
\begin{equation}\label{VertVectConst}
\psi^{\ast}(\mathcal{V}_{M}\! \text{\Large$\lrcorner$} \; \Phi)=0 \quad \mbox{for all} \quad \Phi \in F.
\end{equation}

Let $\psi: X \rightarrow M$ be a local section represented in local coordinates by $\psi(x) = (x^\mu, y^A(x), v^A_\mu(x), p_A^\mu(x), p(x))$.  The variation \eqref{varS} of the action integral is then locally given as 
\begin{equation*}
\begin{split}
\int_{U} \left\{ \delta p_{A}^{\mu} \left( \frac{\partial y^{A}}{\partial x^{\mu}}- v^{A}_{\mu} \right)+ 
 \left( -p_{A}^{\mu}+ \frac{\partial L}{\partial v^{A}_{\mu}} \right) \delta v^{A}_{\mu}+ 
 \left( -\frac{\partial p_{A}^{\mu}}{\partial x^{\mu}}+ \frac{\partial L}{\partial y^{A}}\right) \delta y^{A}
 \right\} d^{n+1}x=0
\end{split}
\end{equation*}
for all variations $\delta y^{A}$, $\delta v^{A}_{\mu}$ and $\delta p^{\mu}_{A}$, where the variation of $y^{A}$ vanishes at the boundary of $X$, namely, $\delta y^{A}\big|_{\partial X}=0$.   Imposing the condition that the variation be admissible then yields 
\begin{equation*}\label{LocalVertVectConst}
A^{\alpha}_{A} \delta{y}^{A}=0.
\end{equation*}
From this, we obtain the following equations of motion for {\bfi implicit Lagrange-d'Alembert equations}:
\begin{equation}\label{ImpLDE}
\begin{split}
\frac{\partial p_A^\mu}{\partial x^\mu} -  \frac{\partial L}{\partial y^A} =\lambda_{\alpha}A^{\alpha}_{A}, \quad
	\frac{\partial y^A}{\partial x^\mu} = v^A_\mu, \quad
	p_A^\mu = \frac{\partial L}{\partial v^A_\mu},
\end{split}
\end{equation}
where the $\lambda_\alpha$ are Lagrange multipliers, and
\begin{equation*} \label{energy-const}
\frac{\partial}{\partial x^\mu} (p + p_A^\mu v^A_\mu - L) = 0,
\end{equation*}
together with nonholonomic affine constraints
\[
A^{\alpha}_{A}v^{A}_{\mu}+A^{\alpha}_{\mu}=0,
\]
which serve to determine the $k$ Lagrange multipliers $\lambda^\alpha$.  Finally, by imposing the generalized energy constraint $E=0$ we may suppose that $p = L- p_A^\mu v^A_\mu$.  We summarize this in the following theorem.

\begin{theorem}
The following statements concerning a section $\psi$ of $\pi_{XM}: M \to X$ are equivalent:
\begin{itemize}
\item[(1)] $\psi$ is a critical point of  the action \eqref{HPAction} under admissible variations \eqref{admvariations};
\item[(2)] $\psi$ satisfies the implicit Lagrange-d'Alembert equations \eqref{ImpLDE};
\item[(3)] for all vertical vector fields $\mathcal{V}_{M}$ on $M$ along $\Delta_{M}$ such that $\psi^{\ast}(\mathcal{V}_{M}\! \text{\Large$\lrcorner$} \; \Phi)=0$ for all $\Phi \in F$, 
\begin{equation*}
\psi^{\ast}(\mathcal{V}_{M} \!\text{\Large$\lrcorner$} \;\Omega_{\mathcal{E}})=0.
\end{equation*}
\end{itemize}
\end{theorem}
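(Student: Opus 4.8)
The strategy is to take statement (3) as the hub and to prove $(1) \Leftrightarrow (3)$ intrinsically and $(3) \Leftrightarrow (2)$ in local coordinates; since the first variation has already been computed in \eqref{varS}, most of the work is assembly. For $(1) \Leftrightarrow (3)$ I would start from the identity
\[
	\delta S(\psi) = -\int_U \psi^\ast(\mathbf{i}_{\mathcal{V}_M} \Omega_{\mathcal{E}}),
\]
valid for every admissible vertical variation $\mathcal{V}_M$, which is exactly the content of \eqref{varS}. The implication $(3) \Rightarrow (1)$ is then immediate, since pointwise vanishing of the integrand forces $\delta S(\psi) = 0$. For $(1) \Rightarrow (3)$ I would fix an admissible $\mathcal{V}_M$ and observe that, for any function $f$ with compact support in the interior of $U$, the rescaled field $(f \circ \pi_{XM})\,\mathcal{V}_M$ is again vertical, tangent to $\Delta_M$, and annihilated after pullback by every reaction force in $F$, so it is again admissible in the sense of \eqref{admvariations}. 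Criticality then gives $\int_U f\,\psi^\ast(\mathbf{i}_{\mathcal{V}_M}\Omega_{\mathcal{E}}) = 0$ for all such $f$, and the fundamental lemma of the calculus of variations yields $\psi^\ast(\mathbf{i}_{\mathcal{V}_M}\Omega_{\mathcal{E}}) = 0$.

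For $(3) \Leftrightarrow (2)$ I would compute the $(n+1)$-form $\psi^\ast(\mathbf{i}_{\mathcal{V}_M}\Omega_{\mathcal{E}})$ directly in bundle coordinates $(x^\mu, y^A, v^A_\mu, p_A^\mu, p)$, using $\Omega_{\mathcal{E}} = \Omega_M + \mathbf{d}\mathcal{E}$ together with the local expressions for $\Omega_M$ and $\mathbf{d}\mathcal{E}$. Writing $\mathcal{V}_M$ with vertical components $\delta y^A, \delta v^A_\mu, \delta p_A^\mu, \delta p$, one finds that the $\delta p$ contributions cancel and that
\[
	\psi^\ast(\mathbf{i}_{\mathcal{V}_M}\Omega_{\mathcal{E}}) = \Big[ \delta p_A^\mu\big(v^A_\mu - \frac{\partial y^A}{\partial x^\mu}\big) + \big(p_A^\mu - \frac{\partial L}{\partial v^A_\mu}\big)\delta v^A_\mu + \big(\frac{\partial p_A^\mu}{\partial x^\mu} - \frac{\partial L}{\partial y^A}\big)\delta y^A \Big] d^{n+1}x.
\]
The admissibility condition, whether read as tangency of $\mathcal{V}_M$ to $\Delta_M$ or as \eqref{VertVectConst}, reduces for vertical fields to $A^\alpha_A \delta y^A = 0$, while $\delta v^A_\mu$ and $\delta p_A^\mu$ remain free. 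Requiring the bracket to vanish for all such variations then yields the first and third equations of \eqref{ImpLDE} from the free variations, and forces the coefficient $\partial p_A^\mu/\partial x^\mu - \partial L/\partial y^A$ to vanish on the subspace $\{ \delta y^A : A^\alpha_A \delta y^A = 0 \}$, which is precisely the Lagrange-multiplier equation $\partial p_A^\mu/\partial x^\mu - \partial L/\partial y^A = \lambda_\alpha A^\alpha_A$. For the converse $(2) \Rightarrow (3)$, substituting \eqref{ImpLDE} back into the bracket leaves only $\lambda_\alpha A^\alpha_A \delta y^A$, which vanishes on admissible variations since $A^\alpha_A \delta y^A = 0$.

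The main obstacle is the Lagrange-multiplier extraction inside $(3) \Rightarrow (2)$: passing from ``the coefficient of $\delta y^A$ vanishes on $\{ A^\alpha_A \delta y^A = 0 \}$'' to ``the coefficient is a combination $\lambda_\alpha A^\alpha_A$'' is the one genuinely non-mechanical step, and it is where the constraint forces enter. This is the field-theoretic analogue of the standard d'Alembert argument, and it relies on the constancy of the rank of $A^\alpha_A$ so that the multipliers $\lambda_\alpha$ are well-defined smooth functions on $U$. Everything else --- the coordinate bookkeeping in the interior-product computation and the localization in the fundamental lemma --- is routine once this point is handled.
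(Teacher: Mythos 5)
Your proposal is correct and takes essentially the same approach as the paper: the intrinsic identity \eqref{varS} gives the equivalence $(1) \Leftrightarrow (3)$, and the coordinate expansion of $\psi^{\ast}(\mathbf{i}_{\mathcal{V}_M}\Omega_{\mathcal{E}})$ together with the admissibility condition $A^{\alpha}_{A}\,\delta y^{A}=0$ and the Lagrange-multiplier extraction gives $(3) \Leftrightarrow (2)$; your explicit cutoff-function localization and the constant-rank justification of the multipliers merely spell out steps the paper treats as standard. One trivial slip: the free variations $\delta p_A^\mu$ and $\delta v^A_\mu$ yield the \emph{second and third} equations of \eqref{ImpLDE}, not the first and third --- the first (multiplier) equation comes from the constrained $\delta y^A$ variations, as you in fact correctly argue in the next clause.
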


\section{Examples}

In this section, we shall demonstrate some examples of Lagrange-Dirac field theories in the context of multi-Dirac structures. Namely, we show that the two examples of {\it scalar nonlinear wave equations} and {\it electromagnetism} can be described as standard Lagrange-Dirac field theories. Furthermore, as an example of the Lagrange-Dirac field theories with nonholonomic constraints, we consider {\it time-dependent mechanical systems with affine constraints}.

\paragraph{Example: Nonlinear Wave Equations.}

Let us consider the scalar nonlinear wave equation ({\bfi Klein-Gordon equations}) in $1 + 1D$, as discussed in \cite{Brid1997}.  The configuration bundle is $\pi_{XY}: Y \rightarrow X$, with $X = \mathbb{R}^2$ and $Y = X \times \times \mathbb{R}$.  Local coordinates are given by $(x^{0},x^{1})$ for $U \subset X$ and $(x^{0},x^{1},\phi)$ for $Y$, and hence $(x^{0},x^{1},\phi,v_{0},v_{1})$ for $J^{1}Y$ and $(x^{0},x^{1},\phi,p,p^{0},p^{1})$ for $Z \cong J^{1}Y^{\star}$.  The nonlinear wave equation is given by 
\[
\frac{\partial^{2} \phi }{{\partial x^{0}}^{2}} -\Delta \phi-V^{\prime}(\phi)=0,
\]
where $\phi$ is a section of $\pi_{XY}$ and $V : Y \rightarrow \mathbb{R}$ is a nonlinear potential.  The Lagrangian for this equation is given by
\[
L(x^{0},x^{1},\phi,v_{0},v_{1})=\left[ \frac{1}{2}(v_{0}^{2} -v_{1}^{2}) +V(\phi) \right]
\]
and the Lagrangian density is of course denoted by
\[
\mathcal{L}(x^{0},x^{1},\phi,v_{0},v_{1})=L(x^{0},x^{1},\phi,v_{0},v_{1})dx^{1} \wedge dx^{0}.
\]
Then, the generalized energy is given by
\[
E(x^{0},x^{1},\phi,v_{0},v_{1},p,p^{0},p^{1})=p+p^{0}v_{0}+p^{1}v_{1}-L(x^{0},x^{1},\phi,v_{0},v_{1}).
\]

\medskip
\noindent
\emph{(a)  Hamilton-Pontryagin Principle:}  The Hamilton-Pontryagin principle is given, in coordinates $(x^{0},x^{1},\phi,v_{0},v_{1},p,p^{0},p^{1})$ for $M=J^{1}Y \oplus Z$, by
\[
\begin{split}
\delta \int_{U} (p+p^{0}\phi_{,0}+p^{1}\phi_{,1}-E(x^{0},x^{1},\phi,v_{0},v_{1},p,p^{0},p^{1}))dt \wedge dx=0.
\end{split}
\]
with the condition of vanishing of $\delta{\phi}$ at the boundaries of $U$.
By direct computations, it follows that
\begin{equation}\label{ImpEulLagEq-NW}
\frac{\partial \phi}{\partial x^{0}}=v_{0}, \quad 
\frac{\partial \phi}{\partial x^{1}}=v_{1},  \quad p^{0}=v_{0},  \quad p^{1}=-v_{1},  \quad 
\frac{\partial p^0}{\partial x^{0}}
+
\frac{\partial p^1}{\partial x^{1}}
=V^{\prime}(\phi), 
\end{equation}
and imposing the generalized energy constraints $E=0$, one has
\begin{equation}\label{Gen-energy-const}
p=L(x^{0},x^{1},\phi,v_{0},v_{1})-p^{0}v_{0}-p^{1}v_{1}.
\end{equation}

\medskip
\noindent
\emph{(b)  Multi-Dirac Structures:}
On the other hand, since the canonical pre-multisymplectic 3-form on $M$ is given by
\[
\Omega_{M}=-dp_{0}\wedge d\phi \wedge dx^{1}+dp^{1}\wedge d\phi \wedge dx^{0}-dp \wedge dx^{0} \wedge dx^{1},
\]
one can define a multi-Dirac structure $D_M = D_1, D_2$ of degree $2$ associated to $\Omega_M$ on $M$ by the standard prescription.  We omit the definition of $D_1$, since as discussed previously, only $D_2$ is important for the formulation of the dynamics:
\[
D_{M,2}=\left\{ (\mathcal{X},\Sigma) \in P_{2}= T^{2}M \oplus \Lambda^{1}M \mid \mathbf{i}_{\mathcal{X}}\Omega_{M}=\Sigma \right\}.
\]
In this context, recall that the Lagrange-Dirac system for the field theories is given by a triple $(\mathcal{X}, E, D_{M})$ that satisfies
\[
(\mathcal{X},\mathbf{d}E) \in D_{M},
\]
where the 2-vector field $\mathcal{X}$ is given by
\[
\mathcal{X} 
	= \bigwedge_{\mu = 0}^{1}
		\left( \frac{\partial}{\partial x^\mu}
			+ C_\mu \frac{\partial}{\partial \phi}
			+ C_{\mu}^\nu \frac{\partial}{\partial p^\nu}			
			+ c_\mu \frac{\partial}{\partial p} \right).
\]
Hence, we have
\[
\Omega_{M}^{\flat} \cdot \mathcal{X}=- \mathbf{d}E.
\]
The direct computation using coordinates $(x^{0},x^{1},\phi,v_{0},v_{1},p,p^{0},p^{1})$ for $M$ leads to
\[
\begin{split}
&v_{0}=C_{0}, \qquad v_{1}=C_{1},\qquad C^{0}_{0}+C^{1}_{1}=\frac{\partial L}{\partial \phi}, \qquad p^{0}=\frac{\partial L}{\partial v_{0}},\qquad p^{1}=\frac{\partial L}{\partial v_{1}}\\
&c_{0}+C^{1}_{0}C_{1}-C_{0}C^{1}_{0}=\frac{\partial L}{\partial x}^{0}, \qquad c_{1}+C^{0}_{0}C_{1}-C_{0}C^{0}_{1}=\frac{\partial L}{\partial x^{1}},
\end{split}
\]
where
\[
C_{\mu}=\frac{\partial \phi}{\partial x^{\mu}}, \qquad C^{\nu}_{\mu}=\frac{\partial p^{\nu}}{\partial x^{\mu}}, \qquad c_{\mu}=\frac{\partial p}{\partial x^{\mu}}.
\]
Hence, by simple rearrangements, we can obtain implicit Euler-Lagrange equations \eqref{ImpEulLagEq-NW} for nonlinear waves, together with
\[
\frac{\partial p}{\partial x^{\mu}}=\frac{\partial }{\partial x^{\mu}} (L(x^{0},x^{1},\phi,v_{0},v_{1})-p^{0}v_{0}-p^{1}v_{1}).
\]
Needless to say, imposing the  generalized energy constraint $E=0$ yields equation \eqref{Gen-energy-const}.

\paragraph{Example: Electromagnetism.}

The multisymplectic description of electromagnetism can be found, among others, in \cite{GIMM1997}.  Here, we highlight the role of the Hamilton-Pontryagin variational principle and the associated multi-Dirac structure.  The electromagnetic potential $A = A_\mu dx^\mu$ is a section of the bundle $Y = T^\ast X$ of one-forms on spacetime $X$.  For the sake of simplicity, we let $X$ be $\mathbb{R}^4$ with the Minkowski metric, but curved spacetimes can be treated equally well.  The bundle $T$ has coordinates $(x^\mu, A_\mu)$ while $J^1 Y$ has coordinates $(x^\mu, A_\mu, A_{\mu, \nu})$.  The electromagnetic Lagrangian density is given by 
\[
	\mathcal{L}(A, j^1 A) = -\frac{1}{4} \mathbf{d} A \wedge \ast \mathbf{d} A, 
\]
where the Hodge star operator $\ast$ is associated to the Minkowski metric.

\medskip
\noindent
\emph{(a) Hamilton-Pontryagin Principle:}
The Hamilton-Pontryagin action principle is given in coordinates by 
\[
	S = \int_U \left[
	p^{\mu, \nu} \left( \frac{\partial A_\mu}{\partial x^\nu} - A_{\mu, \nu} \right)
	- \frac{1}{4} F_{\mu\nu} F^{\mu\nu}  \right] d^4x,
\]
where $U$ is an open subset of $X$ and $F_{\mu\nu}$ is the anti-symmetrization $F_{\mu\nu} = A_{\mu, \nu} - A_{\nu, \mu}$.  The Hamilton-Pontryagin equations follow as before:
\begin{equation} \label{HPmaxwell}
	p^{\mu,\nu} = F^{\mu\nu}, 
	\quad A_{\mu, \nu} = \frac{\partial A_\mu}{\partial x^\nu}, 
	\quad \frac{d p^{\mu, \nu}}{d x^\nu} = 0.
\end{equation}

\medskip
\noindent
\emph{(b)  Multi-Dirac Structure:} The canonical multisymplectic form for electromagnetism is given in coordinates by 
\[
	\Omega_M = d A_\mu \wedge d p^{\mu, \nu} \wedge d^n x_\nu 
		- dp \wedge d^{n + 1} x.
\]
Let $D = D_1, \ldots, D_{n + 1}$ be the multi-Dirac structure induced by this form and consider in particular the component $D_{n + 1}$ of the highest degree. In terms of this bundle, the implicit Euler-Lagrange equations can be described by 
\[
	(\mathcal{X}, (-1)^{n + 2}\mathbf{d} E_L) \in D_{n + 1}, 
\]
where $\mathcal{X}$ is a partial vector field of degree $n + 1$ given locally by 
\[
	\mathcal{X} = \bigwedge_{\mu = 1}^{n + 1} 
		\left(
			\frac{\partial}{\partial x^\mu}
			+ C_{\nu\mu} \frac{\partial}{\partial A_\nu}
			+ C_{\mu}^{\kappa\nu} \frac{\partial}{\partial 
				p^{\kappa,\nu}}			
			+ C_\mu \frac{\partial}{\partial p}
		\right) 
\]
and $E_L$ is the generalized energy density given by 
\[
	E_L(x^\mu, A_\mu, A_{\mu, \nu}, p^{\mu, \nu}, p) = p + p^{\mu, \nu} A_{\mu, \nu} 
		+ \frac{1}{4} F_{\mu\nu} F^{\mu\nu}.
\]
As before, a coordinate computation then shows that the coefficients of $\mathcal{X}$ are given by 
\[
	C_{\mu \nu} = A_{\mu, \nu},
	\quad
	C_\nu^{\mu\nu} = \frac{\partial L}{\partial A_\mu} = 0,
	\quad 
	p^{\mu,\nu} = \frac{\partial L}{\partial A_{\mu, \nu}} = 
	F^{\mu\nu}
\]
and 
\[
	C_\mu + C_\mu^{\kappa \lambda} C_{\kappa \lambda} 
		- C_{\nu\mu}C^{\nu\lambda}_\lambda =
	\frac{\partial L}{\partial x^\mu} = 0,
\]
where 
\[
	C_{\nu\mu} = \frac{\partial A_{\nu}}{\partial x^\mu}, 
	\quad 
	C^{\kappa\lambda}_\mu = 
		\frac{\partial p^{\kappa, \lambda} }{\partial x^\mu}, 
	\quad \text{and} \quad
	C_\mu = \frac{\partial p}{\partial x^\mu}.
\]
It is easy to show that these equations are equivalent to Maxwell's equations \eqref{HPmaxwell} in implicit form, together with the energy constraint \eqref{encons-LDS}.

\paragraph{Example: Time-Dependent Mechanical Systems with Affine Constraints.}
Let $\pi_{XY}:Y \to X$ be a finite dimensional fiber bundle called a {\it covariant configuration bundle} over a oriented manifold $X$.
In the case of a time-dependent mechanical system, we can set 
$X=\mathbb{R}$ and $Y=\mathbb{R} \times Q$, where $Q$ is an $n$-dimensional configuration manifold. Namely, $Y$ is an $(n+1)$-dimensional differentiable manifold, with local coordinates $(t, q^{i})$, where $t \in X$ indicates the time and $q^{i}$ are fiber coordinates of $Y$. The first jet bundle of local sections of $\pi_{XY}$ is given  by $J^{1}Y \cong \mathbb{R} \times TQ$, which is denoted, in local coordinates $(t,q^{i},v^{i})$ for $\gamma \in J^{1}Y$, by $\gamma=dt \otimes \left( {\partial}/{\partial t}+ v^{i} {\partial}/{\partial q^{i}}\right)$.
Let $\mathcal{L}:J^{1}Y \to \Lambda^{1}X$ be a Lagrangian density, which is denoted by 
\[
\mathcal{L(\gamma)}=L(t,q^{i},v^{i})dt,
\]
where $L$ is a Lagrangian, possible degenerate, which is defined on $J^{1}Y$ and $\eta=dt$ is the volume form on $\mathbb{R}$. The dual jet bundle $Z(\cong J^{1}Y^{\star})= T^{\ast}Y \cong T^{\ast}\mathbb{R} \times T^{\ast}Q$ is the vector bundle over $Y$, with local coordinates $(t,q^{i},p_{t},p^{i})$. The canonical one-form on $Z$ is given by $\Theta=p_{i}dq^{i} +p_{t}dt$ and  the canonical two-form is given by $\Omega=-\mathbf{d}\theta=dq^{i} \wedge dp_{i}-dp_{t} \wedge dt$. 
\medskip

Let us consider a time-dependent mechanical system which is constrained by affine nonholonomic constraints, where $\Delta_{Y}$ is a nontrivial distribution on $Y=\mathbb{R} \times Q$, which is given by, for each $(t,q) \in Y$,
\begin{equation*}\label{ConstDist}
\Delta_{Y}(t,q)=\left\{ j^{1}\phi(t) \in J^{1}Y_{(t,q)} \mid \left< \varphi_{r}(t,q), j^{1}\phi(t) \right> =0  \right\},
\end{equation*}
where $\phi_{r}$ are the $k$ one-forms given by
\[
\varphi_{r}=A_{i}^{r}(t,q)dq^{i} +B^{r}(t,q)dt, \quad r=1,...,k.
\]
 Then, the distribution $\Delta_{Y}$ induces an affine subbundle $\mathcal{C}$ of $J^{1}Y =\mathbb{R} \times TQ$ locally given by
\[
\mathcal{C}=\{ (t,q^{i},v^{i}) \in J^{1}Y \mid A_{i}^{r}(t,q)v^{i} +B^{r}(t,q)=0 \}.
\]

Let $M=J^{1}Y \oplus Z=(\mathbb{R} \times TQ) \oplus (T^{\ast}\mathbb{R} \times T^{\ast}Q)$ be the Pontryagin bundle over $Y=\mathbb{R} \times Q \to X=\mathbb{R}$ and let $\pi_{YM}: M \to Y$ be a canonical projection given by $(t,q^{i}, v^{i},p_{t},p_{i}) \mapsto (t,q^{i})$ and define the distribution on $M$ by
\begin{equation}\label{DistZ}
\Delta_{M}=(T\pi_{YM})^{-1}|_{J^{1}Y}(\mathcal{C}) \subset TM
\end{equation}
and its annihilator $\Delta_{M}^{\circ}$ is spanned by  $k$ {\it horizontal} one-forms on $M$
\begin{equation*}\label{OneFormZ}
\tilde{\varphi}_{r}=A_{i}^{r}(t,q)(dq^{i} -v^{i}dt), \quad r=1,...,k.
\end{equation*}
 
Let $P=TM \oplus \Lambda^{1}M$ be the graded Pontryagin bundle and $\left<\left< (\mathcal{X},  \Sigma), (\bar{\mathcal{X}}, \bar{\Sigma}) \right>\right>_{-} $ in \eqref{gsym} exactly corresponds to
\[
\left<\left< (\mathcal{X},  \Sigma), (\bar{\mathcal{X}}, \bar{\Sigma}) \right>\right>_{-}=\frac{1}{2}\left( \mathbf{i}_{\bar{\mathcal{X}}}\Sigma+\mathbf{i}_{\mathcal{X}}\bar{\Sigma} \right)
\]
for $(\mathcal{X}, \Sigma) \in P$ and $(\bar{\mathcal{X}}, \bar{\Sigma}) \in P$. 

Let $\pi_{ZM}: M \to Z$ and the induced (multi-)Dirac structure $D_{M}:=D_{M,1} \subset TM \oplus T^{\ast}M$ on $M$ induced from the affine constraint distribution $\Delta_{M}$ in \eqref{DistZ} for time-dependent mechanics may be given for the case in which $n=0$ and $r=s=1$. This is given by
\begin{equation}\label{DiracField}
\begin{split}
D_{M}=\left\{ (\mathcal{X}, \Sigma) \in TM \oplus T^{\ast}M  \mid  \; \Sigma-\mathbf{i}_{\mathcal{X}} \Omega_{M} \in \Delta_{M}^{\circ} \quad  \textrm{for} \quad  
\mathcal{X} \in \Delta_{M} \right\},
\end{split}
\end{equation}
where $\Omega_{M}:=\pi_{ZM}^{\ast}\Omega$. 

Now the partial vector field $\bar{\mathcal{X}}: M \to TZ$ is denoted in local coordinates $(t,q^{i},v^{i},p_{i},p_{t})$ on $M=J^{1}Y \oplus Z$ as
\begin{equation*} \label{multvf}
	\bar{\mathcal{X}} =
		\frac{\partial}{\partial t}
			+ \dot{q}^i \frac{\partial}{\partial q^{i}}
			+ \dot{p}_{i} \frac{\partial}{\partial p_i}
			+ \dot{p_{t}} \frac{\partial}{\partial p_{t}}, \qquad
         \mathbf{i}_{\mathcal{X}}(\pi_{XZ}^{\ast}\eta)=1,
\end{equation*}
where $\eta=dt$ is the volume form on $X=\mathbb{R}$. Then, the vertical lift of the partial vector field $\mathcal{X}: M \to TM$ is given by
$\bar{\mathcal{X}}=T\pi_{XM}(\mathcal{X})$,
where $T\pi_{XM}: TM \to TZ$ is the bundle map associated to the canonical projection $\pi_{XM}: M \to Z$. In coordinates, 
one has
\begin{equation} \label{MultVectField-time}
	\mathcal{X} =
		\frac{\partial}{\partial t}
			+ \dot{q}^i \frac{\partial}{\partial q^{i}}
			+ \dot{p}_{i} \frac{\partial}{\partial p_i}
			+ \dot{p_{t}} \frac{\partial}{\partial p_{t}}, \qquad
         \mathbf{i}_{\mathcal{X}}(\pi_{XM}^{\ast}\eta)=1,
\end{equation}
where we note that the horizontal components of $v^{i}$ are zero.
\medskip

On the other hand, the generalized energy $E: M \to \Lambda^{1}X=\mathbb{R}$ is given by
\begin{equation*}\label{GenEnergy-time}
\begin{split}
E(t,q^{i},v^{i},p_{t},p_{i})=p_{t}+p_{i}v^{i}-L(t,q^{i},v^{i})
\end{split}
\end{equation*}
and the differential of $E$ is locally given by 
\begin{equation}\label{DiffGenEnergy-time}
\begin{split}
\mathbf{d}E(t,q^{i},v^{i},p_{t},p_{i})=\left( -\frac{\partial{L}}{\partial{t}}\right)dt+\left( -\frac{\partial{L}}{\partial{q^{i}}}\right)dq^{i}+\left(p_{i}-\frac{\partial{L}}{\partial{v^{i}}}\right) dv^{i}+dp_{t}+v^{i}dp_{i}.
\end{split}
\end{equation}

Thus, the {\it nonholonomic Lagrange-Dirac system for time-dependent mechanics} is given by a triple $(\mathcal{X},E, D)$ which satisfies the condition
\begin{equation*} 
(\mathcal{X}, \mathbf{d}E) \in D,
\end{equation*}
which yields the intrinsic nonholonomic Lagrange-Dirac systems as 
\[
\mathbf{i}_{\mathcal{X}}\Omega-\mathbf{d}E \in \Delta_{M}^{\circ}.
\] 
In coordinates, one obtains
\begin{equation*}
\begin{split}
\dot{t}=1,\quad \dot{q}^{i}=v^{i}, \quad \dot{p_{i}}-\frac{\partial L}{\partial q^{i}}=\lambda_{r}A^{r}_{i},
\quad p_{i}-\frac{\partial L}{\partial v^{i}}=0,
\end{split}
\end{equation*}
together with the affine constraints
\[
A^{r}_{i}(t,q)v^{i}+B^{i}(t,q)=0
\]
and the energy conservation
\[
\frac{d}{dt}\left( p_{t}+p_{i}v^{i}-L(t,q^{i},v^{i}) \right)=0.
\]
Imposing the generalized energy constraint $E=0$, one can recover
\[
p_{t}=L-p_{i}v^{i}.
\] 

\section{Conclusions and Future Work}

In this paper, we have developed a notion of multi-Dirac structure on the Pontryagin bundle and its associated Lagrange-Dirac field theories, where we allow a given Lagrangian to be degenerate. We have shown that the Hamilton-Pontryagin variational principle for field theories induces implicit Euler-Lagrange equations for fields, which is consistent with the standard case of the Lagrange-Dirac field theory. The ideas are shown to be a natural extension of Dirac structures and Lagrange-Dirac dynamical systems in mechanics. We have further developed the induced multi-Dirac structure from distributions and with the associated nonholonomic Lagrange-Dirac field theories. Additionally, we have also shown that the implicit Lagrange-d'Alembert field equations, which are equivalent with the nonholonomic Lagrange-Dirac systems, can be formulated by the Lagrange-d'Alembert-Pontryagin principle. Finally, we have demonstrated the Lagrange-Dirac field theories by illustrative examples of scalar nonlinear waves (Klein-Gordon equations), electromagnetism and time-depedent mechanical systems with affine constraints.

\paragraph{Poisson Brackets for Field Theories.}  In the literature on classical field theories, a number of different Poisson brackets have been proposed.  We have already discussed the multi-Poisson bracket of \cite{CIdeLe96}, but other noteworthy developments include the spacetime covariant bracket of \cite{MaMoMo1986} and the work of \cite{FoRo2005} on relating the multisymplectic framework to the Peierls-de Witt bracket.  We hope to shed further light on these various brackets using multi-Dirac structures.

\paragraph{Discretizations of Multi-Dirac Structures and Field Theories.}

\cite{LeOh2008} recently proposed a concept of \emph{discrete Dirac structure} in mechanics.  At the same time, it has been shown by \cite{BoRaMa2009} that discrete versions of the Hamilton-Pontryagin principle can be used to derive very accurate geometric integrators for mechanical systems.  We propose to explore the field-theoretic version of these discretizations: since multi-Dirac structures and variational principles are a natural way to describe field theories even for degenerate Lagrangians, this approach is especially promising.  Moreover, in related work (see \cite{CVGMY2010}) we show that the field-theoretic Hamilton-Pontryagin principle incorporates the \emph{adjoint formalism} for covariant field theories.  By constructing discrete counterparts of these results, we hope to come up with a discrete representation of field theories which respects the gauge freedom of the underlying theory.

\paragraph{Space-Time Decomposition of Multi-Dirac Structures.}

The multi-Dirac structures introduced in this paper are \emph{covariant} in the sense that no distinction is made between time and the spatial variables on the base space.  In fact, as we have pointed out before, the configuration bundle $\pi_{XY}: Y \to X$ can be arbitrary and in particular $X$ does not have to represent spacetime.

While this approach has various advantages (see \cite{GolSter73, KiTu1979, Brid1997, GIMM1997} for a discussion), not in the least that the resulting structures are all finite-dimensional, it is often necessary to single out one coordinate direction as time, for instance when considering the initial-value problem for field theories (see \cite{Wald1984}).  This procedure is referred to as choosing a \emph{(bundle) slicing}.  A comprehensive overview of the theory of slicings and the corresponding instantaneous formulations of field theories can be found in \cite{GIMM1999, BiSnFi1988}.

We intend to investigate the behavior of multi-Dirac structures in the presence of a bundle slicing.  Given the choice of a slicing, we expect that a multi-Dirac structure will induce an infinite-dimensional Dirac structure on the instantaneous space of fields.  This is similar to the results of \cite{GIMM1999}, who show that the multisymplectic form on the dual of a jet bundle induces a (weakly) symplectic form on the space of fields once a slicing has been chosen.  Among other things, we expect that establishing the correspondence between the theory of covariant, multi-Dirac structures and their instantaneous counterparts will be important in a number of areas, most notably the theory of Poisson brackets for field theories.  We refer to \cite{MaMoMo1986} and \cite{FoRo2005} for two different approaches to the theory of covariant Poisson brackets for field theories.  

\paragraph{Symmetry and Reduction of Multi-Dirac Structures.}  

The presence of non-trivial symmetry groups is an essential feature of many classical field theories, such as electromagnetism or Yang-Mills theories.  More complex examples are the theory of perfect fluids and Einstein's theory of relativity.  It is often useful to obtain a form of the field equations from which the symmetry has been factored out; this is for instance what happens in writing Maxwell's equations in terms of the field strength $F_{\mu\nu}$ rather than the vector potential $A_\mu$.  From a bundle-theoretic point of view, various approaches of symmetry reduction for field theories have been proposed (see \cite{CLM2003, CaRa2003, GaRa2010} and the references therein).  We propose to develop a reduction theory for multi-Dirac structures along the lines of the theory of symmetry reduction for  Dirac structures in classical mechanics developed by \cite{BR2004, YoMa2007, YoMa2009}.  In this way, we expect to unify and extend previous results.

\paragraph{The Relation with Stokes-Dirac Structures.}

 In the introduction, we mentioned the concept of \emph{Stokes-Dirac structures}, which are infinite-dimensional structures based on the exterior differential and Stokes theorem that satisfy the axioms necessary for a Dirac structure.  These structures were introduced by \cite{VaMa2002} as a means to describe boundary control for field theories in an intrinsic way.  In \cite{VaYoMa2010}, it was shown that Stokes-Dirac structures appear through symmetry reduction for a certain canonical infinite-dimensional Dirac structure.

Given the relevance of Stokes-Dirac structures for the control of field theories, it is important to investigate the link with the multi-Dirac structures introduced in this paper.   Based on the discussion of slicings in the context of multi-Dirac structures and the results of \cite{VaYoMa2010}, we may conclude that a Stokes-Dirac structure can be obtained starting from a multi-Dirac structure by first choosing a slicing and then performing (instantaneous) symmetry reduction.  One can now ask whether these two operations can be reversed: do we obtain the same Stokes-Dirac structure by first performing (covariant) multi-Dirac reduction and then choosing a slicing?  In other words, when does the following diagram commute?

\[\xymatrix{
	\text{Canonical Multidirac structure} \ar[d]_{\text{Covariant reduction}} 
		\ar[rr]^{3 + 1} & &
		\text{Infinite-Dimensional Dirac structure}
	 \ar[d]^{\text{Instantaneous reduction}} \\
	\text{Multidirac structure} \ar[rr]_{3 + 1} & &
		\text{Stokes-Dirac structure}
}\]

\appendix
\section{Generalities about Multi-Vector Fields}

\subsection{Definitions}

Let $M$ be a manifold of dimension $n$.  A $k$-multivector field (where $k \le n$) on $M$ is a section $\mathcal{X}$ of the $k$-fold exterior power $T^k M$ of the tangent bundle.  The module of all $k$-multivector fields is denoted by $\mathfrak{X}^k(M)$.

We define the {\bfi interior product} of multivector fields and forms as follows.  Let $\mathcal{X} \in \mathfrak{X}^k(M)$ and $\alpha \in \Omega^l(M)$, where $k \le l$.  The {\bfi left interior product} $\mathcal{X} \intprod \alpha$ is then the unique $(l-k)$-form such that
\[
	(\mathcal{X} \intprod \alpha)(\mathcal{X}') = \alpha(\mathcal{X} \wedge \mathcal{X}')
\]
for all $(l - k)$-multivectors $\mathcal{X}'$.  Similarly, the {\bfi right interior product} of a multivector $\mathcal{X} \in \mathfrak{X}^k(M)$ and a form $\beta \in \Omega^m(M)$, with $k \ge m$, is the unique $(k-m)$-multivector $\mathcal{X} \prodint \beta$ with the property that 
\[
	(\mathcal{X} \prodint \beta) \intprod \gamma = (\beta \wedge \gamma)(\mathcal{X})
\]
for all $(k-m)$-forms $\gamma$ on $M$.

Further information about multivectors and interior products can be found in (for instance) \cite{Tu1974, Marle1997}.  From these papers, we quote the following two results on the relation between the interior product and the wedge product of vector fields and forms, respectively.

\begin{lemma}[\cite{Tu1974}] \label{lemma:wedge}
Let $\mathcal{X} \in \mathfrak{X}^k(M)$ and $\alpha \in \Omega^1(M)$.  Then 
\[
	(\mathcal{X} \wedge \mathcal{X}') \prodint \alpha = 
		(\mathcal{X} \prodint \alpha) \wedge \mathcal{X}' + 
		(-1)^k \mathcal{X} \wedge (\mathcal{X}' \prodint \alpha)
\]
for any multivector $\mathcal{X}'$, and 
\[
	\mathcal{X} \intprod (\alpha \wedge \beta) = (\mathcal{X} \prodint \alpha) \intprod \beta
		+ (-1)^k \alpha \wedge (\mathcal{X} \intprod \beta)
\]
for any form $\beta$.
\end{lemma}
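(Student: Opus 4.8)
The plan is to prove both identities \emph{pointwise} and by induction on the degree $k$ of $\mathcal{X}$, exploiting the fact that every operation involved is $C^\infty(M)$-linear in each of its slots. Tensoriality reduces the statement to an identity in the exterior algebras $\Lambda^\bullet V$ and $\Lambda^\bullet V^\ast$ of the single vector space $V = T_pM$; there, by multilinearity and antisymmetry, it suffices to test on decomposable multivectors $\mathcal{X} = X_1 \wedge \cdots \wedge X_k$ and on forms assembled from a basis. The two structural facts I will lean on are the \emph{factorization law} for the left interior product,
\[
	(X \wedge \mathcal{Y}) \intprod \omega = \mathcal{Y} \intprod (\mathbf{i}_X \omega),
\]
valid for any vector field $X$, multivector $\mathcal{Y}$ and form $\omega$ (this follows immediately by evaluating both sides on an arbitrary test multivector $\mathcal{Z}$ and invoking the defining relation $(\mathcal{W}\intprod\omega)(\mathcal{Z}) = \omega(\mathcal{W}\wedge\mathcal{Z})$, noting $\mathbf{i}_X\omega = X\intprod\omega$), together with the classical antiderivation property $\mathbf{i}_X(\alpha \wedge \beta) = (\mathbf{i}_X \alpha)\wedge\beta + (-1)^{\deg\alpha}\,\alpha \wedge \mathbf{i}_X\beta$ of contraction by a single vector field.

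I will establish the second identity first, because its base case $k=1$ \emph{is} exactly this antiderivation property: for $\mathcal{X} = X$ one has $X \prodint \alpha = \alpha(X)$, so, $\alpha$ being a one-form,
\[
	X \intprod (\alpha \wedge \beta) = \alpha(X)\,\beta - \alpha \wedge \mathbf{i}_X\beta
\]
is precisely the claim for $k=1$. For the inductive step I write $\mathcal{X} = X \wedge \mathcal{Y}$ with $\deg \mathcal{Y} = k - 1$, apply the factorization law to rewrite $\mathcal{X} \intprod(\alpha\wedge\beta) = \mathcal{Y}\intprod\big(\mathbf{i}_X(\alpha\wedge\beta)\big)$, expand $\mathbf{i}_X(\alpha\wedge\beta)$ by the antiderivation rule, and then apply the inductive hypothesis to the resulting terms of the form $\mathcal{Y}\intprod(\,\cdot\,\wedge\beta)$. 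Collecting terms and recognizing $(X\wedge\mathcal{Y})\prodint\alpha$ via its own $k=1$ expansion $\alpha(X)\mathcal{Y} - X\wedge(\mathcal{Y}\prodint\alpha)$ then reassembles the stated formula.

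Having proved the second identity, I will deduce the first by \emph{duality}: two multivectors of equal degree coincide iff they pair identically, through $\intprod$, with every form of complementary degree. Testing the left-hand side $(\mathcal{X}\wedge\mathcal{X}')\prodint\alpha$ against an arbitrary such form $\gamma$ and using the defining relation for $\prodint$ gives $(\alpha\wedge\gamma)(\mathcal{X}\wedge\mathcal{X}')$, which I rewrite as $\big(\mathcal{X}\intprod(\alpha\wedge\gamma)\big)(\mathcal{X}')$. Applying the already-proven second identity to $\mathcal{X}\intprod(\alpha\wedge\gamma)$, evaluating on $\mathcal{X}'$, and unwinding the definitions of $\intprod$ and $\prodint$ on each term (the second term being handled through the relation $(\alpha\wedge\psi)(\mathcal{X}') = \psi(\mathcal{X}'\prodint\alpha)$) reproduces exactly $\gamma$ applied to the right-hand side of the first identity. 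Since $\gamma$ is arbitrary, the first identity follows; alternatively it admits a direct induction on $k$ entirely parallel to the one above, starting from the graded Leibniz rule for $X\prodint\alpha$.

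The routine part is the reduction to decomposables together with the factorization law; the genuine work, and the main obstacle, is \emph{sign bookkeeping}. Each time a wedge factor is commuted past $\alpha$ or past a contraction, a Koszul sign appears, and one must check that the accumulated signs in the inductive step collapse into the single prefactor $(-1)^k$ of each identity. I expect the cleanest safeguard is to carry the degrees of $\mathcal{X}$, $\mathcal{X}'$ (resp. $\alpha$, $\beta$, $\gamma$) explicitly through every rewriting and to verify the $k=1$ and $k=2$ cases by hand before trusting the general induction.
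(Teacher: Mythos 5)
The paper does not actually prove this lemma: it is quoted verbatim from \cite{Tu1974} (``From these papers, we quote the following two results\ldots''), so there is no internal proof to compare against. Your proposal supplies a correct, self-contained argument from the paper's definitions, which is more than the paper offers. The key ingredients all check out: the factorization law $(X \wedge \mathcal{Y}) \intprod \omega = \mathcal{Y} \intprod (\mathbf{i}_X \omega)$ is immediate from the defining relation $(\mathcal{W} \intprod \omega)(\mathcal{Z}) = \omega(\mathcal{W} \wedge \mathcal{Z})$; the base case $k=1$ of the second identity is indeed the classical antiderivation rule for $\mathbf{i}_X$; the inductive step assembles correctly (I verified the signs: the term $-(-1)^{k-1}\alpha \wedge (\mathcal{X}\intprod\beta)$ becomes $+(-1)^k \alpha \wedge (\mathcal{X}\intprod\beta)$, and the remaining two terms are exactly $(\mathcal{X}\prodint\alpha)\intprod\beta$ once one expands $(X\wedge\mathcal{Y})\prodint\alpha$ and applies the factorization law once more); and the duality step, using that forms of complementary degree separate multivectors pointwise together with the relation $(\alpha\wedge\psi)(\mathcal{X}') = \psi(\mathcal{X}'\prodint\alpha)$, correctly transports the second identity into the first.

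One point of logical hygiene you should make explicit: inside the induction for the second identity you invoke the ``$k=1$ expansion'' $(X\wedge\mathcal{Y})\prodint\alpha = \alpha(X)\,\mathcal{Y} - X\wedge(\mathcal{Y}\prodint\alpha)$, which is the $k=1$ case of the \emph{first} identity --- the one you plan to prove only afterwards. As written this looks circular. It is not, but you must order the argument correctly: first prove the second identity at $k=1$ (classical antiderivation), then obtain the first identity at $k=1$ by your duality argument (which at that stage only uses the second identity at $k=1$), or by a two-line direct check against test forms; only then run the induction, deducing the first identity at each degree from the second before using it at the next. Equivalently, prove both identities simultaneously by a single induction on $k$. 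With that reordering stated, the proof is complete.
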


When the multi-vector field is decomposable, the right interior product with a one-form can easily be computed by means of the following formula, the proof of which follows from the previous theorem.

\begin{lemma} \label{lemma:decomp}
	Let $\mathcal{X}$ be a $k$-multi-vector field which is {\bfi decomposable} in the sense that $\mathcal{X} = \bigwedge_{\mu = 1}^k \mathcal{X}_\mu$, where $\mathcal{X}_\mu$ ($\mu = 1, \ldots, k$) are vector fields and consider a one-form $\alpha$.  Then
	\[
		\mathcal{X} \prodint \alpha = 
			\sum_{\mu = 1}^k (-1)^{\mu + 1} \left<\mathcal{X}_\mu, \alpha \right> 
				\hat{\mathcal{X}}_\mu, 
	\]
	where $\hat{\mathcal{X}}_\mu$ is the $(k-1)$-vector field obtained by deleting $\mathcal{X}_\mu$, i.e. 
	\[
		\hat{\mathcal{X}}_\mu = \bigwedge_{\stackrel{\lambda = 1}{\lambda \ne \mu}}^{n+1} \mathcal{X}_\lambda.
	\]
\end{lemma}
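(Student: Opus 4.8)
The plan is to prove the formula by induction on the degree $k$ of the decomposable multivector field, using the first Leibniz-type identity of Lemma~\ref{lemma:wedge} as the inductive engine. Because the identity in that lemma splits off a factor through the \emph{right} interior product $\prodint$, it is perfectly adapted to the right-hand side we wish to compute.

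For the base case $k = 1$, the multivector field is a single vector field $\mathcal{X} = \mathcal{X}_1$, and the right interior product $\mathcal{X}_1 \prodint \alpha$ reduces to the scalar $\left<\mathcal{X}_1, \alpha\right>$. With the convention that the empty wedge product $\hat{\mathcal{X}}_1$ equals the constant $1$, the right-hand side is $(-1)^{1+1}\left<\mathcal{X}_1, \alpha\right>$, so the two sides agree trivially.

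For the inductive step I would split off the last factor, writing $\mathcal{X} = \mathcal{Y} \wedge \mathcal{X}_k$ with $\mathcal{Y} = \mathcal{X}_1 \wedge \cdots \wedge \mathcal{X}_{k-1}$ a decomposable $(k-1)$-multivector. Applying the first identity of Lemma~\ref{lemma:wedge} with left factor $\mathcal{Y}$ of degree $k-1$ gives
\[
\mathcal{X} \prodint \alpha = (\mathcal{Y} \prodint \alpha) \wedge \mathcal{X}_k + (-1)^{k-1} \mathcal{Y} \wedge (\mathcal{X}_k \prodint \alpha).
\]
The second term is immediate: $\mathcal{X}_k \prodint \alpha = \left<\mathcal{X}_k, \alpha\right>$ is a scalar and $\mathcal{Y} = \hat{\mathcal{X}}_k$, so this term equals $(-1)^{k+1}\left<\mathcal{X}_k, \alpha\right> \hat{\mathcal{X}}_k$, reproducing exactly the $\mu = k$ summand. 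For the first term I would invoke the inductive hypothesis on $\mathcal{Y}$, expanding $\mathcal{Y} \prodint \alpha = \sum_{\mu=1}^{k-1}(-1)^{\mu+1}\left<\mathcal{X}_\mu, \alpha\right> \hat{\mathcal{Y}}_\mu$, where $\hat{\mathcal{Y}}_\mu$ deletes $\mathcal{X}_\mu$ from $\mathcal{Y}$; wedging on the right with $\mathcal{X}_k$ and observing that $\hat{\mathcal{Y}}_\mu \wedge \mathcal{X}_k = \hat{\mathcal{X}}_\mu$ then yields precisely the summands $\mu = 1, \ldots, k-1$. Adding the two contributions reconstitutes the full sum $\sum_{\mu=1}^k$, completing the induction.

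The main point to watch is the sign bookkeeping: one must confirm that the $(-1)^{k-1}$ produced by the Leibniz rule coincides with the $(-1)^{k+1}$ demanded for the top summand, and that restoring $\mathcal{X}_k$ in its original last slot introduces no transposition sign when identifying $\hat{\mathcal{Y}}_\mu \wedge \mathcal{X}_k$ with $\hat{\mathcal{X}}_\mu$. Both are routine precisely because the decomposition is arranged so that $\mathcal{X}_k$ is peeled off from the right, which is exactly why the right interior product makes the computation clean; the only genuine care needed is the $k=1$ convention for the empty product.
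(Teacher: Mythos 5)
Your proof is correct and is essentially the argument the paper intends: the paper gives no written-out proof, stating only that the formula ``follows from'' Lemma~\ref{lemma:wedge}, and your induction on $k$ using the first identity of that lemma, peeling off the last factor, is precisely the spelled-out version of that remark. The sign bookkeeping checks out, including $(-1)^{k-1} = (-1)^{k+1}$ for the $\mu = k$ summand, the transposition-free identification $\hat{\mathcal{Y}}_\mu \wedge \mathcal{X}_k = \hat{\mathcal{X}}_\mu$, and the $k=1$ base case with the empty-product convention.
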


\begin{corollary} \label{cor:prodprod}
	Let $\mathcal{X}$ be a decomposable $k$-multivector and $\alpha$ an $l$-form, where $l \ge k$.  Then $\mathcal{X} \prodint (\mathcal{X} \intprod \alpha ) = 0$.
\end{corollary}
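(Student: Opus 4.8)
The plan is to read the corollary as the multivector analogue of the elementary identity $\mathbf{i}_X \mathbf{i}_X = 0$ for a single vector field, and to prove it by showing that every contraction that occurs secretly carries a repeated factor. The cleanest route, and the only case actually needed in the energy-conservation theorem, is the one in which $\mathcal{X} \intprod \alpha$ is a one-form, i.e.\ $l = k + 1$ (indeed there $\mathcal{X}$ is an $(n+1)$-multivector and $\Omega_M$ an $(n+2)$-form). In this case Lemma~\ref{lemma:decomp} applies directly and disposes of the claim almost immediately.

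First I would set $\beta := \mathcal{X} \intprod \alpha$ and write $\mathcal{X} = \bigwedge_{\mu=1}^{k} X_\mu$. Applying Lemma~\ref{lemma:decomp} to the right interior product of the decomposable multivector $\mathcal{X}$ with the one-form $\beta$ gives
\[
\mathcal{X} \prodint \beta = \sum_{\mu=1}^{k} (-1)^{\mu+1} \left<X_\mu, \beta\right> \hat{\mathcal{X}}_\mu,
\]
so it suffices to show that every coefficient $\left<X_\mu, \beta\right>$ vanishes. By the defining property of the left interior product, $\left<X_\mu, \beta\right> = (\mathcal{X}\intprod\alpha)(X_\mu) = \alpha(\mathcal{X}\wedge X_\mu)$. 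Since $X_\mu$ is itself one of the factors of $\mathcal{X}$, the wedge $\mathcal{X}\wedge X_\mu = X_1\wedge\cdots\wedge X_k\wedge X_\mu$ contains a repeated factor and hence vanishes. Thus each coefficient is zero and $\mathcal{X}\prodint\beta = 0$.

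For a general degree $l$ (with $k+1 \le l \le 2k$, the range in which the outer right interior product is actually defined) I would argue by duality. Writing $\beta = \mathcal{X}\intprod\alpha$, it is enough by the defining property of $\prodint$ to show $(\beta\wedge\gamma)(\mathcal{X}) = \mathcal{X}\intprod(\beta\wedge\gamma) = 0$ for every $(2k-l)$-form $\gamma$. Expanding $\mathcal{X}\intprod(\beta\wedge\gamma)$ by iterating the derivation formula of Lemma~\ref{lemma:wedge} produces a signed sum over partitions of the factors of $\mathcal{X}$ into a block contracted against $\beta$ and a block contracted against $\gamma$; in each surviving term the block hitting $\beta$ is non-empty, and using the associativity relation $\mathcal{Y}\intprod(\mathcal{Z}\intprod\alpha) = (\mathcal{Z}\wedge\mathcal{Y})\intprod\alpha$ this contraction equals $(\mathcal{X}\wedge\mathcal{Y})\intprod\alpha$ for a non-empty subwedge $\mathcal{Y}$ of $\mathcal{X}$, which again vanishes by the repeated-factor argument.

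The main obstacle is bookkeeping rather than anything conceptual: one must keep the degree constraint in view and track the signs in the partition expansion. In particular, the hypothesis should be read as $l > k$, since at $l = k$ the inner contraction $\mathcal{X}\intprod\alpha$ is a function $f$ and $\mathcal{X}\prodint f = f\,\mathcal{X}$ need not vanish. Because every surviving term is forced to pair $\alpha$ with a multivector carrying a repeated factor, the precise signs appearing in the expansion are immaterial to the conclusion.
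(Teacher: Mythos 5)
Your first argument is, in its core, exactly the paper's proof: the paper also sets $\beta = \mathcal{X} \intprod \alpha$, applies Lemma~\ref{lemma:decomp}, and concludes that each coefficient $\left<\mathcal{X}_\mu, \beta\right> = \mathbf{i}_{\mathcal{X}_\mu}\mathbf{i}_{\mathcal{X}}\alpha$ vanishes --- though the paper asserts this vanishing without comment, whereas you supply the reason (the iterated contraction equals $\alpha$ contracted with $\mathcal{X}\wedge\mathcal{X}_\mu$, which carries a repeated factor). Where you go beyond the paper: you correctly observe that Lemma~\ref{lemma:decomp} applies only when $\beta$ is a one-form, i.e.\ $l = k+1$; the paper's proof silently assumes this (it is the only case it ever uses, namely $\alpha = \Omega_M$ of degree $n+2$ against an $(n+1)$-multivector in the energy-conservation theorem, a fact betrayed by the stray upper limit $n+1$ in the paper's own formula), so strictly speaking the published proof covers only one value of $l$ in the stated range. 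Your duality argument for $k+1 \le l \le 2k$ fills this in correctly: every surviving term in the expansion of $\mathbf{i}_{\mathcal{X}}(\beta\wedge\gamma)$ pairs $\beta = \mathcal{X}\intprod\alpha$ with a non-empty subwedge $\mathcal{Y}$ of the factors of $\mathcal{X}$, and $\mathcal{Y}\intprod(\mathcal{X}\intprod\alpha) = (\mathcal{X}\wedge\mathcal{Y})\intprod\alpha = 0$ by the same repeated-factor argument. The only looseness is attributing that expansion to Lemma~\ref{lemma:wedge}, which as stated takes a \emph{one-form} in its first slot; what you actually need is the standard antiderivation property of $\mathbf{i}_{X_\mu}$ iterated over the factors of $\mathcal{X}$, which yields the shuffle expansion directly, so nothing breaks. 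Finally, your remark about the endpoint $l = k$ is a genuine (small) correction to the statement as printed: there $\mathcal{X}\intprod\alpha$ is the function $\alpha(\mathcal{X})$ and $\mathcal{X}\prodint\bigl(\alpha(\mathcal{X})\bigr) = \alpha(\mathcal{X})\,\mathcal{X}$, which is nonzero in general (take $\mathcal{X} = \partial_x$, $\alpha = dx$), so the hypothesis should indeed read $l > k$; this does not affect the paper, which only invokes the corollary with $l = k+1$.
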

\begin{proof}
Let $\beta = \mathcal{X} \intprod \alpha$.  With the notations of the previous lemma, we have that 
\[
\mathcal{X} \prodint \beta = 
			\sum_{\mu = 1}^k (-1)^{\mu + 1} \left<\mathcal{X}_\mu, \beta \right> 
				\hat{\mathcal{X}}_\mu, 
	\]
but $\left<\mathcal{X}_\mu, \beta \right> = 0$ for all $\mu = 1, \ldots, k$, since
$\left<\mathcal{X}_\mu, \beta \right> = 
	\mathbf{i}_{\mathcal{X}_\mu} \mathbf{i}_{\bigwedge_{\lambda = 1}^{n+1} \mathcal{X}_\lambda} \alpha = 0$.
\end{proof}

In this paper, we also let $\hat{\mathcal{X}}_{\mu\nu}$ (where $\mu \ne \nu$) be the $(k-2)$-multivector obtained by deleting both $\mathcal{X}_\mu$ and $\mathcal{X}_\nu$: 
\[
	\hat{\mathcal{X}}_{\mu\nu} = \bigwedge_{\stackrel{\lambda = 1}{\lambda \ne \mu, \nu}}^{n+1} \mathcal{X}_\lambda.
\]
If $\mu < \nu$, we then have that 
\[
	\mathcal{X}_\lambda \wedge \hat{\mathcal{X}}_{\mu\nu} 
		= (-1)^{\mu+1} \delta_{\mu\lambda} \hat{\mathcal{X}}_\nu 
		-  (-1)^{\nu+1} \delta_{\nu\lambda} \hat{\mathcal{X}}_\mu.
\]
If $\mu > \nu$, note that $\hat{\mathcal{X}}_{\mu\nu} = \hat{\mathcal{X}}_{\nu\mu}$.  The above formula can then be applied.

\newtext{
The following lemmas are used in the proof of proposition~\ref{prop: dirac-const}.  In both cases, let $\Delta$ be a distribution on a manifold $M$ and denote its annihilator by $\Delta^\circ \subset T^\ast M$.   We use the following notations for the exterior products: 
\[
	\mbox{$\bigwedge$}^k \Delta = \underbrace{\Delta \wedge \cdots \wedge \Delta}_{\text{$k$ times}} \quad 
	\text{and} \quad
\mbox{$\bigwedge$}^l (\Delta^\circ) = \underbrace{\Delta^\circ \wedge \cdots \wedge \Delta^\circ}_{\text{$l$ times}},	 
\]
for all $k, l$.  Note especially that $\mbox{$\bigwedge$}^l (\Delta^\circ)$ is not equal to $(\mbox{$\bigwedge$}^l \Delta)^\circ$, since the latter is given by 
\[
	(\mbox{$\bigwedge$}^l \Delta)^\circ 
	= \Delta^\circ \wedge \Lambda^{l - 1}(M).
\]

\begin{lemma}  \label{lemma:vectorfield}
	Consider an $l$-vector field $\mathcal{X}$ and let $k$ be a fixed integer, $k \ge l$.  If $\mathbf{i}_{\mathcal{X}} \alpha = 0$ for all $\alpha \in \bigwedge^k \Delta^\circ$, then 
	\begin{equation} \label{decompose}
		\mathcal{X} \in \Delta \wedge T^{l - 1} M.
	\end{equation}
\end{lemma}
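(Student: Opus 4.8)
The statement is tensorial in $\mathcal{X}$, so the plan is to reduce it to a pointwise claim in linear algebra and verify it fibrewise. Fix $m \in M$, write $V = T_m M$, let $D = \Delta(m) \subseteq V$ and let $D^\circ \subseteq V^\ast$ be its annihilator, of dimension $c = \dim V - \dim D$. Choose a complement $U$ so that $V = D \oplus U$; then $D^\circ$ is canonically identified with $U^\ast$, so that $\bigwedge^k \Delta^\circ$ becomes $\Lambda^k U^\ast$. Correspondingly $\Lambda^l V$ splits as $\bigoplus_{i+j=l} \Lambda^i D \wedge \Lambda^j U$, and $D \wedge \Lambda^{l-1} V$ is exactly the sum of the summands with $i \ge 1$; its complement is the single summand $\Lambda^l U$ (the $i = 0$ term). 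Writing $\mathcal{X} = \mathcal{X}' + \mathcal{X}''$ with $\mathcal{X}' \in D \wedge \Lambda^{l-1} V$ and $\mathcal{X}'' \in \Lambda^l U$, the conclusion \eqref{decompose} is equivalent to the single assertion $\mathcal{X}'' = 0$.

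The first key step is to observe that the hypothesis only constrains $\mathcal{X}''$. Choosing an adapted frame $e_1, \dots, e_N$ with $e_1, \dots, e_{N-c}$ spanning $D$ and $e_{N-c+1}, \dots, e_N$ spanning $U$, every $\alpha \in \bigwedge^k \Delta^\circ$ is a combination of monomials $e^I = e^{i_1} \wedge \cdots \wedge e^{i_k}$ with all indices in $\{N-c+1, \dots, N\}$. For a basis multivector $e_J$ with $|J| = l$, the left interior product $\mathbf{i}_{e_J} e^I$ vanishes unless $J \subseteq I$, which forces $J \subseteq \{N-c+1, \dots, N\}$, i.e. $e_J$ to be one of the monomials occurring in $\mathcal{X}''$. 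Hence $\mathbf{i}_{\mathcal{X}} \alpha = \mathbf{i}_{\mathcal{X}''} \alpha$ for all $\alpha \in \bigwedge^k \Delta^\circ$, and the hypothesis reduces to $\mathbf{i}_{\mathcal{X}''} \alpha = 0$ for all such $\alpha$.

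It then remains to prove that the contraction pairing $\Lambda^l U \times \Lambda^k U^\ast \to \Lambda^{k-l} U^\ast$ is nondegenerate in its first argument. The plan is to extract coefficients directly: given a monomial $e_J$ of $\mathcal{X}''$, pick a $k$-index $I$ with $J \subseteq I \subseteq \{N-c+1, \dots, N\}$ and contract $\mathcal{X}''$ against $e^I$; since the forms $e^{I \setminus J'}$, ranging over the $l$-subsets $J' \subseteq I$, are linearly independent, the vanishing of $\mathbf{i}_{\mathcal{X}''} e^I$ forces the coefficient of $e_J$ to vanish. Letting $J$ run over all $l$-subsets of $\{N-c+1, \dots, N\}$ yields $\mathcal{X}'' = 0$, as desired. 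Equivalently, one may reduce to the perfectly-dual case $k = l$, where $\Lambda^l U \times \Lambda^l U^\ast \to \mathbb{R}$ is the standard pairing, by peeling off a factor $\delta \in \Lambda^{k-l} U^\ast$ from each test form.

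The step I expect to be the main obstacle is precisely this nondegeneracy/coefficient-extraction: it requires that for each $l$-subset $J$ of the complement there exist a test form $e^I \in \bigwedge^k \Delta^\circ$ with $J \subseteq I$, which is possible exactly when $k \le c = \mathrm{rank}\,\Delta^\circ$, so that $\bigwedge^k \Delta^\circ$ is rich enough to detect every purely-complementary $l$-vector. This holds in the degree range $r + s \le n+2$ in which the lemma is invoked in Proposition~\ref{prop: dirac-const}; the antisymmetrization signs appearing in $\mathbf{i}_{e_J} e^I$ must be tracked with some care, but they are immaterial to the outcome, since only the vanishing of the coefficients is needed.
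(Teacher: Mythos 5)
Your proof is correct, and it takes a genuinely different route from the paper's. The paper argues by induction on the degree $l$: the case $l = 1$ is taken as immediate, and for decomposable $\mathcal{X} = X \wedge \mathcal{Y}$ one writes $0 = \mathbf{i}_{\mathcal{X}}\alpha = \mathbf{i}_{\mathcal{Y}}\left(\mathbf{i}_X \alpha\right)$, observes that $\mathbf{i}_X \alpha \in \bigwedge^{k-1}(\Delta^\circ)$, and applies the induction hypothesis to $\mathcal{Y}$, asserting that the general case then ``follows by linearity.'' Your pointwise, adapted-frame argument buys two things the induction does not. First, it handles non-decomposable multivectors directly, whereas the paper's linearity step is delicate: the hypothesis $\mathbf{i}_{\mathcal{X}}\alpha = 0$ for a sum of decomposable terms does not descend to the individual summands, so that step is a genuine gap in the paper's own argument. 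Second, the paper's induction feeds $\mathcal{Y}$ only forms of the type $\mathbf{i}_X\alpha$, and these need not exhaust $\bigwedge^{k-1}(\Delta^\circ)$ (if $X \in \Delta$ they all vanish), so a further case distinction is needed there; your coefficient extraction avoids this entirely. The cost of your approach is the choice of an adapted frame, but since both hypothesis and conclusion are pointwise conditions on a constant-rank distribution, nothing is lost.

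One correction to your last paragraph. You are right that the argument needs $\bigwedge^k(\Delta^\circ)$ to be rich enough, i.e.\ $k \le c = \mathrm{rank}\,\Delta^\circ$ (when $l > c$ the conclusion is automatic since $\Lambda^l U = 0$), and you are right that this is where the real content sits: the lemma as stated is false without it. For example, on $M = \mathbb{R}^3$ with $\Delta = \mathrm{span}\{\partial_1, \partial_2\}$, $k = 2$ and $l = 1$, the hypothesis is vacuous because $\bigwedge^2(\Delta^\circ) = \{0\}$, yet $\mathcal{X} = \partial_3 \notin \Delta$. The paper's proof needs this hypothesis just as much (its base case fails in the same example), so making it explicit is to your credit. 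However, your claim that the condition holds automatically in the degree range $r + s \le n+2$ of Proposition~\ref{prop: dirac-const} is not correct: there the lemma is invoked with $k = n + 2 - s$, which can be as large as $n+1$, while $\mathrm{rank}\,\Delta_M^\circ$ equals the number of constraint one-forms $\tilde{\varphi}^\alpha$, which may be much smaller. The richness requirement is a condition on the corank of the constraint distribution, not on the degrees; in the degenerate cases it is the lemma itself (and with it the proposition) that needs the extra hypothesis, not your proof.
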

\begin{proof}
The proof proceeds by induction on the degree $l$ of $\mathcal{X}$.  When $\mathcal{X}$ is an ordinary vector field, the hypothesis is trivially true.  For multivector fields $\mathcal{X}$ of higher degree, we focus first on the case where $\mathcal{X}$ can be written as $X \wedge \mathcal{Y}$ (the general case then follows by linearity).  We have that 
\[
	0 = \mathbf{i}_{\mathcal{X}} \alpha =  \mathbf{i}_{\mathcal{Y}} \left(\mathbf{i}_X \alpha \right),
\]
for all  $\alpha \in \bigwedge^k \Delta^\circ$.  Since $\mathbf{i}_X \alpha \in \bigwedge^{k - 1} \Delta^\circ$, this implies by induction that $\mathcal{Y} \in \Delta \wedge T^{l - 2} M$.  In turn, this implies that $\mathcal{X} = X \wedge \mathcal{Y}$ satisfies \eqref{decompose}.
\end{proof}

\begin{lemma} \label{lemma:form}
	Consider a $k$-form $\alpha$ and let $l$ be a fixed integer, $l \le k$.  If $\mathbf{i}_{\mathcal{X}} \alpha = 0$ for all $\mathcal{X} \in \Delta \wedge T^{l - 1} M$, then 
	\[
		\alpha \in \underbrace{\Delta^\circ \wedge \cdots \wedge \Delta^\circ}_{\text{$k$ times}}.
	\]
\end{lemma}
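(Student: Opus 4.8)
The plan is to reduce the statement to the special case $l = 1$ and then to prove that case with an adapted local coframe. The genuine content of the lemma lives at $l = 1$; the larger values of $l$ carry no extra information, because for each $l \le k$ the hypothesis turns out to be equivalent to the one for $l = 1$. This is the exact dual of the situation in Lemma~\ref{lemma:vectorfield}.

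First I would carry out the reduction. Fix a vector field $X \in \Delta$ and set $\beta := \mathbf{i}_X \alpha$, a $(k-1)$-form. For every $(l-1)$-multivector field $\mathcal{Y}$ the multivector $X \wedge \mathcal{Y}$ lies in $\Delta \wedge T^{l-1} M$, so the hypothesis gives $\mathbf{i}_{X \wedge \mathcal{Y}} \alpha = 0$, which up to sign is $\mathbf{i}_{\mathcal{Y}} \beta = 0$. Since $l - 1 \le k - 1 = \deg \beta$ and $\mathbf{i}_{\mathcal{Y}} \beta$ vanishes for every $(l-1)$-multivector $\mathcal{Y}$, I conclude $\beta = 0$: evaluating $\beta$ on any $k-1$ vectors $Z_1, \ldots, Z_{k-1}$ equals the contraction of $\beta$ with the decomposable $(l-1)$-vector $Z_1 \wedge \cdots \wedge Z_{l-1}$, evaluated on the remaining $Z_l, \ldots, Z_{k-1}$, which is zero. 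Hence $\mathbf{i}_X \alpha = 0$ for all $X \in \Delta$, and we are reduced to the case $l = 1$.

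For the case $l = 1$ I would argue locally. Choose a frame $X_1, \ldots, X_{\dim M}$ of $TM$ whose first $m$ members, with $m$ the rank of $\Delta$, span $\Delta$, and let $e^1, \ldots, e^{\dim M}$ be the dual coframe, so that $e^{m+1}, \ldots, e^{\dim M}$ span $\Delta^\circ$. Expanding $\alpha = \sum_{|I| = k} a_I\, e^I$ in the associated basis of $k$-forms, the condition $\mathbf{i}_{X_j} \alpha = 0$ for $j = 1, \ldots, m$ forces $a_I = 0$ whenever the multi-index $I$ meets $\{1, \ldots, m\}$, since the forms $e^{I \setminus \{j\}}$ are linearly independent. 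The surviving terms therefore involve only $e^{m+1}, \ldots, e^{\dim M}$, which shows $\alpha \in \bigwedge^k \Delta^\circ$ as claimed. Alternatively, one can run an induction on $k$ dual to the one used in Lemma~\ref{lemma:vectorfield}, peeling off one factor of $\Delta^\circ$ at a time.

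The only delicate points are bookkeeping ones: the sign in the identity $\mathbf{i}_{X \wedge \mathcal{Y}} \alpha = \pm\, \mathbf{i}_{\mathcal{Y}} \mathbf{i}_X \alpha$, and the elementary fact that a $p$-form annihilated by contraction with every $q$-multivector ($q \le p$) must vanish. Neither affects the argument, since we only ever use that these contractions are zero. I expect the main (still routine) obstacle to be making the reduction step watertight, that is, verifying that the hypothesis for general $l$ really does collapse to $\mathbf{i}_X \alpha = 0$ on $\Delta$; once that is in place, the adapted-frame computation is immediate.
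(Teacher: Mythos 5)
Your proof is correct, but it does not follow the paper's route. The paper dispatches this lemma in a single sentence: the proof ``proceeds similarly by induction on the degree $k$ of $\alpha$,'' i.e.\ the dual of the induction on multivector degree used for Lemma~\ref{lemma:vectorfield}, and no details are given. You instead argue in two steps: (i) a reduction showing that the hypothesis for any $l \le k$ collapses to the $l=1$ hypothesis $\mathbf{i}_X \alpha = 0$ for all $X \in \Delta$ --- this is sound, since with the paper's convention $(\mathcal{X} \intprod \alpha)(\mathcal{X}') = \alpha(\mathcal{X} \wedge \mathcal{X}')$ one has $\mathbf{i}_{X \wedge \mathcal{Y}}\alpha = \mathbf{i}_{\mathcal{Y}}\mathbf{i}_X \alpha$ with no sign at all, and your vanishing criterion for the $(k-1)$-form $\beta = \mathbf{i}_X\alpha$ is the elementary fact you state; and (ii) a pointwise adapted-coframe computation identifying $\{\alpha \mid \mathbf{i}_X \alpha = 0 \text{ for all } X \in \Delta\}$ with $\bigwedge^k(\Delta^\circ)$. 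What your route buys: it is fully explicit where the paper is not, and it isolates the genuinely useful observation that the content of the lemma lives entirely at $l=1$, the higher $l$ carrying no extra information --- a point the paper leaves implicit. What it costs: the coframe step uses that $\Delta$ has locally constant rank so that an adapted frame $X_1,\ldots,X_m$ of $\Delta$ and dual coframe exist; this is harmless here (the nonholonomic $\Delta_M$ of Proposition~\ref{prop: dirac-const} is cut out by $k$ independent one-forms), but it is an assumption the frame-free induction would not need, so it is worth flagging. Your closing alternative --- an induction on $k$ peeling off one factor of $\Delta^\circ$ at a time --- is precisely the argument the paper intends, so in effect you have supplied both the paper's sketched proof and a more elementary self-contained one.
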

\begin{proof}
The proof proceeds similarly by induction on the degree $k$ of $\alpha$.
\end{proof}
}

\subsection{Proof of \eqref{contract}}
\label{sec:comp}

The computation of equation \eqref{contract} is somewhat involved and depends on a number of coordinate identities, listed here.  First, recall that $\mathcal{X}$ is an $(n + 1)$-multivector field with local expression \eqref{multvf}, and that the multi-symplectic form $\Omega$ is locally given by equation \eqref{msform}.

Recall the volume form $\eta$ on $X$ is locally expressed by $d^{n + 1} x$ and denote
\[
	d^n x_\mu := \frac{\partial}{\partial x^\mu} \intprod d^{n+1}x, \quad
	d^{n-1}x_{\mu\nu} := \frac{\partial}{\partial x^\nu} \intprod d^n x_\mu, \quad \ldots
\]	
The following contractions will be useful:
\begin{equation} \label{contr1}
	\hat{\mathcal{X}}_\mu \intprod d^n x_\nu  = (-1)^{\nu + n} \delta_{\mu\nu}
\end{equation}
and, for $\mu < \nu$,
\begin{equation} \label{contr2}
	\hat{\mathcal{X}}_{\mu\nu} \intprod d^n x_\lambda = 
		(-1)^{n + \mu + \nu} ( \delta_{\nu\lambda} dx^\mu + \delta_{\mu\lambda} dx^\nu).
\end{equation}

Now return to the calculation of equation \eqref{contract} and the contraction of $\mathcal{X}$ with $\Omega_{M}$ is now given by 
\[
	\mathcal{X} \intprod \Omega_{M} = 
		\mathcal{X} \intprod (dy^A \wedge dp_A^\mu \wedge d^n x_\mu) -
		\mathcal{X} \intprod (dp \wedge d^{n+1} x), 
\]
and the two terms on the right-hand side will be calculated separately.

We begin with the first term. Using lemma~\ref{lemma:wedge}, we have 
\[
\mathcal{X} \intprod (dy^A \wedge dp_A^\mu \wedge d^n x_\mu)
= (\mathcal{X} \prodint dy^A) \intprod (dp_A^\mu \wedge d^n x_\mu) + 
(-1)^{n+1} dy^A \wedge ( \mathcal{X} \intprod (dp_A^\mu \wedge d^n x_\mu)).
\]
Both terms can be calculated using lemma~\ref{lemma:decomp} and using \eqref{contr1} and \eqref{contr2}.  After some rearrangements, the result is that 
\[
	\mathcal{X} \intprod (dy^A \wedge dp_A^\mu \wedge d^n x_\mu) = 
	(-1)^{n+2} \left[ (C^A_\mu C_{A\lambda}^\lambda - C^A_\lambda C^\lambda_{A\mu}) dx^\mu + C^A_\mu dp_A^\mu \right].
\]

For the second term, we use the same techniques to conclude that 
\[
	\mathcal{X} \intprod (dp \wedge d^{n+1} x) = (-1)^{n+1} \left[ dp - C_\mu dx^\mu \right].
\]
Putting the results for both terms together, we obtain \eqref{contract}.

\end{document}